\documentclass[letterpaper, 10 pt, conference]{ieeeconf} 
\IEEEoverridecommandlockouts

\usepackage{enumitem}
\setlist{topsep=0pt, leftmargin=*}
\usepackage{algorithm}
\usepackage{algorithmicx}

\usepackage[noend]{algpseudocode}
\usepackage{amsmath, amsfonts, bm, amssymb, tabularx}
\usepackage{latexsym, mathtools}
\usepackage{amsthm}
\usepackage{ifthen}
\usepackage{hyperref}\hypersetup{colorlinks=true, unicode=true, linkcolor=[rgb]{0.10,0.05,0.67}, citecolor=[rgb]{0.10,0.05,0.67}, filecolor=[rgb]{0.10,0.05,0.67}, urlcolor=[rgb]{0.10,0.05,0.67}}
\usepackage{Shorthands}
\usepackage{cleveref}
\newcommand{\citep}[1]{\cite{#1}}
\usepackage{stackengine}

\usepackage[numbers, sort&compress]{natbib}
\allowdisplaybreaks

\title{\LARGE \bf
Statistical Efficiency of Single- and Multi-step Models for Forecasting and Control
}
\usepackage{subcaption}
\author{
Anne Somalwar$^{1}$, Bruce D. Lee, George J. Pappas, Nikolai Matni% <-this % stops a space
\thanks{Anne Somalwar, George J. Pappas, and Nikolai Matni are with the University of Pennsylvania. Emails : \tt\small
\{somalwar, nmatni, pappasg\}@seas.upenn.edu}
\thanks{Bruce D. Lee is with the ETH AI Center. Email: \tt\small bruce.lee@ai.ethz.ch}}

\begin{document}

\twocolumn

\maketitle
\thispagestyle{empty}
\pagestyle{empty}
%%%%%%%%%%%%%%%%%%%%%%%%%%%%%%%%%%%%%%%%%%%%%%%%%%%%%%%%%%%%%%%%%%%%%%%%%%%%%%%%
\begin{abstract}
Compounding error, where small prediction mistakes accumulate over time, presents a major challenge in learning-based control. A common remedy is to train multi-step predictors directly instead of rolling out single-step models. However, it is unclear when the benefits of multi-step predictors outweigh the difficulty of learning a more complex model. We provide the first quantitative analysis of this trade-off for linear dynamical systems. We study three predictor classes: (i) single step models, (ii) multi-step models, and (iii) single step models trained with multi-step losses. We show that when the model class is well-specified and accurately captures the system dynamics, single-step models achieve the lowest asymptotic prediction error. On the other hand, when the model class is misspecified due to partial observability, direct multi-step predictors can significantly reduce bias and improve accuracy. We provide theoretical and empirical evidence that these trade-offs persist when predictors are used in closed-loop control. 
\end{abstract}
\section*{Introduction}
\label{sec:introduction}

A typical approach to time series forecasting is to fit a one-step ahead prediction model and apply it recursively to obtain predictions over multiple time steps. In doing so, small errors may compound over time, leading to poor long-horizon prediction. This issue can make the application of such single-step models to long-term planning and controller design challenging.

By directly training multi-step models to predict over longer horizons, the issue of compounding error can be mitigated. The main drawback of doing so is that the number of parameters for a direct multi-step predictor scales with the prediction horizon, thus potentially requiring more data than a single-step predictor to achieve a desired prediction performance. While this tradeoff between prediction horizon accuracy and data requirements is broadly known to exist, 
 it is primarily studied from an empirical perspective \cite{lambert2021learning, lambert2022investigating}. We therefore lack principled guidance for exactly when direct multi-step prediction should be preferred over autoregressive rollout of single-step models. Motivated by this challenge, we provide a rigorous comparison of the sample efficiency of learning multi-step predictors with that of learning single-step predictors in the %simplified yet representative
 setting of a linear dynamical system.

\subsection{Related Work}

\paragraph{Multi-step Identification}
The goal of system identification is to use data to learn a model that can be used for forecasting or control \cite{ljung1998system}. 
To this end, one typically wants to select a model from the hypothesis class that minimizes the simulation error, i.e., the cumulative prediction error over all future time steps. Due to the computational challenge of doing so, it is much more common to instead learn a model which minimizes the single-step prediction error and apply it autoregressively \cite{ljung1998system, farina2008some}. However, such approaches tend to generalize poorly if the underlying data generating process does not belong to the hypothesis class. 
%This has motivated the application of approaches from imitation learning to perform data augmentation that refines a model minimizing the single-step prediction error, leading to improved empirical performance 
This has motivated the application of algorithms such as Data as Demonstrator (DaD) which use approaches from imitation learning to train a predictor to self-correct its prediction error at each step, leading to improved empirical performance \cite{venkatraman2015improving, venkatraman2017improved}. 

Direct learning of prediction models for each time-step in the prediction horizon partially bypasses the issue of compounding error, and has proven successful for both model predictive control \cite{terzi2018learning, balim2024stochastic, kohler2022state} and value approximation in MDPs \cite{asadi2019combating}. Empirical studies of single-step and multi-step dynamics models learned with various neural network architectures have also been conducted. Namely, \cite{lambert2022investigating} investigates the performance of recursive application of single-step models parameterized by neural networks in a handful of examples and characterize circumstances which such models may worsen the effects of compounding error. In \cite{chandra2021evaluation}, the authors give a comparison of various deep learning architectures for predicting multiple steps of a time series and show the efficacy of  bidirectional and encoder-decoder LSTM networks.

These empirical studies underscore the importance of careful consideration when designing a model for multi-step prediction. Our work studies this question in stylized settings, allowing us to clearly demonstrate the potential drawbacks and benefits of direct multi-step prediction as compared to more traditional single-step approaches. 

\paragraph{Learning-Enabled Control} 

While the issue of compounding error has long been studied in system identification and control, it has resurfaced as a prominent issue in the learning community, exacerbated by the use of neural networks as function approximators. In model-based reinforcement learning, it has been observed that synthesized controllers may exploit compounding errors of the learned model \cite{chua2018deep,levine2020offline}, motivating numerous heuristics for accounting for the model error during policy synthesis \cite{janner2019trust, yu2020mopo, yu2021combo, kidambi2020morel}. In \cite{lambert2021learning}, the authors instead propose learning a direct multi-step predictor parametrized by a low dimensional decision variable which may be optimized online. 

The issue of a mismatch between the hypothesis class and the underlying data generating process also poses a challenge for behavior cloning. For example, incorrectly assuming that the demonstrator is Markovian can lead to a policy that deviates substantially from the demonstrator \cite{chi2023diffusion, block2023provable}. This can be remedied in part by replacing the standard behavior cloning objective with an objective that predicts the expert actions for multiple timesteps to maintain temporal consistency, resulting in so-called \emph{action-chunking} based approaches~\cite{block2023provable, zhao2023learning}. We draw inspiration from these studies, and consider instances of linear systems with either Markovian or non-Markovian observations to compare direct multi-step prediction and autoregressive evaluation of single-step predictors.

\subsection{Contributions}

Our prior work \cite{CDCPaper} provided a quantitative comparison of the multi-step prediction error incurred by a directly learned multi-step predictor and autoregressive evaluation of a single-step predictor. 
Here, we extend this work by: 
(i) considering a third class of predictors, specifically a single-step predictor trained by minimizing a multi-step loss, 
(ii) analyzing the resulting closed-loop control performance, 
and (iii) providing complete proofs.
In particular:
\begin{itemize}[noitemsep,nolistsep]
    \item We provide an asymptotic characterization of the multi-step prediction error for the three methods in the setting of a fully observed dynamical system. Our results show that for stable systems with a small spectral radius, the prediction error of autoregressive evaluation for a single-step predictor decays significantly faster with increasing data than that of the other methods.
    \item We characterize the prediction error for the three methods applied to a partially observed dynamical system that is incorrectly assumed to be fully observed by the hypothesis class, thus addressing the issue of trajectory prediction under misspecification due to an unjustified Markovian assumption. These results demonstrate that multi-step predictors may enjoy significantly lower bias in the face of model misspecification.
    \item We analyze predictor performance in a closed loop control setting. Our results show that, in the case of a fully observed dynamical system (well-specified setting), autoregressive evaluation of a single-step predictor fitted with a single step loss often yields lower control cost than autoregressive evaluation of a single step predictor fitted with a multistep loss.
    \item We conduct numerical experiments that exemplify the above results. Code to reproduce these experiments is available. \footnote{Code to reproduce experiments: https://github.com/annesomalwar/Multistep-Prediction.git}
\end{itemize}

Our results, while limited to stylized settings, capture key properties of real systems---such as partial observability and misspecification---and thus provide useful guidance to practitioners navigating the complex design space of data-driven multi-step predictors.\\ 

\noindent
\textbf{Notation}: $\mathcal{N}(\mu, \sigma^2)$ denotes a normal distribution with mean $\mu$ and variance $\sigma^2$, $\probconv$ denotes convergence in probability, $\rho(\cdot)$ denotes the spectral radius of a matrix, $\norm{\cdot}$ denotes the vector Euclidean norm, $\norm{\cdot}_F$ denotes the matrix Frobenius norm, $\norm{\cdot}_\star$ denotes the matrix nuclear norm, $\VEC(\cdot)$ denotes the vectorization of a matrix, and $\otimes$ denotes the Kronecker product. For a random sequence $X_n$, we use $X_n = {\cal O}_p\paren{a_n}$ to denote that the set of values $X_n/a_n$ is stochastically bounded. We use the notation $X_{a:b}$ to denote the sequence $X_n$ for $n=a$ to $n=b$. For functions $f,g$ of $x$, we use the notation $f(x) = o(g(x))$ to denote that $\lim_{x \rightarrow \infty} \frac{f(x)}{g(x)} = 0$.

% \noindent
% \sloppy
% Extended manuscript with full proofs can be found at: \\
% \noindent
% \url{https://drive.google.com/file/d/1mYoQB53ne_NRed5_atEqnji77RpfUYAa/view?usp=sharing}
% \url{https://arxiv.org/abs/2504.01766}
\section{Problem Formulation}

Consider the linear time invariant dynamical system 
\begin{equation}
\label{eq: dynamics}
\begin{aligned}
    x_{t+1} &= Ax_t + Bu_t + B_w w_t &&\quad t \in \mathbb{Z}^+\\
    y_t &= C x_t + D_v v_t,  &&\quad t \in \mathbb{Z}^+
\end{aligned}
\end{equation}
with state $x_t \in \R^{\dx}$, input $u_t \in \R^{\du}$, observation $y_t\in R^{\dy}$, process noise $w_t \overset{\iid}{\sim} \calN(0,I_{d_x})$, sensor noise $v_t  \overset{\iid}{\sim} \calN(0,I_{d_y})$, and initial condition $x_0 = 0$. We assume that $(A,C)$ is observable, that $(A, \bmat{B, B_w})$ is controllable, $\rho(A) < 1$\footnote{The assumption $\rho(A)<1$ guarantees stationarity and ergodicity of the process $\{x_t\}$.}, and that the control inputs are selected randomly as $u_t\overset{\iid}{\sim}\calN(0,I_{d_u})$.  %In particular, the distinction between the control input and noise in this setting is that the control input is known. 

We assume that the dynamics \eqref{eq: dynamics} are unknown, and our goal is to learn a predictor that forecasts a horizon $H$ of future observations using past observations. To this end, we suppose that we are given a dataset $\calD_N = \curly{(y_t, u_t)}_{t=1}^N$ %{\color{red} since we're considering both one step and multistep predictors here should the data just be $(y_t,u_t)$? or is this standard?} 
collected from a training rollout of \eqref{eq: dynamics} which will be used to determine a function $\hat f_H$ belonging to a hypothesis class $\calF_H$. This function will be used to predict $y_{t+1:t+H}$ given $y_{1:t}$ and $u_{1:t+H-1}$. 

The quality of the learned function will be measured by the loss
\begin{align}
    \label{eq: loss}
    L(\hat f_H) \triangleq  \bar \E \norm{y_{t+1:t+H} - \hat f_H(y_{1:t}, u_{1:t+H-1})}^2,
\end{align}
where the operator $\bar \E$ is defined as
\begin{align*}
    \bar \E[f(t)] \triangleq \lim_{T\to\infty} \mathbf{E} \frac{1}{T}\sum_{t=0}^T f(t), 
\end{align*}
and the expectation is taken over an evaluation rollout of system \eqref{eq: dynamics} that is independent of the dataset $\calD_N$. By ergodicity of the process \eqref{eq: dynamics}, this is equivalent to taking an expectation under the steady state distribution for the system. 

To provide rigorous understanding of situations where multi-step prediction does or does not help, we consider a simplified setting in which the hypothesis class consists of static linear predictors, i.e., a function $f_H \in \calF_H$ given by \begin{equation}\label{eq:gen-pred}
f_H(y_{1:t}, u_{1:t+H-1}) = G \bmat{y_t \\ u_{t:t+H-1}}, 
\end{equation}
for a matrix $G \in S \subseteq \R^{H \dy \times (\dy + H\du)}$.  Here the subspace $S$ encodes whether we are fitting a multi-step or single-step model: we provide explicit parameterizations for these model-classes in the next subsections. Our restriction to static linear predictors of the form~\eqref{eq:gen-pred} assumes that the observation sequence is Markovian, i.e., that a history of observations is unnecessary to predict the future trajectory.  In the sequel, we slightly abuse notation and denote the loss~\eqref{eq: loss} incurred by a predictor~\eqref{eq:gen-pred} defined by matrix $\hat G$ by $L(\hat G).$ 

We consider two settings: one where the Markovian assumption is justified ($C=I$ and $D_v = 0$), resulting in a well-specified problem, and one where it is not justified ($C \neq I$ or $D_v D_v^\top \succ 0$), resulting in a misspecified problem. 
In these two settings, we compare the $H$ step prediction error~\eqref{eq: loss} incurred by a learned single-step model rolled out for $H$ timesteps to that incurred by a directly learned $H$-step model. 

Throughout, we use ``misspecification`` specifically to refer to violations of the Markov assumption induced by partial observability (and observation noise), while ``well-specified`` refers to the fully observed state setting.

\subsection{Single-step Predictors}
\label{s:singlestep}
The single-step approach first solves
\begin{align}
    \label{eq: single-step LS}
    \bmat{\hat G_y & \hat G_u} = \argmin_{\substack{G_y \in \R^{\dy \times \dy} \\ G_u \in \R^{\dy \times \du}} } \sum_{t=1}^{N-1} \norm{y_{t+1} - \bmat{G_y & G_u} \bmat{y_t \\ u_t}}^2.
\end{align}

Using this model, one can predict $y_{t+1:t+H}$ by rolling out $\bmat{\hat G_y & \hat G_u}$ autoregressively:
\begin{align}
    \label{eq: single-step rollout}
    \hat y_{t+1} =&  \bmat{\hat G_y & \hat G_u} \bmat{y_t \notag \\ u_t} \\
    \hat y_{t+2} =& \bmat{\hat G_y & \hat G_u} \bmat{\hat y_{t+1} \\ u_{t+1}} \notag \\ 
    &\vdots \notag \\ 
    \hat y_{t+H} =& \bmat{\hat G_y & \hat G_u} \bmat{\hat y_{t+H-1} \\ u_{t+H-1}}. 
\end{align}
The resulting $H$-step predictor can be composed to form a direct mapping from the data to the predicted trajectory as 
\begin{align}
    \label{eq: single-step rolled out G}
    \hat G^{SS}_N = \bmat{\hat G_y & \hat G_u & 0 & \dots & 0 \\
                   \hat G_y^2 & \hat G_y \hat G_u & \hat G_u   &\dots & 0 \\
                   \vdots  &&& \ddots \\
                   \hat G_y^{H} &   \hat G_y^{H-1}\hat G_u &  \hat G_y^{H-2}\hat G_u &\dots & \hat G_u \\}.
\end{align}
%The subscript on $\hat G$ emphasizes the dependence on the amount of data $N$, which we drop going forward for notational clarity.% in subsequent sections and refer to $\hat G_N$ as $\hat G$. 
As past predictions become part of the regressor for future predictions, this approach often suffers from compounding error. 

\subsection{Multi-step Predictors}
\label{s: multi-step}
The issue of compounding error from autoregressive roll-out of a single-step model motivates direct multi-step approaches which directly minimize the $H$ step prediction error:
\begin{align}
    \label{eq: multistep LS}
    \hat G^{MS}_N = \argmin_{G \in S} \sum_{t=1}^{N-H} \norm{y_{t+1:t+H} - G \bmat{y_t \\ u_{t:t+H-1}}}^2,
\end{align}
for $S \subseteq \R^{H\dy \times (\dy + H\du)}$. %(Once again, we will omit the subscript and refer to $\hat G_N$ as $\hat G$ in subsequent sections.  
We consider the function class which fits $H$ distinct predictors, one for each step in the prediction horizon. This amounts to setting $S = \R^{H \dy \times (\dy + H \du)}$.\footnote{One could impose the causality structure, i.e. that $S$ has a triangular structure. We refrain from doing so for simplicity, and due to the fact that future inputs are independent of the past.}

There is a tradeoff induced by fitting multi-step predictors rather than single-step predictors. In particular, the single-step predictor is subject to compounding error, while the complexity of the above identification problem increases for longer horizons. 

\subsection{Intermediate Formulations}
\label{s: intermediate}

\stackMath
Rather than fitting independent predictors for every timestep, one can instead formulate a hypothesis class for multi-step prediction with lower complexity. In particular, one could impose additional structure on $S$. For example, let 
\begin{align}
    \label{eq: multistep LS}
    \hat G^{I}_N \in \argmin_{G \in S} \sum_{t=1}^{N-H} \norm{y_{t+1:t+H} - G \bmat{y_t \\ u_{t:t+H-1}}}^2
\end{align}
where
\begin{align}\label{eq: multistep loss} S = \curly{\bmat{ G_y &  \!\!\!\!\!G_u & 0 & \!\!\!\!\!\dots & \!\!\!\!\!0 \\
                    G_y^2 & \!\!\!\!\!G_y G_u & \!\!\!\!\!G_u   &\!\!\!\!\!\dots & \!\!\!\!\!0 \\
                   \vdots  &&& \ddots \\
                   G_y^{H} & \!\!\!\!\!  G_y^{H-1} G_u & \!\!\!\!\! G_y^{H-2} G_u &\!\!\!\!\!\dots & \!\!\!\!\!G_u \\} \Bigg \vert \raisebox{-8pt}{\stackon[1pt]{G_u \in \R^{\dy \times \du}}{G_y \in \R^{\dy \times \dy}}}}. \end{align} 
This consists of functions which take the form of a single-step predictor that is applied auto-regressively. In contrast to the single-step approach of \Cref{s:singlestep}, solving \eqref{eq: multistep LS} with this choice of $S$ consists of a multi-step loss function for a class of single-step predictors, a common approach to mitigate the compounding error issue without increasing the number of parameters that must be learned \citep{farina2008some}.  

We compare single-step, multi-step, and the intermediate predictors described above in the two aforementioned settings: a system with Markovian observations, and a system with non-Markovian observations. Due to the Markovian assumption for the identification problem, these cases serve as instances where the identification problem is well-specified and misspecified, respectively.

%and \eqref{eq: shared representation} 
\section{Well-Specified Setting}
\label{sec: well-specified}

In this section, we study the well-specified setting in which the Markovian assumption is valid. In particular, we restrict system \eqref{eq: dynamics} to be a fully observed system by assuming that $C = I$ and $D_v = 0$ so $y_t = x_t$ for all $t$.  

To compare the three approaches in this setting, we first observe that the predictors $\hat f_H$ are defined in terms of a linear map $\hat G$ applied to the vector $\bmat{x_t^\top & u_{t:t+H-1}^\top}^\top$. Therefore the loss \eqref{eq: loss} may be written
\begin{align}
    \label{eq: loss expanded}
    L(\hat G) = \bar \E \norm{x_{t+1:t+H} - \hat G \bmat{x_{t} \\ u_{t:t+H-1}}}^2. 
\end{align}
Rolling out the dynamics, we find that 
\begin{align*}
    x_{t+1:t+H} = G^\star  \bmat{x_t \\ u_{t:t+H-1}} + \Gamma_w w_{t:t+H-1},
\end{align*}
where 
\begin{align*}
    %\Gamma_X &= \bmat{A \\ A^2 \\ \vdots \\ A^H}, \Gamma_U = \bmat{B \\  AB & B \\ \vdots && \ddots \\ A^{H-1}B && \dots & B} \\
    G^\star &=  \bmat{A & B \\ A^2&  AB & B \\ &\vdots && \ddots \\ A^H & A^{H-1}B && \dots & B},
\end{align*}
and
\begin{align*}
    \Gamma_w &= \bmat{B_w \\  AB_w & B_w \\ \vdots && \ddots \\ A^{H-1} B_w & \dots && B_w}.
\end{align*} 

Then expanding $x_{t+1:t+H}$ in equation \eqref{eq: loss expanded}, and using the independence of $w_{t:t+H-1}$ from $x_t$ and $u_{t:t+H-1}$, we conclude that
\begin{align*}
    L(\hat G) &= \bar\E \norm{(\hat G - G^\star) \bmat{x_t \\ u_{t:t+H-1}}}^2+\bar \E \norm{\Gamma_w w_{t:t+H-1}}^2 \\
    & \\
    &= \norm{(\hat G - G^\star) \Sigma_z^{1/2}}_F^2 + \norm{\Gamma_w}_F^2,
\end{align*}
where $\Sigma_z = \bar \E z_t z_t^\top $ is the stationary covariance for the regressor $z_t \triangleq \bmat{x_t \\ u_{t:t+H-1}}$. \footnote{By the fact that $w_t, u_t$ are i.i.d. standard normal and $(A, \bmat{B, B_w})$ is controllable, persistence of excitation holds, i.e. $\Sigma_z$ is positive definite. }
Consequently, the discrepancy between the single-step and multi-step predictors is contained in the term $\norm{(\hat G - G^\star) \Sigma_z^{1/2}}_F^2$. We study the behavior of this term asymptotically, where $\hat G$, or equivalently $\hat G_N$, is an operator learned on the dataset of size $N$.\footnote{We sometimes omit the subscript $N$ on $\hat G_N$ to ease notational burden.} In particular, we examine
\begin{align*}
    \lim_{N\to\infty} N \E\brac{\norm{(\hat G_N - G^\star) \Sigma_z^{1/2}}_F^2},
\end{align*} 
for the predictors $\hat G^{SS}_N$, $\hat G^{MS}_N$, and $\hat G^{I}_N$, where the expectation is taken over the dataset used to fit $\hat G_N$. 

The reducible error of the multi-step predictor is characterized by the following proposition. 
%We provide non-asymptotic bounds in the appendix; however, these bounds are not sufficiently sharp to distinguish between the single-step predictor and the multi-step predictor. 

\begin{proposition}[Proposition~III.1 of \cite{CDCPaper}]
    \label{prop: well specified multistep}
    The reducible asymptotic error of the multi-step predictor $\hat G^{MS}_N$ is given by 
    \begin{align*}
        &\lim_{N\to\infty} N \mathbf{E} \brac{\norm{(\hat G^{MS}_N - G^\star) \Sigma_z^{1/2}}_F^2}\\
        &=\trace\paren{\Gamma_w ((M_{MS} + H\du I_H) \otimes I_{\dx}) \Gamma_w^\top},
    \end{align*}
    where $M_{MS} \in \R^{H \times H }$ is the matrix with entry $(i,j)$ given by
    % \begin{align*}
    $
        M_{MS}^{ij} = \trace(A^{\abs{i-j}})$.
        
\end{proposition}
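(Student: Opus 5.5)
The plan is to write the ordinary least squares error in closed form and compute its asymptotic second moment directly, using Gaussianity (Isserlis' theorem) and the shift structure of the regressors and the noise. Write $z_t = \bmat{x_t^\top & u_{t:t+H-1}^\top}^\top$ and $e_t = \Gamma_w w_{t:t+H-1}$, so that $x_{t+1:t+H} = G^\star z_t + e_t$. The unconstrained least squares solution \eqref{eq: multistep LS} then gives
\begin{align*}
\hat G^{MS}_N - G^\star = \Big(\sum_t e_t z_t^\top\Big)\Big(\sum_t z_t z_t^\top\Big)^{-1}.
\end{align*}
By ergodicity, $\frac1N\sum_t z_t z_t^\top \probconv \Sigma_z \succ 0$. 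So, to leading order, $\sqrt N(\hat G^{MS}_N - G^\star)\Sigma_z^{1/2} \approx \frac{1}{\sqrt N}\sum_t e_t z_t^\top \Sigma_z^{-1/2}$. The target limit should therefore equal
\begin{align*}
\lim_{N\to\infty}\frac1N\sum_{t,s}\mathbf{E}\brac{z_s^\top\Sigma_z^{-1}z_t\, e_t^\top e_s}.
\end{align*}

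\textbf{Isserlis expansion.} All variables are jointly Gaussian and zero mean, so Isserlis' theorem splits each summand into three pairings:
\begin{align*}
\mathbf{E}[z_s^\top\Sigma_z^{-1}z_t]\,\mathbf{E}[e_t^\top e_s]
\end{align*}
plus two cross terms, each involving a product of $\mathbf{E}[z_s e_t^\top]$ and $\mathbf{E}[z_t e_s^\top]$ (and the analogous pair with indices swapped). I claim the cross terms vanish.
\begin{itemize}
\item $z_t$ depends only on $w_{0:t-1}$ and on inputs, while $e_s$ depends on $w_{s:s+H-1}$. Hence $\mathbf{E}[z_t e_s^\top]=0$ whenever $s\ge t$.
\item Symmetrically, $\mathbf{E}[z_s e_t^\top]=0$ whenever $t\ge s$.
\end{itemize}
The two cross covariances are thus never simultaneously nonzero, and only the first pairing survives.

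\textbf{Computing the surviving term.} Stationarity reduces it to a sum over lags $k=s-t$. Because future inputs are independent of $x_t$, $\Sigma_z=\mathrm{blkdiag}(\Sigma_x, I_{H\du})$.
\begin{itemize}
\item For the state block, $\mathbf{E}[x_{t+k}x_t^\top]=A^k\Sigma_x$ for $k\ge0$, so its contribution to $\mathbf{E}[z_{t+k}^\top\Sigma_z^{-1}z_t]$ is $\trace(\Sigma_x^{-1}A^{k}\Sigma_x)=\trace(A^{|k|})$.
\item For the input block, $\mathbf{E}[u_{t+k:t+k+H-1}u_{t:t+H-1}^\top]$ is a block shift matrix. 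Its trace is $H\du$ at $k=0$ and $0$ otherwise.
\item The input components of $x_{t+k}$ do not enter, because $\Sigma_z^{-1}$ is block diagonal, so only diagonal blocks of the cross covariance contribute.
\end{itemize}
For the noise, $\mathbf{E}[e_t^\top e_{t+k}]=\trace(\Gamma_w (J_k\otimes I_{\dx})\Gamma_w^\top)$, where $J_k\in\R^{H\times H}$ is the shift matrix with ones on the $k$-th off-diagonal. This term vanishes for $|k|\ge H$, so only finitely many lags contribute and the normalized double sum converges. Summing over $|k|<H$ gives
\begin{align*}
\trace\Big(\Gamma_w\Big(\Big(\textstyle\sum_{|k|<H}\trace(A^{|k|})J_k + H\du I_H\Big)\otimes I_{\dx}\Big)\Gamma_w^\top\Big).
\end{align*}
The matrix $\sum_{|k|<H}\trace(A^{|k|})J_k$ is exactly the Toeplitz matrix $M_{MS}$, which yields the claimed formula.

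\textbf{Main obstacle.} The delicate step is justifying that the limit of the \emph{expectation} of the squared error equals the computation with $\Sigma_z$ in place of the empirical covariance. Distributional convergence follows from ergodicity, a martingale/mixing CLT and Slutsky. Convergence of second moments, however, also requires uniform integrability of $N\norm{(\hat G_N-G^\star)\Sigma_z^{1/2}}_F^2$, which in turn needs control of $\mathbf{E}\norm{(\frac1N\sum z_tz_t^\top)^{-1}}^p$. I would first try a small-ball/anti-concentration bound for the Gaussian empirical covariance, which gives finite inverse moments for $N$ large. Combined with a decomposition $\hat\Sigma^{-1}=\Sigma_z^{-1}+(\hat\Sigma^{-1}-\Sigma_z^{-1})$ and Cauchy–Schwarz, this should show the remainder is $o(1/N)$.
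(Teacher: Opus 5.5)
Your proposal is correct and follows essentially the same route as the paper. The shared steps are the normal equations, replacing the empirical covariance by $\Sigma_z$ (the paper cites Slutsky, Birkhoff--Khinchin and Vitali), expanding the Frobenius norm into a double sum over time indices, and evaluating each lag to get $d_x + H d_u$ on the diagonal and $\trace(A^{|m|})$ times a shift matrix off it. Where you differ, you are more careful: your Isserlis argument properly justifies why the lagged fourth moments factor even though $x_{t+k}$ is correlated with $w_{t:t+H-1}$, which the paper's lemma glosses over. Your uniform-integrability discussion also addresses a step the paper simply asserts by invoking Vitali's theorem.
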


\begin{proof}
By the normal equations for the least-squares estimator,
\begin{align*}
\hat G_N^{\mathrm{MS}} - G^\star
=
\sum_{t=1}^{N-H+1}
\Gamma_w w_{t:t+H-1} z_t^\top
\left(\sum_{t=1}^{N-H+1} z_t z_t^\top\right)^{-1}.
\end{align*}
From a combination of Slutsky’s theorem, Birkhoff-Khinchin theorem, and Vitali's  convergence theorem, 
\begin{align*}
&\lim_{N\to\infty}
N\,\E\!\left\|(\hat G_N^{\mathrm{MS}}-G^\star)\Sigma_z^{1/2}\right\|_F^2 \\
&=
\lim_{N\to\infty}
\frac{1}{N}\,\E\Biggl\|
\sum_{t=1}^{N-H+1}
\Gamma_w w_{t:t+H-1} z_t^\top \Sigma_z^{-1/2}
\Biggr\|_F^2 .
\end{align*}
Expanding the Frobenius norm results in a double sum over time indices. The evaluation of each term, accounting for temporal dependence, is given in \Cref{lemma: 1 prop2.1}. Summing over all indices and combining terms yields the characterization stated in the proposition. 
\end{proof}

 The above result shows that the error decays asymptotically at a rate of $1/N$. The scaling is characterized by the trace expression, which represents the asymptotic covariance of the estimation error; importantly, it grows with the horizon $H$ (note the $(M_{MS} + H d_u I_H)$ term). We will contrast this with the error of the single-step predictor,  characterized below. 

% In particular, the above result shows that the error decays asymptotically at a rate of $1/N$, with the noise level defined in terms of the matrix $\Gamma_w$ a future horizon of noise to the states, the input dimension $\du$, the horizon $H$, the transition matrix $A$, and the noise covariance $B_w B_w^\top$. 

\begin{proposition}[Proposition~III.2 of \cite{CDCPaper}]
    \label{prop: single-step well specified}
    The asymptotic error of the single-step predictor $\hat G^{SS}_N$ is given by 
    \begin{align*}
        &\lim_{N\to\infty} N \mathbf{E} \brac{\norm{(\hat G^{SS}_N - G^\star) \Sigma_z^{1/2}}_F^2} \\
        % &= \sum_{i,j=1}^H \trace ((e_j e_i^T \otimes \Sigma_{Z,1}^{-1})F \Sigma_zF^T)\trace(\Gamma_w(e_je_i^T \otimes I_{\dx})\Gamma_w^T).
        &=\trace(\Gamma_w ((M_{SS} + \du I_H) \otimes I_{\dx}) \Gamma_w^T),
    \end{align*}
    % \Bruce{Is I scaled by $H$, or did I make a mistake in the multistep?}
    where $M_{SS} \in \R^{H \times H}$ is the matrix with entry $(i,j)$ given by
    \begin{align*}
        % M_{SS}^{ij} \triangleq  \trace\paren{\Sigma_x^{-1} \paren{A^{i-1} \Sigma_x (A_{i-1})^\top + \sum_{\ell=0}^{i-2} A^{\ell}BB^\top (A^\ell)^\top} (A^{j-i})^\top}
          \trace\paren{\paren{ I - \Sigma_x^{-1} \sum_{\ell=0}^{\min\curly{i,j}-2} A^{\ell}B_wB_w^\top (A^\ell)^\top} (A^{\abs{j-i}})^\top}.
    \end{align*}
\end{proposition}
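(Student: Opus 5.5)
The plan is a delta-method argument: linearize the map $(G_y,G_u)\mapsto \hat G^{SS}$ of \eqref{eq: single-step rolled out G} around $(A,B)$, then compute the asymptotic covariance of the one-step least-squares error \eqref{eq: single-step LS}. Write $\Delta_A = \hat G_y - A$ and $\Delta_B = \hat G_u - B$. The normal equations give
\[
\bmat{\Delta_A & \Delta_B} = \sum_t B_w w_t \bmat{x_t\\u_t}^\top \Big(\sum_t \bmat{x_t\\u_t}\bmat{x_t\\u_t}^\top\Big)^{-1}.
\]
By the martingale CLT together with the ergodic theorem, $\sqrt N\,\VEC\bmat{\Delta_A&\Delta_B}$ is asymptotically Gaussian. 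Since $u_t$ is independent of $x_t$, the one-step regressor covariance is block diagonal, $\mathrm{diag}(\Sigma_x, I_{\du})$. Hence $\sqrt N\Delta_A \Rightarrow B_w W_1\Sigma_x^{-1/2}$ and $\sqrt N \Delta_B \Rightarrow B_w W_2$, where $W_1$ and $W_2$ are independent matrices with i.i.d.\ standard normal entries.

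To first order, the $i$-th block row of $\hat G^{SS}-G^\star$ has:
\begin{itemize}
\item state block $\sum_{k=0}^{i-1} A^{k}\Delta_A A^{i-1-k}$;
\item input block in column $m\le i$ equal to $A^{i-m}\Delta_B + \sum_{k} A^k \Delta_A A^{i-m-1-k}B$, where the sum collects the terms from differentiating $G_y^{i-m}G_u$.
\end{itemize}
Passing from convergence in distribution to convergence of $N\E[\cdot]$ needs uniform integrability, for which I would cite the same Vitali argument used for Proposition~\ref{prop: well specified multistep}. The higher-order terms are then $O_p(1/N)$ and do not contribute. The target quantity therefore becomes $\E\|L(B_wW_1\Sigma_x^{-1/2}, B_wW_2)\Sigma_z^{1/2}\|_F^2$, where $L$ is the linearization.

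Rather than expanding blocks by brute force, I would use a trajectory representation. Note that $\|L\,\Sigma_z^{1/2}\|_F^2 = \E_z\|L z\|^2$ with $z=(x_t,u_{t:t+H-1})$ drawn independently from stationarity. The first-order error in the predicted state $\hat x_{t+i}$ obeys the recursion
\[
e_{i}=Ae_{i-1}+\Delta_A \bar x_{t+i-1}+\Delta_B u_{t+i-1}, \qquad e_0=0,
\]
where $\bar x_{t+k} = A^k x_t + \sum_{m} A^{k-1-m}Bu_{t+m}$ is the noise-free rollout. Hence $e_i=\sum_{k=0}^{i-1}A^{i-1-k}B_w\xi_k$ with
\[
\xi_k = W_1\Sigma_x^{-1/2}\bar x_{t+k} + W_2 u_{t+k}.
\]
This matches the $\Gamma_w$ structure exactly: $e_{1:H}=\Gamma_w\,\xi_{0:H-1}$. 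The loss is then $\trace(\Gamma_w\,\E[\xi\xi^\top]\,\Gamma_w^\top)$, so it remains to show $\E[\xi_{k}\xi_{l}^\top]=(M_{SS}^{k+1,l+1}+\du\delta_{kl})I_{\dx}$.

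Averaging over $W_1,W_2$ gives
\[
\E[\xi_k\xi_l^\top]=\Big(\E_z\big[\bar x_{t+l}^\top\Sigma_x^{-1}\bar x_{t+k}\big]+\E\big[u_{t+l}^\top u_{t+k}\big]\Big)I.
\]
The second term equals $\du\delta_{kl}$. For the first term, take $k\le l$ and write $\bar x_{t+l}=A^{l-k}\bar x_{t+k}+(\text{independent input terms})$, so that
\[
\E\big[\bar x_{t+l}^\top\Sigma_x^{-1}\bar x_{t+k}\big]=\trace\big(\Sigma_x^{-1}A^{l-k}\,\mathrm{Cov}(\bar x_{t+k})\big).
\]
Using the Lyapunov identity, $\mathrm{Cov}(\bar x_{t+k}) = A^k\Sigma_xA^{k\top}+\sum_{\ell<k}A^\ell BB^\top A^{\ell\top} = \Sigma_x-\sum_{\ell=0}^{k-1}A^\ell B_wB_w^\top A^{\ell\top}$, which reproduces the claimed entry with index shift $\min\{i,j\}-2=k-1$.

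I expect the main obstacle to be the rigorous justification of exchanging limit and expectation, since $\hat G^{SS}$ is polynomial in the inverse of the sample covariance and uniform integrability requires moment bounds on that inverse. My first attempt would restrict to a high-probability event on which the sample covariance is close to $\Sigma_x$ and bound the complement with Gaussian tail estimates.
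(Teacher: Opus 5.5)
Your proposal is correct and follows essentially the paper's route. Your first-order error recursion is exactly the paper's linearization $\Gamma\,(I_H \otimes [\hat G_y - A \;\; \hat G_u - B])\,F$, where the vector $Fz$ stacks the noise-free rollout $\bar x_{t+k}$ and the inputs $u_{t+k}$. Your Gaussian limit $B_w[W_1\Sigma_x^{-1/2} \;\; W_2]$ encodes the same asymptotic covariance $\Sigma_{x,u}^{-1}\otimes B_wB_w^\top$ as \Cref{lemma: ss single step var}. The paper stops at the Kronecker trace and only asserts that ``simplifying this expression gives the result''; you actually carry out that simplification, by factoring $\Gamma(I_H\otimes B_w)=\Gamma_w$ and evaluating $\E[\bar x_{t+l}^\top\Sigma_x^{-1}\bar x_{t+k}]$ with the Lyapunov identity. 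The limit--expectation interchange you flag is handled no more rigorously in the paper than in your sketch.
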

\begin{proof}  
Manipulating \eqref{eq: single-step rolled out G}, we can write $\hat G_N^{SS} - G^\star$ as a sum of terms which are linear in $\bmat{\hat G_y \!-\! A\! & \!\!\hat G_u \!-\! B}$ and higher order terms. That is,
\begin{align*}
    \hat G^{SS}_N - G &= \Gamma (I_H \otimes (\bmat{\hat G_y - A & \hat G_u - B}  ))F + {\cal O}_p\paren{\frac{1}{N}}
\end{align*}
where $F$ and $\Gamma$ are functions of $A, B, H$ defined in Appendix~\ref{appendix: single step well specified}. Noting that higher order terms vanish in the limit, 
\begin{align} \label{eq: proof II.2 loss}
        &\lim_{N\to\infty} N \mathbf{E} \brac{\norm{(\hat G^{SS}_N - G^\star) \Sigma_z^{1/2}}_F^2} \notag\\
        &= \lim_{N\to\infty} N \mathbf{E} \brac{\norm{\Gamma \left(I_H \otimes \left(\bmat{\hat G_y - A & \hat G_u - B}  \right) \right)F \Sigma_z^{1/2}}_F^2} \notag \\
    &=\lim_{N\to\infty} \!\!N\E \brac{\norm{\left (\Sigma_z^{1/2}F^T \!\otimes \!\Gamma \right)L\VEC \left(\bmat{\hat G_y \!-\! A\! & \!\!\hat G_u \!-\! B}  \right)}_2^2} 
\end{align}
where $L = \sum_{i = 1}^H e_i \otimes I_{\dx + \du} \otimes e_i \otimes I_{\dx}$ and $e_i$ is the $i$th column of $I_H$. From \Cref{lemma: ss single step var}, 
\begin{align} \label{eq: ss single step var}
    N \var \left ( \VEC \left(\bmat{\hat G_y \!-\! A\! & \!\!\hat G_u \!-\! B}\right) \right) \rightarrow   \Sigma_{x,u}^{-1} \otimes B_w B_w^\top 
\end{align}
where $\Sigma_{x,u}$ is the stationary covariance of $\bmat{x_t^\top & u_t^\top}^\top$. Expanding the norm and plugging in \eqref{eq: ss single step var}, \eqref{eq: proof II.2 loss} becomes
\begin{align*}
    \trace \left(\left(\Sigma_{x,u}^{-1} \otimes B_wB_w^\top \right)L^T 
    \left (F\Sigma_zF^T \otimes \Gamma^T\Gamma \right)L \right).
\end{align*}
Simplifying this expression gives the result stated in the proposition.
\end{proof}
Again, the error decays at a rate $1/N$. In contrast to the multi-step predictor, the asymptotic scaling of the single-step prediction error has the quantity $M_{SS} + \du I_H$ inside the trace. This means that the multi-step predictor suffers an extra factor of $H$ in the input term. Additionally the matrix $M_{MS}$ for the multi-step case has entries which decay as the distance to the diagonal increases, while $M_{SS}$ has entries which decay as the distance to the upper left element increases. Roughly, this  indicates that for very stable systems $M_{SS}$ should become smaller than $M_{MS}$. 

Next, we characterize the reducible error of the intermediate predictor. We introduce several quantities in order to cleanly express this characterization. Define the per-timestep loss
\begin{align}
 \label{eq: per timestep loss}
m_t(G_y,G_u)
&= \sum_{k=1}^H
\Bigl\|
y_{t+k}
- G_y^k y_t
- \sum_{i=0}^{k-1} G_y^{k-1-i} G_u\,u_{t+i}
\Bigr\|^2 .
\end{align}
Let
$
\label{eq: theta}
\theta := \VEC\!\left(\begin{bmatrix} G_y & G_u \end{bmatrix}\right).
$
Define
$
\label{eq: hessian}
J
= \bar{\E}\!\left[
\nabla_{\theta}^2\, m_t(A,B)
\right],
$
and
\begin{align}
\label{eq: jacobian}
\Sigma
&= \bar{\E}\!\left[
\sum_{\tau=-t}^{\infty}
\nabla_{\theta} m_t(A,B)\,
\bigl(\nabla_{\theta} m_{t+\tau}(A,B)\bigr)^{\!\top}
\right].
\end{align}
Let $\begin{bmatrix}\hat G_y^I & \hat G_u^I\end{bmatrix}$ be the first block row of the intermediate predictor $\hat G_N^I$. \Cref{lemma: ss_ms var} shows that the quantities $J$ and $\Sigma$ describe its asymptotic variance. In particular,
\begin{align} \label{eq: ss_ms var}
\lim_{N\to\infty}
N\,\var\!\left(\VEC\!\left(
\begin{bmatrix}
\hat G_y^I - A & \hat G_u^I - B
\end{bmatrix}
\right)\right)
= J^{-1}\Sigma J^{-1}.
\end{align}
The matrices $J$ and $\Sigma$ admit closed form expressions in terms of $A, B, B_w, $ and $H$. Implementations of these closed form expressions as required for the numerical validation in \Cref{s: well specified numerical} can be found in the accompanying codebase. We use these quantities in the following proposition. 
\begin{proposition}
    \label{prop: ss w/ ms loss well specified}
    The asymptotic error of the single-step predictor fitted with a multi-step loss $\hat G^{I}_N$ is given by 
    \begin{align*}
        &\lim_{N\to\infty} N \mathbf{E} \brac{\norm{(\hat G^{I}_N - G^\star) \Sigma_z^{1/2}}_F^2} \\
        &= \trace \left (J^{-1} \Sigma J^{-1}L^T 
    \left (F\Sigma_zF^T \otimes \Gamma^T\Gamma \right)L \right )
    \end{align*}
    where $L$, $F$, and $\Gamma$ are functions of $A, B, H$  as defined in Appendix~\ref{appendix: single step well specified}. 
\end{proposition}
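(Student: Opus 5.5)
The argument will run parallel to the proof of \Cref{prop: single-step well specified}; only the asymptotic covariance of the first block row is different. By construction of the set $S$ in \eqref{eq: multistep loss}, the intermediate predictor $\hat G^I_N$ is generated by its first block row $\bmat{\hat G_y^I & \hat G_u^I}$, in exactly the way \eqref{eq: single-step rolled out G} generates $\hat G^{SS}_N$ from $\bmat{\hat G_y & \hat G_u}$. The block-$k$ entries $G_y^k$ and $G_y^{k-1-i}G_u$ are polynomial in $(G_y,G_u)$. Expanding each of them around $(A,B)$, exactly as in the proof of \Cref{prop: single-step well specified}, gives
\begin{align*}
\hat G^I_N - G^\star = \Gamma\bigl(I_H \otimes \bmat{\hat G^I_y - A & \hat G^I_u - B}\bigr)F + R_N,
\end{align*}
with the same $F,\Gamma$ as in Appendix~\ref{appendix: single step well specified}. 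The remainder $R_N$ is a sum of products of at least two error factors.

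The first step is to establish consistency, together with the rate $\bmat{\hat G^I_y - A & \hat G^I_u - B} = {\cal O}_p(N^{-1/2})$; together these give $R_N = {\cal O}_p(1/N)$. Both follow from the standard M-estimation argument behind \Cref{lemma: ss_ms var}. In the well-specified case the population minimizer of $\bar\E\, m_t$ is $(A,B)$, because the residuals $\Gamma_w w_{t:t+H-1}$ are independent of $z_t$. The Hessian $J$ is positive definite, since its first block is $\Sigma_{x,u}$ and the other blocks are positive semidefinite Gauss--Newton terms. Once this holds, multiply by $N$, vectorize with the identity $\VEC(\Gamma X F\Sigma_z^{1/2}) = (\Sigma_z^{1/2}F^\top\otimes\Gamma)L\VEC(X)$ used in \eqref{eq: proof II.2 loss}, and drop $R_N$. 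This yields
\begin{align*}
\lim_{N\to\infty} N\,\E\Bigl\|(\Sigma_z^{1/2}F^\top\otimes\Gamma)L\,\VEC\bigl(\bmat{\hat G^I_y - A & \hat G^I_u - B}\bigr)\Bigr\|_2^2 .
\end{align*}

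The next step is to write this quadratic form as the trace of the product of $L^\top(F\Sigma_zF^\top\otimes\Gamma^\top\Gamma)L$ with the limit of $N$ times the second moment of the vectorized error. Since the estimator is asymptotically unbiased to order $N^{-1/2}$, that second moment has the same limit as $N\var$. By \eqref{eq: ss_ms var}, which is \Cref{lemma: ss_ms var}, this limit is $J^{-1}\Sigma J^{-1}$, and the stated formula follows.

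The main obstacle is justifying that the limit of the expectation equals the expectation of the limiting distribution, both when discarding $R_N$ and when replacing the second moment by $J^{-1}\Sigma J^{-1}$. Unlike ordinary least squares, $\hat G^I_N$ has no closed form, and $R_N$ involves powers of $\hat G^I_y$ up to $H$. I would handle this as in \Cref{prop: well specified multistep}, using Slutsky and Vitali. Concretely, I would establish uniform integrability of $N\|\hat\theta-\theta^\star\|^2$, together with $N$ times the higher-order terms, via moment bounds on the M-estimator. Such bounds follow from the Gaussian data, polynomial loss, and $\rho(A)<1$, possibly after restricting to a high-probability compact event on which the Hessian of the empirical loss is uniformly bounded below. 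Alternatively, following the paper's convention, I would interpret the limit through the asymptotic distribution.
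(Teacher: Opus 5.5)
Your proposal is correct and follows the paper's proof: linearize $\hat G^I_N - G^\star$ as $\Gamma(I_H\otimes\bmat{\hat G^I_y - A & \hat G^I_u - B})F$ plus an ${\cal O}_p(1/N)$ remainder, vectorize via $L$, and substitute the asymptotic covariance $J^{-1}\Sigma J^{-1}$ from \Cref{lemma: ss_ms var}. Your extra points (identification of $(A,B)$, $J\succ 0$, negligible bias so the second moment and variance share a limit, and uniform integrability) fill in justifications the paper leaves implicit; the vectorization identity should read $\VEC(\Gamma(I_H\otimes X)F\Sigma_z^{1/2}) = (\Sigma_z^{1/2}F^\top\otimes\Gamma)L\VEC(X)$ rather than $\VEC(\Gamma X F\Sigma_z^{1/2}) = (\Sigma_z^{1/2}F^\top\otimes\Gamma)L\VEC(X)$.
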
 

\begin{proof}
From the structure \eqref{eq: multistep loss}, we can write $\hat G_N^{I} - G^\star$ as a sum of terms which are linear in $\bmat{\hat G_y^I \!-\! A\! & \!\!\hat G_u^I \!-\! B}$ and higher order terms which vanish in the limit. That is,
\begin{align*}
    \hat G^{I}_N - G &= \Gamma (I_H \otimes (\bmat{\hat G_y^I - A & \hat G_u^I - B}  ))F + {\cal O}_p\paren{\frac{1}{N}}.
\end{align*}
 Thus, 
\begin{align*}
    &\lim_{N\to\infty} N \mathbf{E} \brac{\norm{(\hat G^{I}_N - G^\star) \Sigma_z^{1/2}}_F^2} \\
    &= \lim_{N\to\infty} N \mathbf{E} \brac{\norm{\Gamma \left(I_H \otimes \left(\bmat{\hat G_y^I - A & \hat G_u^I - B}  \right) \right)F \Sigma_z^{1/2}}_F^2} \\
    &= \lim_{N\to\infty} \!\!N\E \brac{\norm{\left (\Sigma_z^{1/2}F^T \!\otimes \!\Gamma \right)L\VEC \left(\bmat{\hat G_y^I \!-\! A\! & \!\!\hat G_u^I \!-\! B}  \right)}_2^2}. 
\end{align*} Expanding the norm and plugging in \eqref{eq: ss_ms var} gives the result. 
\end{proof}

Once again, the error decays asymptotically at a rate of $1/N$. This closed form expression does not admit an easy comparison with the single and multi-step decay rates. However, we are still able to show the comparison of the asymptotic variance of the three types of predictor.

\subsection{Comparison of Predictor Error}
\label{subsec: well-specified comparison}
\begin{figure*}
    \centering
    \includegraphics[width=0.32\linewidth]{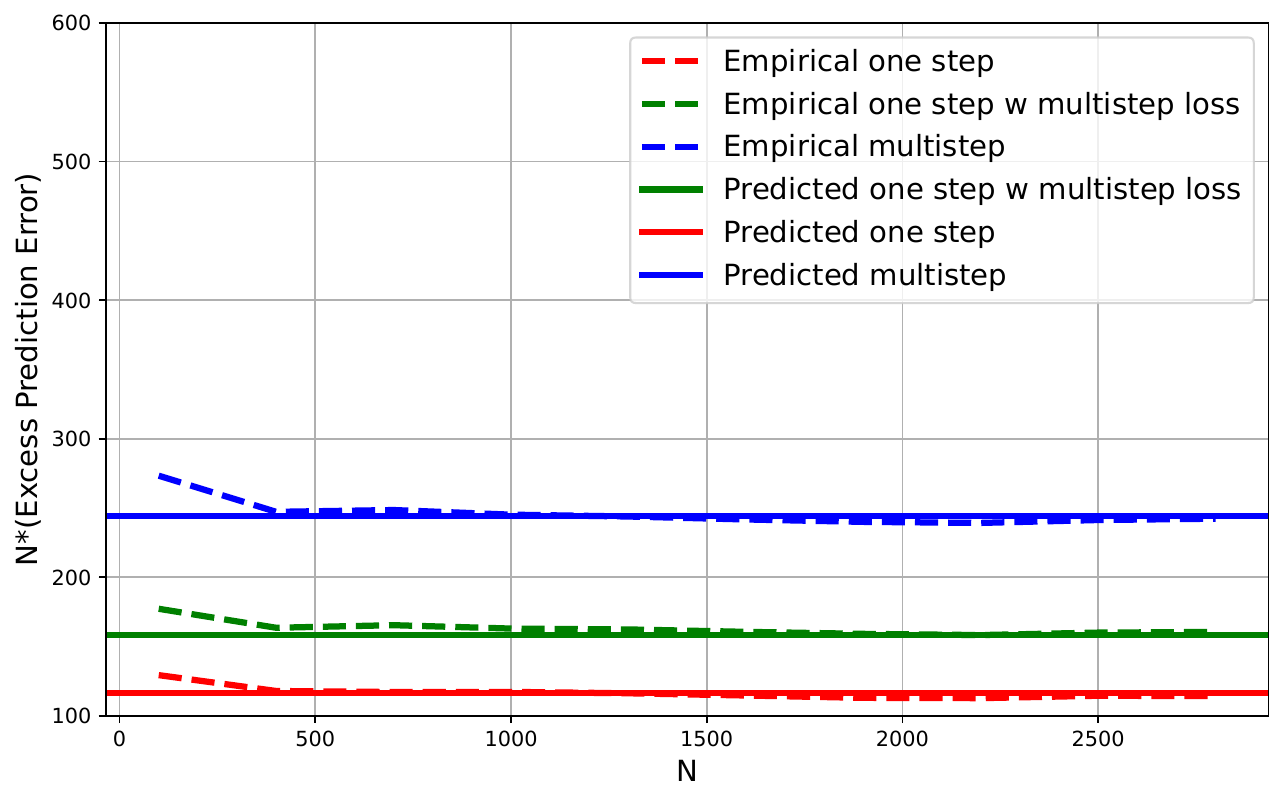}
    \includegraphics[width=0.32\linewidth]{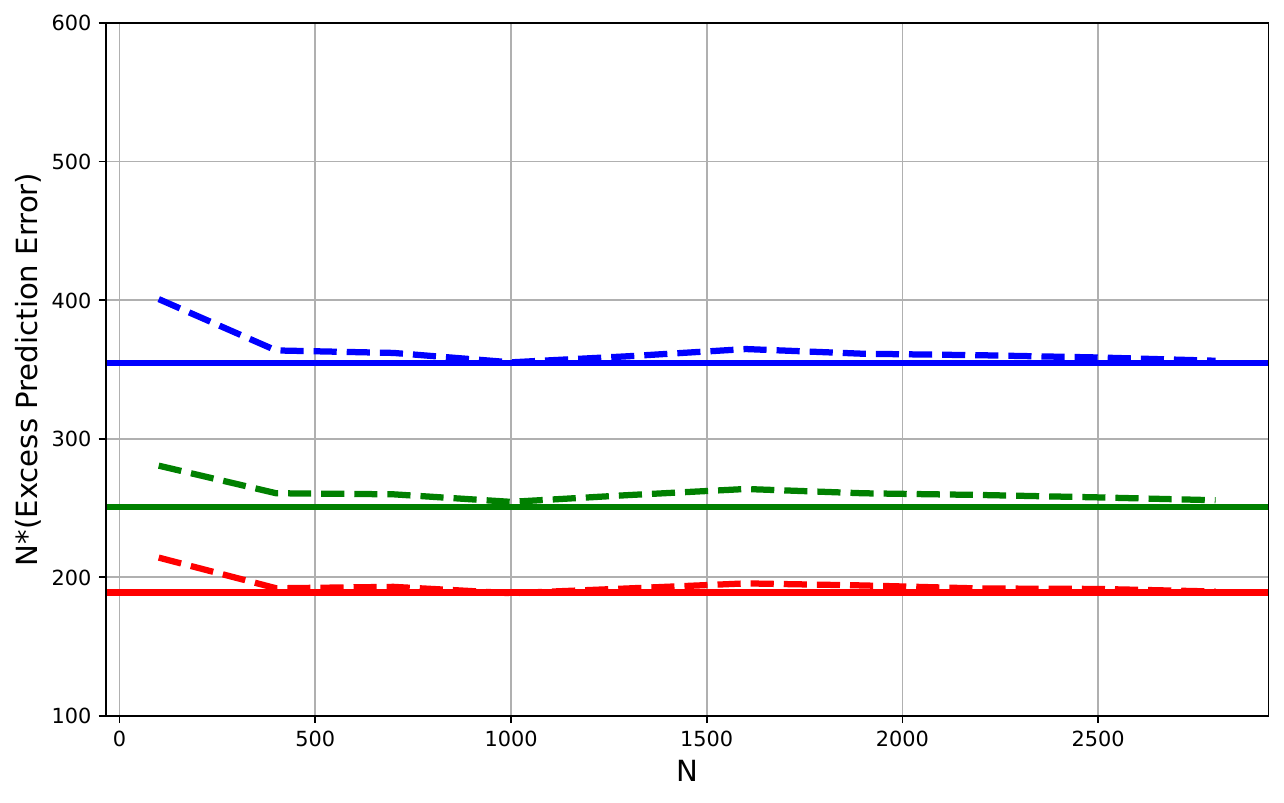}
    \includegraphics[width=0.32\linewidth]{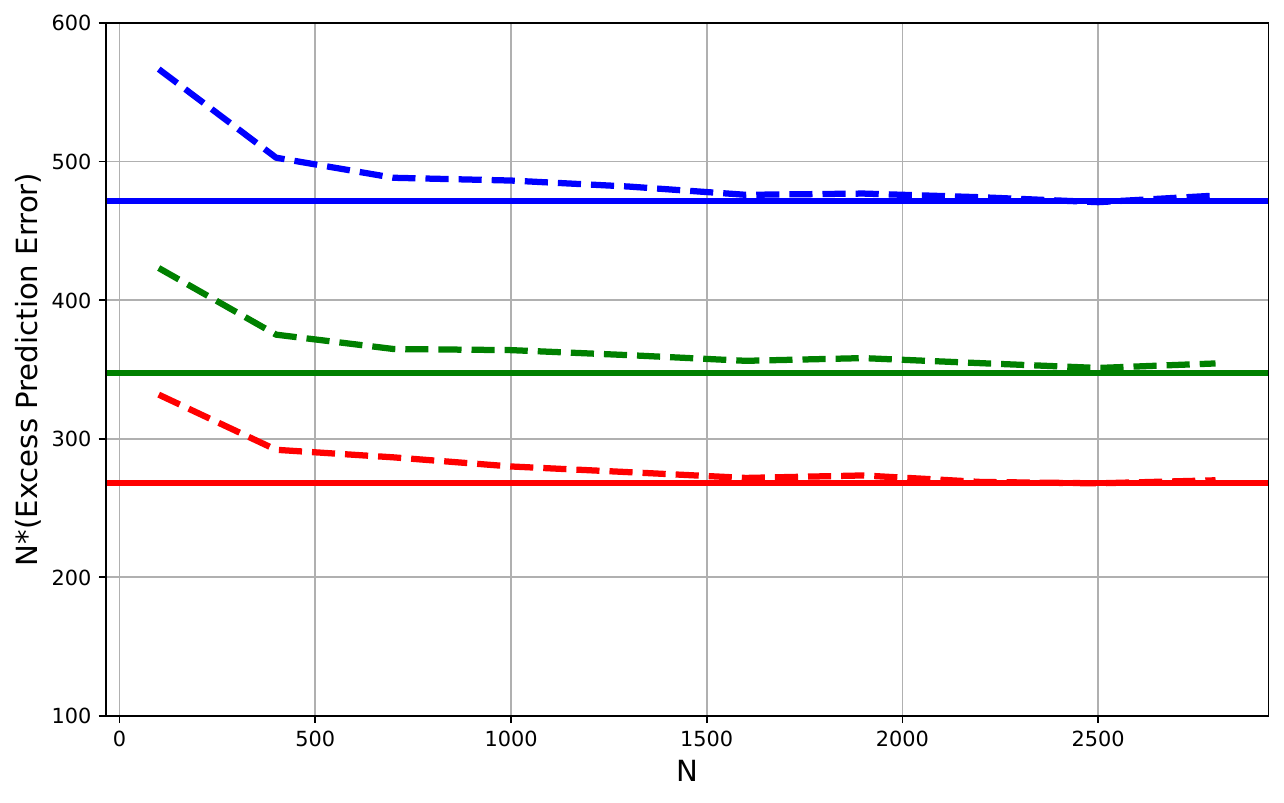}
    \caption{Convergence of $N\E[L(\hat f_H)]$ to the reducible prediction errors given in \Cref{prop: well specified multistep} (multi-step predictor), \Cref{prop: single-step well specified} (single-step rollout), and \Cref{prop: ss w/ ms loss well specified} (intermediate predictor) for the system defined by \Cref{eq: numerical ex system well specified} with $a = 0.5, 0.75,0.9$ (left to right) and horizon $H=5$.}
    \label{fig: well specified sweep a}
\end{figure*}
The next proposition compares the asymptotic decay rates given in Propositions~\ref{prop: well specified multistep}, \ref{prop: single-step well specified} and \ref{prop: ss w/ ms loss well specified}. 

\begin{proposition}
\label{prop: predictor ordering well specified}
The asymptotic decay rates for the single step, intermediate, and multistep predictors obey the ordering
    \begin{align*} 
        &\lim_{N\to\infty} N \mathbf{E} \brac{\norm{(\hat G^{SS}_N - G^\star) \Sigma_z^{1/2}}_F^2} \\
        &\leq \lim_{N\to\infty} N \mathbf{E} \brac{\norm{(\hat G^{I}_N - G^\star) \Sigma_z^{1/2}}_F^2} \\
        &\leq \lim_{N\to\infty} N \mathbf{E} \brac{\norm{(\hat G^{MS}_N - G^\star) \Sigma_z^{1/2}}_F^2}.
    \end{align*}
    
\end{proposition}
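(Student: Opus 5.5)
We prove both inequalities by a single device: each limit is the asymptotic variance of a smooth function of the estimated parameters, evaluated in the metric induced by $\Sigma_z$. We then compare estimators through efficiency (Cramér--Rao / Gauss--Markov type) arguments. Since the noise is Gaussian and the model is well specified, the single-step least-squares estimator \eqref{eq: single-step LS} is the conditional maximum likelihood estimator of $\theta=\VEC(\bmat{A & B})$. The intermediate estimator is an M-estimator of the same parameter $\theta$ whose criterion is the multi-step loss.

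\textbf{First inequality.} By the proofs of \Cref{prop: single-step well specified} and \Cref{prop: ss w/ ms loss well specified}, both limits take the form $\trace(V\,Q)$ with the same positive semidefinite matrix
\[
Q = L^\top\left(F\Sigma_zF^\top\otimes\Gamma^\top\Gamma\right)L .
\]
Here $V=\Sigma_{x,u}^{-1}\otimes B_wB_w^\top$ for the single-step estimator and $V=J^{-1}\Sigma J^{-1}$ for the intermediate one. It therefore suffices to prove the matrix inequality $\Sigma_{x,u}^{-1}\otimes B_wB_w^\top\preceq J^{-1}\Sigma J^{-1}$. The left-hand side is the inverse Fisher information per sample for $\theta$ in the Gaussian Markov chain. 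The right-hand side is the sandwich variance of a consistent, asymptotically normal M-estimator (\Cref{lemma: ss_ms var}). By the asymptotic efficiency of the MLE among regular estimators (Hájek--Le Cam convolution theorem for ergodic Markov chains), the sandwich variance dominates the inverse Fisher information, and the first inequality follows.

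A more self-contained route is to verify $J^{-1}\Sigma J^{-1}\succeq \mathcal I^{-1}$ directly. Write the score $s_t$ of the one-step likelihood, which is a martingale difference sequence, and note that the gradient $\nabla_\theta m_t(A,B)$ is a linear combination of $w_{t:t+H-1}$ times regressors. This gives $J=-\bar\E[\nabla m\, s^\top]$-type identities, and the inequality reduces to a Cauchy--Schwarz inequality for the long-run covariance. One subtlety is that $B_wB_w^\top$ may be singular. In that case we use a generalized-inverse version, or add a vanishing regularization, to handle degenerate directions.

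\textbf{Second inequality.} Let $\Phi:\theta\mapsto G(\theta)$ denote the map in \eqref{eq: multistep loss}. The intermediate estimator minimizes the same empirical criterion as $\hat G^{MS}_N$ but over the smooth submanifold $S=\Phi(\R^{\dy(\dy+\du)})$ containing $G^\star$. The plan has three steps.
\begin{enumerate}
\item Note that the empirical criterion in the $\VEC(G)$ coordinates is the quadratic $\VEC(G-\hat G^{MS}_N)^\top(\hat\Sigma_z\otimes I)\VEC(G-\hat G^{MS}_N)$ plus a constant.
\item Conclude that, to first order, $\hat G^I_N-G^\star$ is the $\Sigma_z\otimes I$-orthogonal projection of $\hat G^{MS}_N-G^\star$ onto the tangent space $T=\mathrm{range}(D\Phi(\theta^\star))$.
\item Observe that the loss metric $\norm{(\cdot)\Sigma_z^{1/2}}_F^2$ is exactly the norm of this inner product. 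Since an orthogonal projection is a contraction, $\norm{(\hat G^I_N-G^\star)\Sigma_z^{1/2}}_F\le\norm{(\hat G^{MS}_N-G^\star)\Sigma_z^{1/2}}_F+o_p(N^{-1/2})$.
\end{enumerate}
Taking $N\E[\cdot]$ and passing to the limit, using the uniform integrability already invoked in \Cref{prop: well specified multistep}, yields the second inequality.

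\textbf{Main obstacle.} The hardest step is the first inequality. Invoking MLE efficiency requires checking regularity (local asymptotic normality of the stationary Gaussian chain and regularity of the M-estimator). The direct matrix proof, in turn, requires computing $\Sigma$ with its temporal correlations. I would first attempt the direct projection argument: express $\nabla_\theta m_t$ as a linear functional of the innovations and show that $J^{-1}\nabla m$ has its projection onto the span of the scores equal to $\mathcal I^{-1}s$. That projection identity gives $J^{-1}\Sigma J^{-1}-\mathcal I^{-1}\succeq0$ via a Pythagorean identity.
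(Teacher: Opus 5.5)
Your proposal is correct and follows the paper's proof. For the first inequality, you write both limits as $\trace\bigl(V\,L^\top(F\Sigma_zF^\top\otimes\Gamma^\top\Gamma)L\bigr)$ and invoke the Cram\'er--Rao (efficiency) bound $\Sigma_{x,u}^{-1}\otimes B_wB_w^\top\preceq J^{-1}\Sigma J^{-1}$. For the second, your first-order $(\Sigma_z\otimes I)$-orthogonal projection of $\hat G^{MS}_N-G^\star$ onto the tangent space of $S$ is the paper's constrained least-squares identity $\trace(V_{\hat\beta^{I}}Q)=\trace(V_{\hat\beta^{MS}}Q)-\trace\bigl(V_{\hat\beta^{MS}}^{1/2}R(R^\top Q^{-1}R)^{-1}R^\top V_{\hat\beta^{MS}}^{1/2}\bigr)$ with $Q=\Sigma_z\otimes I_{Hd_x}$, read as a contraction.
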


\begin{proof}
Consider the first inequality. Recall from the proofs of \Cref{prop: single-step well specified} and \Cref{prop: ss w/ ms loss well specified} that 
\begin{align*}
    &\lim_{N\to\infty} N \mathbf{E} \brac{\norm{(\hat G^{SS}_N - G^\star) \Sigma_z^{1/2}}_F^2} \\
    &= \trace ((\Sigma_{x,u}^{-1} \otimes B_wB_w^\top)L^\top(F\Sigma_zF^\top \otimes \Gamma^\top\Gamma)L)
\end{align*}
and 
\begin{align*}
    &\lim_{N\to\infty} N \mathbf{E} \brac{\norm{(\hat G^{I}_N - G^\star) \Sigma_z^{1/2}}_F^2} \\
    &= \trace \Bigl(J^{-1} \Sigma J^{-1} L^\top (F \Sigma_z F^\top \kron \Gamma^\top \Gamma) L\Bigl). 
\end{align*}
Notice that $\Sigma_{x,u}^{-1} \otimes B_wB_w^\top$ is the inverse of the Fisher Information in the data $\calD_N$ about the parameter $\VEC (\bmat{A & B})$. From the Cramér Rao inequality, $\Sigma_{x,u}^{-1} \otimes B_wB_w^\top \preceq J^{-1} \Sigma J^{-1}$. Thus, the first inequality of the proposition holds. 

Consider the second inequality. Let $\beta^\star = \VEC( G^\star)$. Let $\hat \beta^{MS} = \VEC(\hat G_N^{MS})$ be the parameters of the unconstrained multi-step predictor and $\hat \beta^{I} = \VEC(\hat G_N^{I})$ be the parameters of the intermediate predictor. Let $V_{\hat \beta^{MS}}$ be the asymptotic variance $\lim_{N \rightarrow \infty}N \var(\hat \beta^{MS} - \beta^\star) $ and similarly $V_{\hat \beta^{I}} \triangleq \lim_{N \rightarrow \infty}N \var(\hat \beta^{I} - \beta^\star)$. Let $Q \triangleq \Sigma_z \kron I_{H\dx}$. 
Then,  
\begin{align*}
    &\lim_{N\to\infty} N \mathbf{E} \brac{\norm{(\hat G^{I}_N - G^\star) \Sigma_z^{1/2}}_F^2} = \trace(V_{\hat \beta^{I}} Q)
\end{align*}
and 
\begin{align*}
    &\lim_{N\to\infty} N \mathbf{E} \brac{\norm{(\hat G^{MS}_N - G^\star) \Sigma_z^{1/2}}_F^2} = \trace(V_{\hat \beta^{MS}} Q). 
\end{align*}
 From results on constrained least squares regression (see \Cref{lemma: hansen} and \cite{Hansen2022Econometrics}), we have that 
\begin{align*}
    \trace(V_{\hat \beta^{I}} Q) = \trace(V_{\hat \beta^{MS}} Q) - \trace(V_{\hat \beta^{MS}}^{\frac{1}{2}}R(R^\top Q^{-1}R)^{-1}R^\top V_{\hat \beta^{MS}}^{\frac{1}{2}})
\end{align*}
for some $R \in \R^{H\dy(\dy + H\du) \times (H-1)\dy(\dy + H\du)}$ with full row rank.
Since $\trace(V_{\hat \beta^{MS}}^{\frac{1}{2}}R(R^\top Q^{-1}R)^{-1}R^\top V_{\hat \beta^{MS}}^{\frac{1}{2}}) \geq 0$, the result holds. 
\end{proof}
While Propositions~\ref{prop: well specified multistep}, \ref{prop: single-step well specified}, and \ref{prop: ss w/ ms loss well specified} explicitly characterize the asymptotic decay rates of the reducible errors induced by the three predictor classes, \Cref{prop: predictor ordering well specified} provides a direct comparison of these rates. In particular, it shows that for any system of the form \eqref{eq: dynamics} with full-state observations, the single-step predictor is statistically more efficient than the intermediate predictor (single-step model trained with a multi-step loss), which in turn is more efficient than the direct multi-step predictor.

This ordering reflects the extent to which each estimator exploits structural properties of the underlying dynamics. The single-step predictor benefits from (i) leveraging the Markovian structure of the true system and (ii) avoiding compounding process noise across multiple prediction steps in the training loss. The intermediate predictor also exploits the Markovian structure but incurs additional variance due to multi-step noise accumulation in the loss. The direct multi-step predictor benefits from neither advantage, and thus its prediction error exhibits the slowest asymptotic decay rate. 

In the case of the single-step and direct multi-step predictors, we can characterize and analyze the efficiency gap. Specifically, we can express the quadratic form defining the reducible portion of the error in \Cref{prop: well specified multistep} as the reducible error in \Cref{prop: single-step well specified} plus the additional term $\trace (\Gamma_w((M_{MS} - M_{SS} + (H-1)d_uI_H) \otimes I_{d_X})\Gamma_w^T)$. Note that $(M_{MS} - M_{SS})$ is the matrix with entry $(i,j)$ given by $\trace(\Sigma_x^{-1} \sum_{\ell=0}^{\min\curly{i,j}-2} A^{\ell}B_w B_w^\top (A^\ell)^\top (A^{\abs{j-i}})^\top)$. 
% \begin{align*}
%     \bmat{0  & 0 & 0 & \dots \\
%     0 & \trace\paren{\Sigma_x^{-1} B_wB_w^\top } &  \trace\paren{\Sigma_x^{-1} B_wB_w^\top A^\top} & \dots \\
%     0 & \trace\paren{\Sigma_x^{-1} B_wB_w^\top A^\top} &   \sum_{\ell=0}^1 \trace\paren{\Sigma_x^{-1} A^\ell B_wB_w^\top (A^\ell)^\top} & \dots \\ \vdots && \ddots }.
% \end{align*}
We see that
%in the more general setting of a vector valued state and inputs, 
the efficiency gap grows with the prediction horizon $H$. This scaling quantitatively captures that the direct multi-step predictor has a number of parameters which scales with $H$. 

To better understand the role of system stability in determining this efficiency gap, we consider the special case of a scalar system without inputs.  Here the difference between the statistical efficiency of the single-step predictor and the multi-step predictor is characterized by the difference between the matrices
\begin{align*}
    M_{SS} =  \bmat{1 & a & a^2 & \dots &a^{H-1} \\ a & a^2 & a^3 &\dots &a^{H} \\ \vdots & & \ddots \\ 
    a^{H-1} & &\dots && a^{2(H-1)}}
\end{align*}
and
\begin{align*}
    M_{MS} = \bmat{1 & a & a^2 & \dots &a^{H-1} \\ a & 1 & a &\dots &a^{H-2} \\ \vdots & & \ddots \\ 
    a^{H-1} & &\dots && 1},
\end{align*}
from which we conclude that the difference between the two diminishes as $\abs{a} \to 1$. Since the statistical efficency of the intermediate predictor lies in between that of the single step and multistep predictors by \Cref{prop: predictor ordering well specified}, we can conclude that the difference between all three predictor classes diminishes as the system approaches marginal stability.

The above statement about the decay in the efficiency gap as the system approaches marginal stability holds for scalar systems, but the picture is more nuanced in general. In particular, if any eigenvalue remains less than 1, we maintain an efficiency gap even as $\rho(A) \rightarrow 1$, as seen in \Cref{fig: well specified sweep a}.

\subsection{Numerical Experiments}
\label{s: well specified numerical}

To validate the error characterizations presented in \cref{sec: well-specified}, we consider the fully observed system defined by 
\begin{align} \label{eq: numerical ex system well specified}
    A = \bmat{a & 1.0 \\ 0.0 & 0.75}, \Sigma_w = I_2
\end{align}
with $B = \bmat{0 & 1}^\top, C = I_2, \Sigma_v = 0$.

In \Cref{fig: well specified sweep a}, we estimate $N\E[L(\hat f_H)]$ by averaging over $2,500$ datasets $D_N$ for $N \in \{ 1,...,3000\}$ to demonstrate convergence to the reducible errors given in Propositions~\ref{prop: well specified multistep}, \ref{prop: single-step well specified} and \ref{prop: ss w/ ms loss well specified}. In these figures, we fix $H=5$ and vary $a$ across $0.5, 0.75$, and $0.9$. For the single-step model with a multi-step loss, $\hat G$ is fit using Adam initialized from the single-step predictor fit with a single-step loss, and using a step size $1e-2$. 

For each data generating process (choice of parameter $a$) and each predictor, we see that the decay rate of the excess prediction error converges to its predicted value as $N$ increases. Comparing the decay rates across predictor classes, these results exemplify the efficiency gap suggested by \Cref{prop: predictor ordering well specified}. That is, for each choice of the parameter $a$, the excess prediction error of the single-step predictor has the fastest decay rate, followed by the single-step predictor with a multi-step loss, then the direct multi-step predictor. Comparing the figures left to right, we see that, unlike for the scalar system discussed in \Cref{subsec: well-specified comparison}, the efficiency gap does not diminish with $a$.

\section{Misspecified Setting}
\label{sec: misspecified}

We again consider a single-step predictor and a multi-step predictor applied to the measurement. However, we now examine the general case in which measurements do not provide full state information by reincorporating partial observations, as specified by $C \in \R^{\dy \times \dx}$ and $D_v D_v^\top \succ 0$, into the dynamics~\eqref{eq: dynamics}. Due to the Markovian assumption made in fitting the predictor, this represents a misspecified setting. To ease notational burden we restrict attention to the setting without inputs and set $B=0$, although our analysis can be extended naturally to the $B \neq 0$ case.

In this setting, the loss is given by
\begin{align}
    \label{eq: misspecified prediction error}
    L(\hat G) = \bar \E \norm{y_{t+1:t+H} - \hat G y_t }^2.
\end{align}
We rewrite the dynamics in innovations form as
\begin{align*}
    \hat x_{t+1} &= (A-KC) \hat x_t + K y_t = A \hat x_t + K D_e e_t \\
    y_t &= C \hat x_t + D_e e_t, 
\end{align*}
where $e_t$ is standard normal noise that is independent across time, $K$ is the Kalman gain defined as $K = ASC^\top (C S C^\top + R)^{-1}$, $S$ is the stabilizing solution to the Riccati equation defined by $A$, $C$, $D_w D_w^\top$ and $D_v D_v^\top$, and $D_e = (C S C^\top + D_v D_v^\top)^{1/2}$.  Then
\begin{align*}
    y_{t+1:t+H} = \Phi \hat x_t + G^\star y_t + \Gamma_e e_{t+1:t+H},
\end{align*}
where 
\begin{align*}
    \Phi &= \bmat{C (A-KC) \\ C A (A-KC) \\ \vdots \\ CA^{H-1} (A-KC)}, G^\star = \bmat{CK \\ C A K \\ \vdots \\ CA^{H-1} K}, \\
    \Gamma_e &= \bmat{D_e & 0 & \dots & 0\\ C K D_e & D_e & \dots \\ \vdots && \ddots \\ CA^{H-2} K D_e & \dots & CK D_e & D_e}. 
\end{align*}
Under these definitions, and exploiting that innovations are independent across time, the error~\eqref{eq: misspecified prediction error} is given by 
\begin{align*}
    L(\hat G) &= \bar \E\norm{\Phi \hat x_t + (G^\star - \hat G) y_t + \Gamma_e e_{t+1:t+H}}^2 \\
    &= \bar \E\norm{\Phi \hat x_t + (G^\star - \hat G) y_t}^2 + \norm{\Gamma_e}_F^2.
\end{align*}
% by the fact that the innovations are independent of the past. 
Expanding $y_t = C \hat x_t + D_e e_t$,

\begin{align*}
    &L(\hat G) \\
    &=\! \norm{(\Phi  \!+\! ( G^\star - \hat G)C)\Sigma_{\hat x}^{1/2}}_F^2 + \norm{(G^\star - \hat G) D_e}_F^2  \!+\! \norm{\Gamma_e}_F^2, 
\end{align*}
where $\Sigma_{\hat x}$ is the stationary covariance of $\hat x_t$. Let $\Sigma_y$ be the stationary covariance of $y_t$.\footnote{It follows from $D_v D_v^\top \succ 0$ that $\Sigma_y$ is positive definite. }
When $\hat G$, or equivalently $\hat G_N$, is learned on the dataset of size $N$, we can decompose this quantity into an irreducible component, and a component which decays to zero as the amount of data $N\to \infty$. Denoting the irreducible component by $B(\hat G_N) \triangleq \lim_{N\to\infty} \E L(\hat G_N)$ and the reducible component by $\varepsilon(\hat G_N) \triangleq L(\hat G_N) - B(\hat G_N)$, we decompose
\begin{align*}
    L(\hat G_N) = B(\hat G_N) + \varepsilon(\hat G_N). 
\end{align*}
Unlike the well-specified setting, the irreducible component $B(\hat G_N)$ differs depending on whether we fit a single-step model, direct multi-step model, or single-step model fitted with a multi-step loss. We therefore focus on comparing these bias terms rather than the rate of convergence, since this captures the fundamental difference between the two models. 
% See Appendix~\ref{appendix: misspecified} for characterization of the rate of decay of the reducible error $\lim_{N \rightarrow \infty} N\E[\varepsilon(\hat G_N) ]$ for the direct multi-step predictor. 

% \begin{figure}
%     \centering
%     \includegraphics[width=\linewidth]{figures/a_90_H_10.pdf}
%     \caption{Convergence of $N\e_{D_N}[L(\hat f_H)]$ to the reducible prediction errors given in \Cref{prop: well specified multistep} (multi-step predictor) and \Cref{prop: single-step well specified} (single-step rollout) for the system defined by \Cref{eq: numerical ex system} with $a = 0.9$ (left to right) and horizon $H=10$.}
%     \label{fig: well specified H 10}
% \end{figure}

\begin{figure*}
    \centering
    \includegraphics[width=0.32\linewidth]{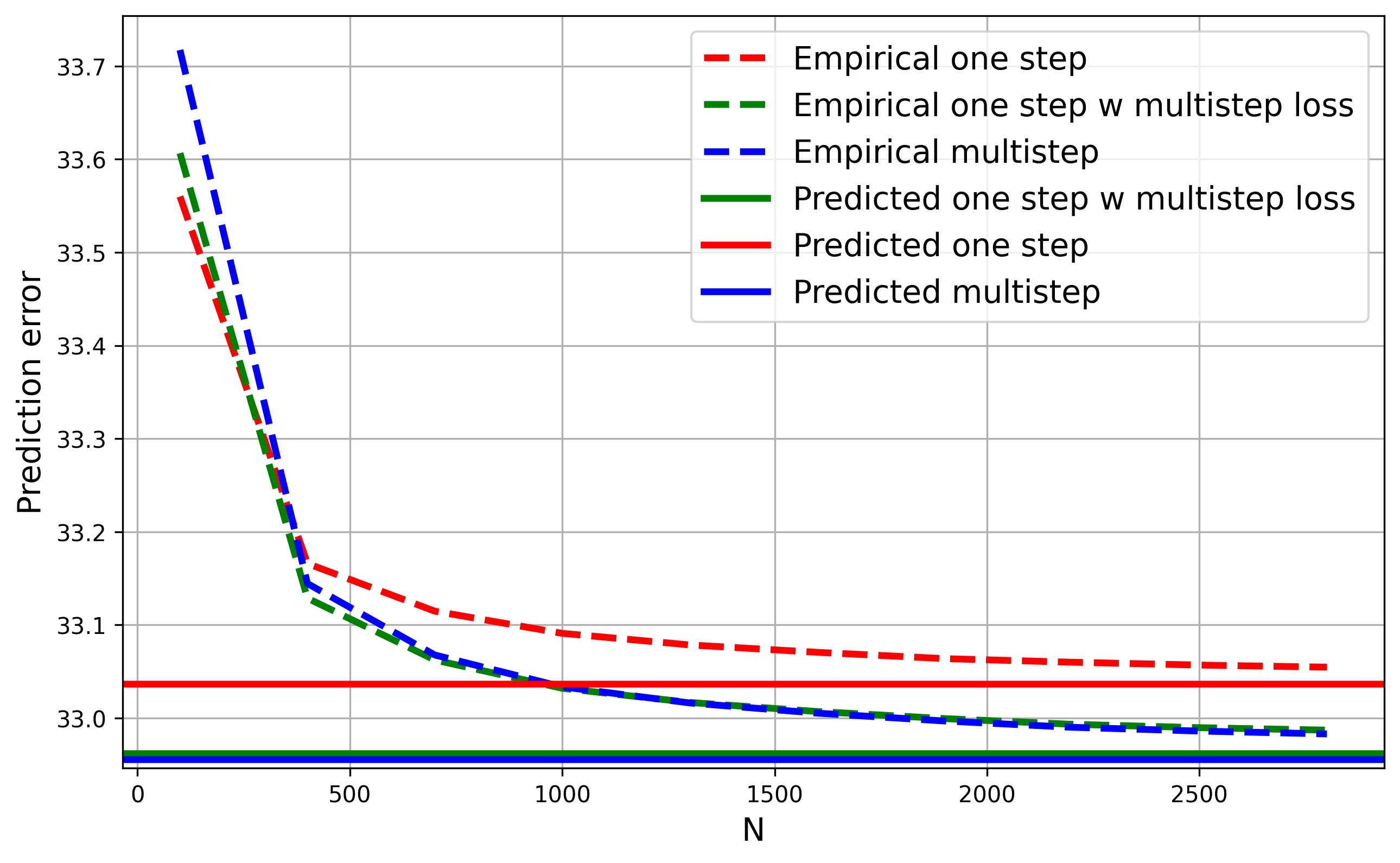}
    \includegraphics[width=0.32\linewidth]{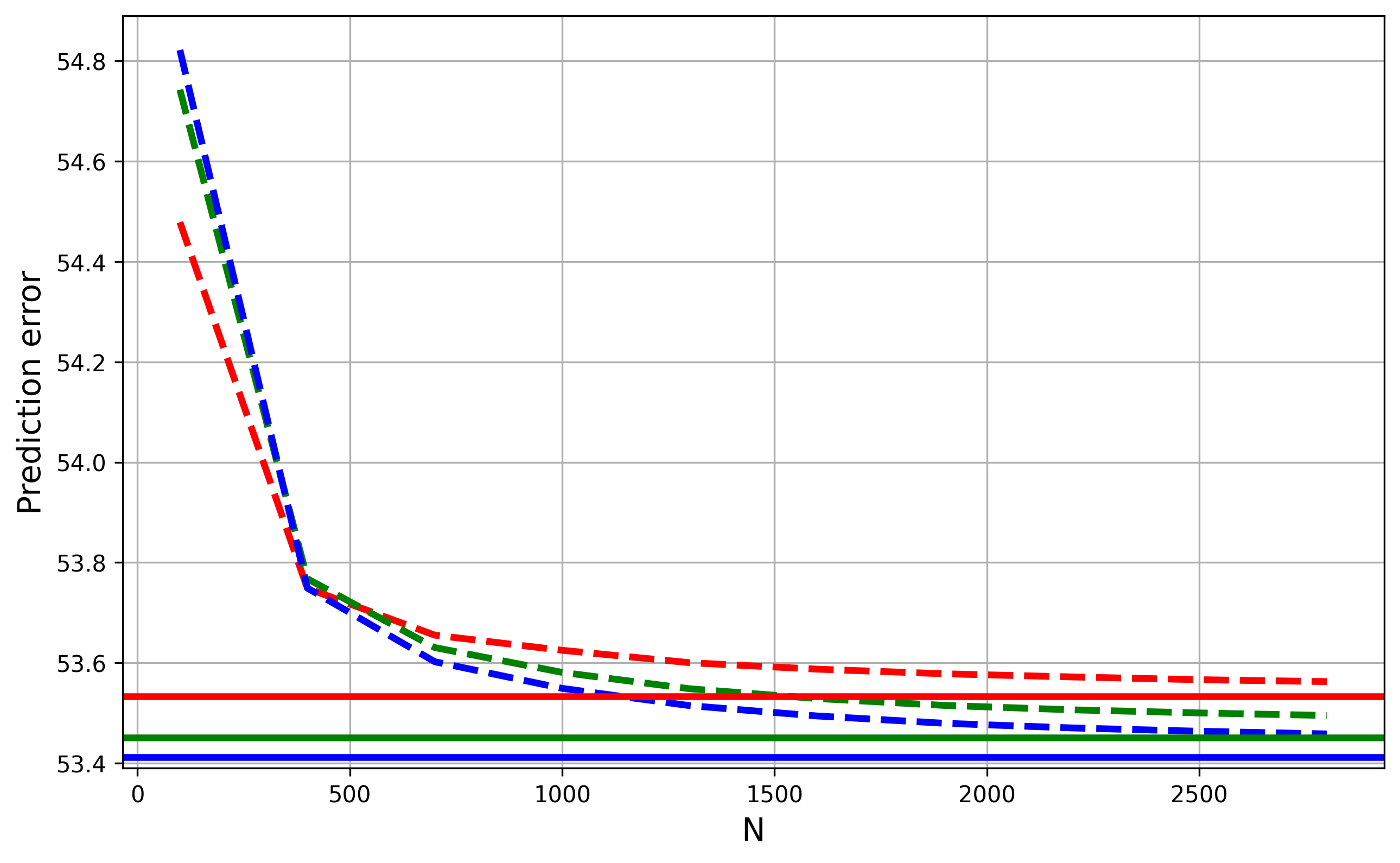}
    \includegraphics[width=0.32\linewidth]{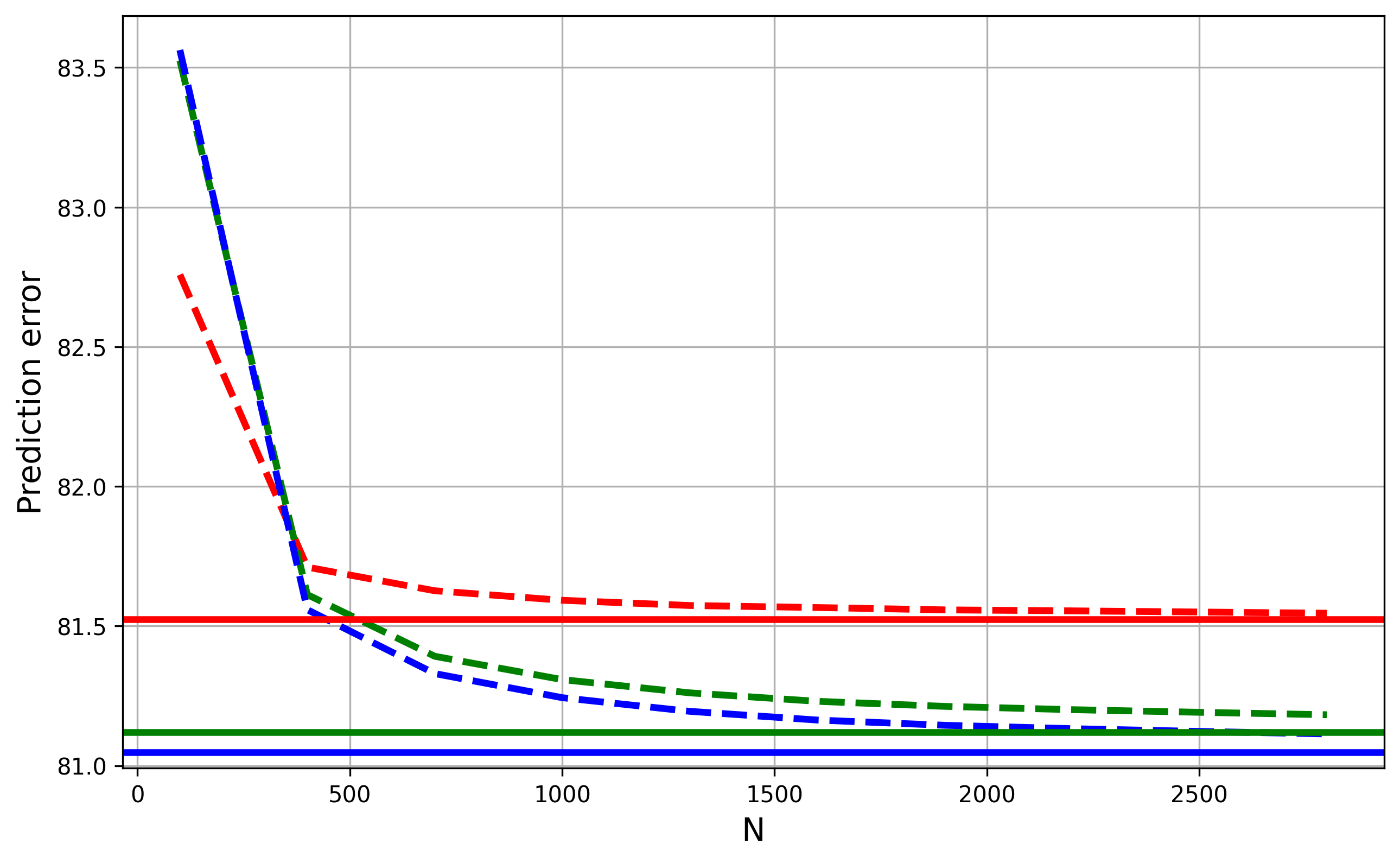}
      \caption{Convergence of $\E[L(\hat f_H)]$ to the irreducible prediction errors given in \Cref{prop: multi-step misspecified} (multi-step predictor),  \Cref{prop: single-step misspecified} (single-step rollout), and \Cref{prop: single-step w/ multi-step loss misspecified} (intermediate predictor) for the system defined by \Cref{eq: numerical ex system misspecified} with $a = 0.5, 0.75,0.9$ (left to right) and horizon $H=5$.}
      \vspace{-16pt}
    \label{fig: misspecified sweep a}
\end{figure*}

The irreducible error for the multi-step predictor is characterized in the following proposition.
\begin{proposition}[Proposition~IV.1 of \cite{CDCPaper}]
    \label{prop: multi-step misspecified}
    The irreducible error for the multi-step predictor $\hat G_N^{MS}$ is given by 
    \begin{align*}
         % \lim_{N\rightarrow \infty} \E[B(\hat G_N^{MS})] 
         B(\hat G_N^{MS}) \!=\! \trace(\Phi (\Sigma_{\hat x} - \Sigma_{\hat x} C^\top \Sigma_y^{-1} C \Sigma_{\hat x}) \Phi^\top) \! +\! \norm{\Gamma_e}_F^2, 
    %\!+\! \varepsilon_N, 
    % \end{align*}
    % where 
    % \begin{align*}
    %     &\lim_{N\to \infty} \!\!\! N \!\e_{D_N}\! \brac{\varepsilon_N } \\
    %     &=\! \trace(\Phi \Sigma_{\hat X} \Phi^\top\!)\! \trace(D_e^\top \Sigma_{Y}^{-1} D_e) \!+\! \trace\paren{\Gamma_e (M_1 \otimes I_{\dy}) \Gamma_e^\top\!} \\ 
    %     &+2 \trace\Bigg(\Phi M_2 (I_H \otimes (A \Sigma_{\hat X} C^\top + K D_e D_e^\top) \Sigma_Y^{-1} D_e )\Gamma_e^\top\Bigg),
    % \end{align*}
    % \begin{align*}
    %     M_1 &\!=\!\!\bmat{\trace(I) & \trace(\Sigma_Y^{(1)}) & \dots & &\trace(\Sigma_Y^{(H-1)}) \\ 
    %     \trace(\Sigma_Y^{(1)}) & \trace(I) & \trace(\Sigma_Y^{(1)}) & \!\!\dots\!\! & \!\!\trace(\Sigma_Y^{(H-2)}) \!\!\\ \vdots & & &\!\!\!\ddots\!\!\! \\ \trace(\Sigma_Y^{(H-1)})\!\!\! & \dots & & &\trace(I)
    %     }, \\
    %     \Sigma_Y^{(i)} &= (C A^{i} \Sigma_{\hat X} C^\top + CA^{i-1} K D_e D_e^\top) \Sigma_Y^{-1}, \\
    %     M_2 &= \bmat{I &A & \dots & A^{H-1}}. 
    \end{align*}
    where $\Sigma_{\hat x}$ is the stationary covariance of $\hat x_t$.
\end{proposition}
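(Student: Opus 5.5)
The proof proceeds in two steps. First we show that $\hat G_N^{MS}$ converges in probability to a deterministic limit $\bar G^{MS}$, the population least-squares solution. Then we substitute this limit into the loss decomposition derived just above the statement.

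With $B=0$, the multi-step estimator is the unconstrained least-squares solution
\[
\hat G_N^{MS} = \Big(\sum_{t} y_{t+1:t+H} y_t^\top\Big)\Big(\sum_t y_t y_t^\top\Big)^{-1}.
\]
By ergodicity of the stationary Gaussian process $\{(\hat x_t, y_t)\}$ (Birkhoff--Khinchin), both normalized sums converge almost surely. The limits are $\bar\E[y_{t+1:t+H} y_t^\top]$ and $\Sigma_y \succ 0$. Hence $\hat G_N^{MS} \probconv \bar G^{MS} := \bar\E[y_{t+1:t+H}y_t^\top]\Sigma_y^{-1}$.

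To evaluate the cross-covariance we use $y_{t+1:t+H} = \Phi\hat x_t + G^\star y_t + \Gamma_e e_{t+1:t+H}$. The future innovations $e_{t+1:t+H}$ are independent of $y_t$. Also, $\bar\E[\hat x_t y_t^\top] = \Sigma_{\hat x}C^\top$, because $e_t$ is independent of $\hat x_t$. Therefore
\[
\bar G^{MS} = \Phi\Sigma_{\hat x}C^\top\Sigma_y^{-1} + G^\star,
\qquad \text{with } \Sigma_y = C\Sigma_{\hat x}C^\top + D_eD_e^\top .
\]

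Next we pass the limit through the loss, $B(\hat G_N^{MS}) = \lim_N \E L(\hat G_N^{MS})$. The loss $L(G)$ is a quadratic polynomial in $G$, so it is continuous, and $L(\hat G_N^{MS}) \probconv L(\bar G^{MS})$. Converting this to convergence of expectations needs uniform integrability of $L(\hat G_N^{MS})$, i.e.\ a uniform bound on $\E\|\hat G_N^{MS}\|^{2+\delta}$. This is the only genuinely technical step. I would handle it as the paper does for the earlier propositions, via Vitali's convergence theorem. The needed moment bound comes from Gaussian concentration of the empirical covariance $\frac1N\sum y_ty_t^\top$ around $\Sigma_y$, which controls the inverse moments of its smallest eigenvalue.

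Finally we substitute $G^\star - \bar G^{MS} = -\Phi\Sigma_{\hat x}C^\top\Sigma_y^{-1}$ into
\[
L(G) = \bar\E\|\Phi\hat x_t + (G^\star - G)y_t\|^2 + \|\Gamma_e\|_F^2 .
\]
Write $P := \Sigma_{\hat x}C^\top\Sigma_y^{-1}$. The first term becomes
\[
\trace\big(\Phi(\Sigma_{\hat x} - 2PC\Sigma_{\hat x} + P\Sigma_yP^\top)\Phi^\top\big).
\]
Since $P\Sigma_yP^\top = \Sigma_{\hat x}C^\top\Sigma_y^{-1}C\Sigma_{\hat x}$, this equals
\[
\trace\big(\Phi(\Sigma_{\hat x} - \Sigma_{\hat x}C^\top\Sigma_y^{-1}C\Sigma_{\hat x})\Phi^\top\big).
\]
This is the conditional-covariance (Schur complement) form of the residual of projecting $\Phi\hat x_t$ onto $y_t$. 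Adding $\|\Gamma_e\|_F^2$ gives the claimed expression.
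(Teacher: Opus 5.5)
Your proposal is correct and is essentially the paper's argument: consistency of the least-squares estimator via Birkhoff--Khinchin and Slutsky, identification of the limit $G^\star + \Phi\Sigma_{\hat x}C^\top\Sigma_y^{-1}$, and Vitali's theorem to pass to $\lim_N \mathbf{E}\, L(\hat G_N^{MS})$. The differences are cosmetic. You compute the population cross-covariance directly instead of splitting off the vanishing remainder $\tilde E_{MS}$, and you write out the Schur-complement algebra that the paper leaves implicit. Your uniform-integrability sketch would also need a small-ball bound on $\lambda_{\min}$ of the empirical covariance, not just concentration, though the paper does not justify this step at all.
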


\begin{proof}
    
The least squares identification error is 
% \begin{align*}
%     \hat G = \sum_{t=1}^{N-H} y_{t+1:t+H} y_t^\top \paren{\sum_{t=1}^{N-H} y_t y_t^\top}^{-1},
% \end{align*}
% and thus
\begin{align*}
    \hat G_N^{MS}\! -\! G^\star \!=\! \sum_{t=1}^{N-H} (\Phi \hat x_t + \Gamma_e e_{t+1:t+H}) y_t^\top \left(\sum_{t=1}^{N-H} y_t y_t^\top\right)^{-1}.
\end{align*}
%\Bruce{These arguments need to be made concrete, but take it for now}
% By the law of large numbers paired with Slutsky's theorem, we can substitute appearances of this gap appearing in the asymptotic error characterization with
% \begin{align*}
%     \hat G_N - G^\star \approx \frac{1}{N-H}\sum_{t=1}^{N-H} (\Phi \hat x_t + \Gamma_e e_{t+1:t+H}) y_t^\top \Sigma_y^{-1}
% \end{align*}
Expanding $y_t = C \hat x_t + D_e e_t$, this becomes
\begin{align} \label{eq: ems breakdown}
    &\Phi \left (\sum_{t=1}^{N-H}\hat x_t \hat x_t^\top \right )C^\top\left(\sum_{t=1}^{N-H} y_t y_t^\top\right)^{-1} + \tilde E_{MS},
\end{align}
where 
\begin{align} \label{eq: ems}
    &\tilde E_{MS} \notag\\
    &\triangleq \left (\sum_{t=1}^{N-H} \Phi \hat x_t e_t^\top D_e^\top  \!+\!   \Gamma_e e_{t+1:t+H}  y_t^\top \right )\left(\sum_{t=1}^{N-H} y_t y_t^\top\right)^{-1}.
\end{align}
By Slutsky's theorem and the Birkhoff Khinchin Theorem, 
\begin{align} \label{eq: multistep ls error bias}
    & \Phi \left (\sum_{t=1}^{N-H}\hat x_t \hat x_t^\top \right )C^\top\left(\sum_{t=1}^{N-H} y_t y_t^\top\right)^{-1} \probconv \Phi \Sigma_{\hat x} C^\top \Sigma_y^{-1}.
\end{align}
Plugging this in to $L(\hat G_N^{MS})$, applying Vitali Convergence, and noting that terms which scale with $\tilde E_{MS}$ vanish in the limit, it follows that 
\begin{align*}
&\lim_{N\to\infty} \E L(\hat G_N^{MS}) \\ &= \trace(\Phi (\Sigma_{\hat x} - \Sigma_{\hat x} C^\top \Sigma_y^{-1} C \Sigma_{\hat x}) \Phi^\top) + \norm{\Gamma_e}_F^2.
\end{align*}
\end{proof}
The above proposition quantifies exactly how the loss of information from partial system observations (through $C \neq I$ or $D_v \neq 0$) affects the asymptotic bias induced by the multi-step predictor. It shows that this bias scales with the prediction horizon $H$ as $A^H$, suggesting worse prediction performance for systems with larger spectral radius $\rho(A)$. 

We can compare this with the irreducible error associated with the single-step predictor, which is characterized as follows.
\begin{proposition}[Proposition~IV.2 of \cite{CDCPaper}]
    \label{prop: single-step misspecified}
    The irreducible error for the single-step predictor $\hat G_N^{SS}$ is given by  
    % \begin{align*}
    %     % \lim_{N \rightarrow \infty} \E [B(\hat G_N^{SS})]
    %     B(\hat G_N^{SS}) &= \trace((\Phi + (G^\star - G_\infty)C) \Sigma_{\hat x} (\Phi + (G^\star - G_\infty)C)^\top) \\
    %     &+ \trace((G^\star - G_\infty) D_e D_e^\top (G^\star - G_\infty)^\top) + \norm{\Gamma_e}_F^2, 
    % \end{align*}
    \begin{align*}
        % \lim_{N \rightarrow \infty} \E [B(\hat G_N^{SS})]
        B(\hat G_N^{SS}) &= \trace((\Phi + (G^\star - \tilde G)C) \Sigma_{\hat x} (\Phi + (G^\star - \tilde G)C)^\top) \\
        &+ \trace((G^\star - \tilde G) D_e D_e^\top (G^\star - \tilde G)^\top) + \norm{\Gamma_e}_F^2, 
    \end{align*}
    where 
    \begin{align}
        \label{eq: multi-step bias characterization}
        \tilde{G} = \bmat{ C A \Sigma_{\hat x} C^\top \Sigma_y^{-1} \\ \vdots \\ (C A \Sigma_{\hat x} C^\top \Sigma_y^{-1})^{H} },
    \end{align}
    and $\Sigma_x$ is the stationary covariance of $x_t$. 
\end{proposition}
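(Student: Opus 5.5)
The plan is to mirror the proof of \Cref{prop: multi-step misspecified}, with one difference: the single-step estimator first converges to a one-step limit, and the composed predictor $\hat G_N^{SS}$ is then a continuous function of that limit. In the misspecified setting with $B=0$, the single-step least-squares problem \eqref{eq: single-step LS} reduces to
\[
\hat G_y = \sum_{t=1}^{N-1} y_{t+1} y_t^\top \Big(\sum_{t=1}^{N-1} y_t y_t^\top\Big)^{-1},
\]
and the rolled-out predictor is $\hat G_N^{SS} = \bmat{\hat G_y^\top & (\hat G_y^2)^\top & \cdots & (\hat G_y^H)^\top}^\top$.

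The first step is to identify the limit of $\hat G_y$. I will substitute the innovations form $y_{t+1} = C\hat x_{t+1} + D_e e_{t+1} = CA\hat x_t + CKD_e e_t + D_e e_{t+1}$. The cross moments then satisfy $\bar\E[y_{t+1}y_t^\top] = CA\Sigma_{\hat x}C^\top + CKD_eD_e^\top$. This uses the independence of $e_t$ from $\hat x_t$ and of $e_{t+1}$ from the past.

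I next need to reconcile this with the stated $CA\Sigma_{\hat x}C^\top\Sigma_y^{-1}$. Since $\hat x_{t+1} = A\hat x_t + KD_e e_t$, the covariance $\bar\E[\hat x_{t+1} y_t^\top]$ equals $A\Sigma_{\hat x}C^\top + KD_eD_e^\top$. The stated expression should therefore be read with $\Sigma_{\hat x}$ interpreted accordingly, or the $KD_eD_e^\top$ term should be absorbed through the Kalman identity $K D_e D_e^\top = ASC^\top$ together with $\Sigma_x = \Sigma_{\hat x} + S$. With $\Sigma_x$ in place of $\Sigma_{\hat x}$, this gives $CA\Sigma_x C^\top\Sigma_y^{-1}$, which explains why $\Sigma_x$ is mentioned in the statement. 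I will present the limit as $\bar\E[y_{t+1}y_t^\top]\Sigma_y^{-1}$, verify the identity $A\Sigma_{\hat x}C^\top + KD_eD_e^\top = A\Sigma_x C^\top$, and then match the notation of \eqref{eq: multi-step bias characterization}.

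With the limit in hand, the Birkhoff–Khinchin theorem together with Slutsky's theorem gives $\hat G_y \probconv G_1 \triangleq CA\Sigma_{x}C^\top\Sigma_y^{-1}$. Because matrix powers are continuous, the continuous mapping theorem yields $\hat G_N^{SS} \probconv \tilde G$, where $\tilde G$ stacks the powers $G_1^k$ for $k=1,\dots,H$.

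The final step plugs this into the loss decomposition derived before \Cref{prop: multi-step misspecified}:
\[
L(\hat G) = \norm{(\Phi + (G^\star-\hat G)C)\Sigma_{\hat x}^{1/2}}_F^2 + \norm{(G^\star-\hat G)D_e}_F^2 + \norm{\Gamma_e}_F^2 .
\]
This expression is a continuous quadratic function of $\hat G$, so $L(\hat G_N^{SS}) \probconv$ the claimed expression. Convergence in probability does not by itself give convergence of $\E L(\hat G_N^{SS})$, which is what $B(\hat G_N^{SS})$ requires. For that I will invoke Vitali's convergence theorem, as in the earlier proofs, which needs uniform integrability of $L(\hat G_N^{SS})$.

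I expect uniform integrability to be the main obstacle. $L$ grows like $\norm{\hat G_y}^{2H}$, and $\hat G_y$ contains an inverse sample covariance. My first approach will be a bound on $\E\norm{\hat G_y}^{p}$ for some $p > 2H$, using Gaussian tail bounds on the smallest eigenvalue of $\frac1N\sum y_ty_t^\top$ for large $N$. If that proves difficult, the fallback is to handle it as the earlier propositions do, simply appealing to Vitali and taking moment boundedness as standard for Gaussian least squares.
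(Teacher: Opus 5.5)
Your proposal is correct and follows the same route as the paper's proof. Both use the closed-form least-squares expression for $\hat G_y$, Birkhoff--Khinchin with Slutsky for its probability limit, continuity of $G \mapsto (G, G^2, \dots, G^H)$, substitution into the loss decomposition, and Vitali.

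The one step where you depart from the paper is identifying the limit, and there you are right and the paper is not. The paper writes $y_{t+1} = C(A-KC)\hat x_t + CK y_t + D_e e_{t+1}$ and asserts that $CK + C(A-KC)\Sigma_{\hat x}C^\top\Sigma_y^{-1}$ equals $CA\Sigma_{\hat x}C^\top\Sigma_y^{-1}$. But since $\Sigma_y - C\Sigma_{\hat x}C^\top = D_eD_e^\top$,
\[
CK + C(A-KC)\Sigma_{\hat x}C^\top\Sigma_y^{-1} = \bigl(CA\Sigma_{\hat x}C^\top + CKD_eD_e^\top\bigr)\Sigma_y^{-1} = CA\Sigma_x C^\top\Sigma_y^{-1},
\]
which is exactly your expression. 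The dropped term $CKD_eD_e^\top\Sigma_y^{-1}$ vanishes only if $CK = 0$. So $\tilde G$ should be built from $\Sigma_x$, as the otherwise unused phrase ``$\Sigma_x$ is the stationary covariance of $x_t$'' in the statement suggests. This is also the only reading under which the identity $CA\Sigma_{\hat x}C^\top\Sigma_y^{-1} = \bar\E[y_{t+1}y_t^\top]\Sigma_y^{-1}$ used in \Cref{lemma: spectral radius of CK+KB} holds. It also reproduces the value $0.99$ in \Cref{example}: there $CA\Sigma_xC^\top\Sigma_y^{-1}\approx 0.989$, by my own rough calculation, while the $\Sigma_{\hat x}$ version is about $0.97$.

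Finally, the paper applies Vitali without verifying uniform integrability. Your plan to bound $\E\|\hat G_y\|^p$ for some $p > 2H$, via lower-tail bounds on the smallest eigenvalue of the sample covariance, addresses a step the paper leaves implicit.
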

\begin{proof}
   Since we consider the setting without inputs, we can write 
   \begin{align*}
       \hat{G}_N^{SS} &= \bmat{\hat G_y \\ \hat G_y^2 \\ \vdots \\ \hat G_y^H}
   \end{align*} where, from the normal equations for the least squares estimator,  
\begin{align*}
    &\hat G_y = \sum_{t=1}^{N-1} y_{t+1} y_t^\top \paren{\sum_{t=1}^{N-1} y_t y_t^\top}^{-1}.
\end{align*}
Expanding $y_{t+1}$ and $y_t$, this becomes 
\begin{align*}
    &CK \!+\!   C(A\!-\!KC) \paren{\sum_{t=1}^{N-1}\hat x_t \hat x_t^\top} C^\top \paren{\sum_{t=1}^{N-1} y_t y_t^\top}^{-1} \!+ \!\tilde E_{SS},
\end{align*}
where 
\begin{align*}
    &\tilde E_{SS} \!\triangleq \sum_{t=1}^{N-1} \big ( \left (C(A-KC \right) \hat x_t e_t^\top D_e^\top \\
    &+ D_e e_{t+1} y_t^\top + D_e e_{t+1}e_t^\top D_e^\top \big ) \paren{\sum_{t=1}^{N-1} y_t y_t^\top}^{-1}. 
\end{align*}
By Slutsky's theorem and the Birkhoff Khinchin theorem, 
\begin{align*}
    &CK +   C(A-KC) \paren{\sum_{t=1}^{N-1}\hat x_t \hat x_t^\top} C^\top \paren{\sum_{t=1}^{N-1} y_t y_t^\top}^{-1} \\
    & \probconv C A \Sigma_{\hat x} C^\top \Sigma_y^{-1}.
\end{align*}

Plugging this in to $L(\hat G_N^{SS})$, applying Vitali Convergence, and noting that terms which scale with $\tilde E_{SS}$ vanish in the limit, it follows that 
\begin{align*}
&\lim_{N\to\infty} \E L(\hat G_N^{SS}) \\
&= \trace \left((\Phi + (G^\star - \tilde G)C) \Sigma_{\hat x} (\Phi + (G^\star - \tilde G)C)^\top \right) \\
        &+ \trace \left ((G^\star - \tilde G) D_e D_e^\top (G^\star - \tilde G)^\top \right) + \norm{\Gamma_e}_F^2.
\end{align*}

\end{proof}

The above proposition quantifies how the loss of information from partial system observations affects the asymptotic bias induced by the single-step predictor. It shows that this bias scales with the prediction horizon $H$ as $(C A \Sigma_{\hat x} C^\top \Sigma_y^{-1})^{H}$. As in the case of the direct multi-step predictor, this suggests worse prediction performance for systems with larger spectral radius $\rho(A)$. However, unlike the case of the multi-step predictor, this scaling depends on $C, \Sigma_y,$ and $\Sigma_{\hat x}$, meaning that the quality of system observations affects how the bias scales with the horizon. See \Cref{example} for a numerical example where this dependence leads to prohibitive compounding error.

In the case of the intermediate predictor, the irreducible error can be written as a solution of a constrained optimization problem.
\begin{proposition}
    \label{prop: single-step w/ multi-step loss misspecified}
    Let $\hat G_N^{I}$ be the single-step predictor with a multistep loss defined as in \eqref{eq: multistep LS} with the additional constraint that $\lVert G_N^I \rVert_F < R$ for some $R \in \R$ for all $N$. The irreducible error solves the optimization problem  
    % \begin{align*}
    %     B(\hat G_N^{I}) &= \min_{G \in S}\trace((\Phi + (G^\star - G)C) \Sigma_{\hat x} (\Phi + (G^\star - G)C)^\top) \\
    %     &+ \trace((G^\star - G) D_e D_e^\top (G^\star - G)^\top) + \norm{\Gamma_e}_F^2
    % \end{align*}
    \begin{align*}
        B(\hat G_N^{I}) &= \min_{G \in S}L(G)
    \end{align*}
    where 
   $S$ is defined as in \eqref{eq: multistep loss} and we assume $R$ is chosen large enough that $\argmin _{G \in S} L(G) \subset \{G \mid \lVert G \rVert_F < R \} $. 
\end{proposition}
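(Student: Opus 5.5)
This is an M-estimation consistency argument on a compact constraint set. The intermediate predictor minimizes the empirical multi-step loss
\[
\hat L_N(G) \triangleq \frac{1}{N-H}\sum_{t=1}^{N-H}\norm{y_{t+1:t+H} - G y_t}^2
\]
over $S_R \triangleq S \cap \{G : \norm{G}_F \le R\}$. I would show that $\hat L_N$ converges uniformly on $S_R$ to the population loss $L$, and then transfer this to the limiting expected loss $B(\hat G_N^I) = \lim_{N\to\infty}\E L(\hat G_N^I)$.

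\textbf{Step 1: rewrite the loss as a quadratic form in $G$.} Since $B=0$, expanding the square gives
\[
\hat L_N(G) = \trace(\hat \Sigma_{+}) - 2\trace(G\hat\Sigma_{y,+}) + \trace(G \hat\Sigma_y G^\top),
\]
where
\[
\hat\Sigma_y = \tfrac{1}{N-H}\sum_t y_t y_t^\top, \quad \hat\Sigma_{y,+} = \tfrac{1}{N-H}\sum_t y_t y_{t+1:t+H}^\top, \quad \hat\Sigma_{+} = \tfrac{1}{N-H}\sum_t y_{t+1:t+H}y_{t+1:t+H}^\top.
\]
The population loss $L(G)$ from \eqref{eq: misspecified prediction error} has the same form with the stationary covariances in place of these sample averages.

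\textbf{Step 2: uniform convergence.} The process $y_t$ is stationary and ergodic, as already used in \Cref{prop: multi-step misspecified}. By the Birkhoff--Khinchin theorem each sample covariance converges almost surely to its stationary counterpart. On $S_R$ we have $\norm{G}_F\le R$, so
\[
\sup_{G\in S_R}|\hat L_N(G)-L(G)| \le c(R)\bigl(\norm{\hat\Sigma_y-\Sigma_y}+\norm{\hat\Sigma_{y,+}-\Sigma_{y,+}}+|\trace(\hat\Sigma_+-\Sigma_+)|\bigr) \to 0
\]
almost surely.

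\textbf{Step 3: convergence of the loss values.} The set $S$ is the image of the continuous polynomial map $(G_y,G_u)\mapsto$ the block matrix in \eqref{eq: multistep loss}. Since $\norm{G}_F \ge \norm{G_y}_F$, the preimage of the ball is bounded, so $S_R$ is compact. Then $L$ attains its minimum on $S_R$. By the assumption on $R$, this minimum equals $\min_{G\in S}L(G)$; note that the argmin lies in the open ball, which is what makes the constraint inactive.

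The standard sandwich argument applies. Let $G^\circ$ be a minimizer. Then
\[
L(\hat G_N^I) \le \hat L_N(\hat G_N^I) + o(1) \le \hat L_N(G^\circ) + o(1) = L(G^\circ) + o(1),
\]
while $L(\hat G_N^I)\ge L(G^\circ)$ trivially. Hence $L(\hat G_N^I)\to \min_{G\in S}L(G)$ almost surely.

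Notably, this avoids the harder problem of showing that $\hat G_N^I$ itself converges, which would need a unique minimizer. I expect the main obstacle to be exactly this: the argmin over $S$ may be non-unique or nonconvex. The loss-value formulation sidesteps it.

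\textbf{Step 4: convergence in expectation.} To pass from almost-sure convergence to convergence of $\E L(\hat G_N^I)$, I would use Vitali's theorem, as in the earlier proofs. On $S_R$,
\[
L(G)\le 2\trace(\Sigma_+) + 2R^2\trace(\Sigma_y)
\]
is a deterministic bound. So $L(\hat G_N^I)$ is uniformly bounded, hence uniformly integrable, and bounded convergence gives
\[
B(\hat G_N^I)=\lim_{N\to\infty}\E L(\hat G_N^I)=\min_{G\in S}L(G).
\]
The boundedness assumption on $\norm{G_N^I}_F$ is precisely what supplies this dominating bound.
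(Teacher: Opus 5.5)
Your proposal is correct and follows essentially the same route as the paper: uniform convergence of the empirical multi-step loss to $L$ on the norm-bounded slice of $S$, then the three-term sandwich $L(\hat G_N^I)-\min_{G\in S}L(G)\le 2\sup_{G\in S_R}\abs{\hat L_N(G)-L(G)}$, then Vitali's theorem to pass to expectations. The paper cites Lemma~2.4 of Newey--McFadden (an i.i.d.\ uniform law) and simply asserts uniform integrability; your reduction to sample covariances under the ergodic theorem, your explicit compactness of $S_R$, and your deterministic bound $L(G)\le 2\trace(\Sigma_+)+2R^2\trace(\Sigma_y)$ supply those steps more carefully for the time-series setting.
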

Note that there is no closed form solution to the optimization problem. Also note that $R$ may be chosen arbitrarily large, so the restriction to the ball $\{G \mid \lVert G \rVert_F < R \}$ is practically inconsequential. 
\label{proof: single-step w/ multi-step loss misspecified}
\begin{proof}
    Define 
    \begin{align*}
        L_N(G) = \frac{1}{N-H-1}\sum_{t=1}^{N-H} \norm{y_{t+1:t+H} - G y_t}^2
    \end{align*}
    so that 
    \begin{align*}
        \hat G_N^I \in \argmin_{G \in S, \lVert G \rVert_F <R} L_N(G). 
    \end{align*}
    Let 
    \begin{align*}
        L^* = \min_{G \in S} L(G). 
    \end{align*}
    From Lemma 2.4 of \cite{NeweyMcFadden1999}, 
    \begin{align*}
        \sup_{G \in S, \lVert G \rVert_F <R} \abs{L_N(G) - L(G)} \probconv 0.
    \end{align*}
     For $G^* \in \argmin_{G \in S} L(G),$
    \begin{align*}
    L(\hat G_N^I) - L^*
    &= \bigl(L(\hat G_N^I)-L_N(\hat G_N^I)\bigr) \\
    & + \bigl(L_N(\hat G_N^I)-L_N(G^*)\bigr)
       + \bigl(L_N(G^*)-L^\star \bigr) \\
    &\le 2 \sup_{\|G\|<R} |L(G)-L_N(G)|
    \;\probconv\; 0 .
    \end{align*}
    Thus, 
    \begin{align*}
        L(\hat G_N^I) \probconv L^*. 
    \end{align*}
    The final result holds by uniform integrability of $ L(\hat G_N^I)$ and Vitali's Convergence Theorem. 
\end{proof}

Though it is impossible to give a closed form solution to the optimization problem in the above problem, this characterization of the bias is useful in the next section where we compare the biases induced by the three classes of predictors. 

\subsection{Comparison}
The next proposition compares the asymptotic biases given in \Cref{prop: multi-step misspecified}, \Cref{prop: single-step misspecified}, and \Cref{prop: single-step w/ multi-step loss misspecified}.
\begin{proposition}
\label{prop: predictor ordering misspecified}
It holds that
    \begin{align*}
         B(\hat G_N^{MS}) \leq  B(\hat G_N^{I}) \leq  B(\hat G_N^{SS}).
    \end{align*}
    
\end{proposition}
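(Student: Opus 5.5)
Both inequalities will follow from comparing minimizers of the same population loss $L(G) = \bar\E\|y_{t+1:t+H} - G y_t\|^2$ over nested classes. The proof has three steps.

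\textbf{Step 1: express each bias as a value of $L$.} First, observe that $B(\hat G_N^{MS}) = \min_{G \in \R^{H\dy\times\dy}} L(G)$. The population minimizer of this unconstrained quadratic is
\[
G_{MS} = \bar\E[y_{t+1:t+H}y_t^\top]\Sigma_y^{-1} = G^\star + \Phi\Sigma_{\hat x}C^\top\Sigma_y^{-1},
\]
which is exactly the probability limit \eqref{eq: multistep ls error bias} from the proof of \Cref{prop: multi-step misspecified}. I will check that substituting it into the expression for $L(\hat G)$ (the one in terms of $\Phi$, $\Sigma_{\hat x}$, $D_e$) reproduces the formula in that proposition; this is a short completion-of-squares computation using $\Sigma_y = C\Sigma_{\hat x}C^\top + D_eD_e^\top$. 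Equivalently, the proofs of \Cref{prop: multi-step misspecified} and \Cref{prop: single-step misspecified} both show $B(\hat G_N) = L(\lim \hat G_N)$ by continuity of $L$ and Vitali. So $B(\hat G_N^{SS}) = L(\tilde G)$, where $\tilde G$ is as in \eqref{eq: multi-step bias characterization}. Finally, \Cref{prop: single-step w/ multi-step loss misspecified} gives $B(\hat G_N^{I}) = \min_{G\in S} L(G)$, with $S$ as in \eqref{eq: multistep loss} (here $B=0$, so $S$ consists of stacks $[G_y; G_y^2;\dots;G_y^H]$).

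\textbf{Step 2: the first inequality.} Since $S \subseteq \R^{H\dy\times\dy}$, we have $\min_{\R^{H\dy\times \dy}} L \le \min_S L$, which gives $B(\hat G_N^{MS}) \le B(\hat G_N^{I})$.

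\textbf{Step 3: the second inequality.} Set $\tilde G_y := CA\Sigma_{\hat x}C^\top\Sigma_y^{-1}$. This is the probability limit of the single-step estimate $\hat G_y$ shown in the proof of \Cref{prop: single-step misspecified}. Then $\tilde G = [\tilde G_y;\tilde G_y^2;\dots;\tilde G_y^H]$ belongs to $S$, so $B(\hat G_N^{I}) = \min_S L \le L(\tilde G) = B(\hat G_N^{SS})$.

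\textbf{Main obstacle.} The one technical point is the hypothesis of \Cref{prop: single-step w/ multi-step loss misspecified}: the intermediate estimator is restricted to the ball $\|G\|_F<R$, and the minimum over $S$ must be attained inside it. That proposition assumes $R$ is large enough that $\argmin_S L$ lies in the ball, so $\min_S L$ is well defined and equals $B(\hat G_N^{I})$. For Step 3 we do not need $\tilde G$ itself to lie in the ball. The inequality $\min_S L \le L(\tilde G)$ holds for any element of $S$, and the minimum is attained by assumption.
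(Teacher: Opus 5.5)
Your proposal is correct and is essentially the paper's own argument: the three biases are $\min_{G\in\mathbb{R}^{Hd_y\times d_y}}L(G)$, $\min_{G\in S}L(G)$, and $L(\tilde G)$ for the same population loss $L$, and the ordering follows from the nesting $\mathbb{R}^{Hd_y\times d_y}\supseteq S\supseteq\{\tilde G\}$. Your extra checks are correct details the paper leaves implicit: the unconstrained population minimizer reproduces the multi-step bias formula via $\Sigma_y = C\Sigma_{\hat x}C^\top + D_eD_e^\top$, and the norm-ball restriction on the intermediate estimator does not affect $\min_S L\le L(\tilde G)$.
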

This ordering follows from the observation that the irreducible errors of the multi-step, intermediate, and single-step predictors can be written as solutions of nested optimization problems. Specifically, the bias from the multi-step predictor given in \Cref{prop: multi-step misspecified} is given by 

\begin{align} \label{eq: multistep optimization prob}
    B(\hat G_N^{MS}) = \min_{G \in \R^{H\dy \times \dy}} L(G). 
\end{align}
The bias from the intermediate predictor given in \Cref{prop: single-step w/ multi-step loss misspecified} solves the same problem but with $G$ constrained to the set $S$ defined in \eqref{eq: multistep loss}. 
The bias from the single step predictor in \Cref{prop: single-step misspecified} corresponds to an evaluation of the same loss at a fixed point $\tilde G \in S$. That is,   $B(\hat G_N^{SS}) =L(\tilde{G})$. Since, $\R^{H \dy \times \dy} \supseteq S \supseteq \{\tilde G\}$, the ordering holds. 

In the case of the single-step predictor and direct multi-step predictor, we can quantify how the bias scales with the horizon $H$. As discussed earlier, the single step predictor scales with the horizon as $(C A \Sigma_{\hat x} C^\top \Sigma_y^{-1})^{H}$ and the direct multi-step predictor as $A^H$. Thus, the scaling is dictated by the spectral radii of these quantities. 
\Cref{lemma: spectral radius of CK+KB} shows that the quantity $C A \Sigma_{\hat x} C^\top \Sigma_y^{-1}$ which the estimate $\hat G_y$ converges to satisfies $\rho(C A \Sigma_{\hat x} C^\top \Sigma_y^{-1}) \leq 1$ if $\rho(A)<1$.  Despite this, $C A \Sigma_{\hat x} C^\top \Sigma_y^{-1}$ can feature a spectral radius much larger than that of $A$, as demonstrated in the following example.  
\begin{example}
    \label{example}
    \begin{figure}
    \centering
    \includegraphics[width=\linewidth]{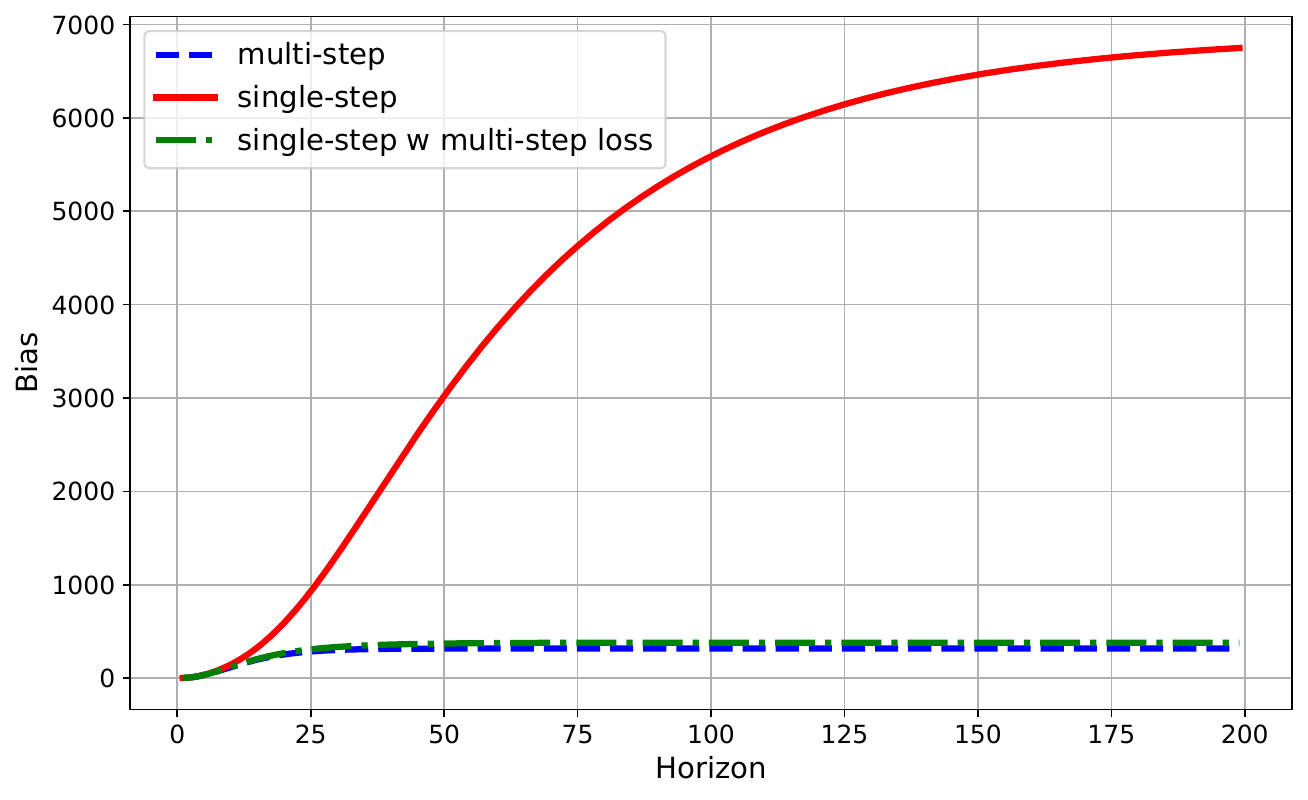}
    \caption{Comparison of the bias from the single and multi-step estimators across different horizons for the system defined in Example~\ref{example}.}
    \label{fig: bias comparison}
\end{figure}
    Consider the system defined by
    \begin{align*}
        A = \bmat{0.9 & 1.0 \\ 0.0 & 0.9}, \Sigma_w = I_2, C = \bmat{1, 0}, \Sigma_v = 1.
    \end{align*}
    We find that $\rho(C A \Sigma_{\hat x} C^\top \Sigma_y^{-1}) = 0.99$, though $\rho(A) = 0.9$. 
\end{example}
As a consequence of this fact, the gap in bias between the multi-step error and the single-step error can grow with the horizon for moderate $H$. This is illustrated for the above example in \Cref{fig: bias comparison}.

\subsection{Numerical Experiments}
To validate the error characterizations presented in \cref{sec: misspecified}, we consider the partially observed system defined by 
\begin{align} \label{eq: numerical ex system misspecified}
    A = \bmat{a & 1.0 \\ 0.0 & 0.75}, \Sigma_w = I_2
\end{align}
with $ B = 0, C = [1,0], \Sigma_v = 1$. 

In \Cref{fig: misspecified sweep a}, we estimate $\E[L(\hat f_H)]$ by averaging over $2,500$ datasets $D_N$ for $N \in \{ 1,...,3000\}$ to demonstrate convergence to the irreducible errors given in \Cref{prop: multi-step misspecified} and \Cref{prop: single-step misspecified}. In these figures, we fix $H=5$ and vary $a$ across $0.5, 0.75$, and $0.9$. For the single-step model with a multi-step loss, $\hat G$ is fit using Adam initialized from the single-step predictor fit with a single-step loss, and using a step size $1e-2$.  

For each data generating process (choice of parameter $a$) and each predictor, we see that the prediction error converges to its predicted value as $N$ increases. Comparing prediction error across predictor classes, these results exemplify the bias gap suggested in \Cref{prop: predictor ordering misspecified}. That is, for each choice of the parameter $a$, the multi-step predictor has the lowest asymptotic prediction error, followed by the single-step predictor with a multi-step loss, then the single-step predictor.

% To see this, note that in the case of scalar measurements, the gap in the bias between the single-step predictor and the multi-step predictor is lower bounded by 
% \begin{align*}
%     &  \sum_{h=\ell}^H \Bigg(\Sigma_Y \abs{CA^{h-1}K - (CK + K_B)^{h}}- \\
%     &2\! \norm{CA^{h-1}(A-KC)\Sigma_{\hat X}} \Bigg)\abs{CA^{h-1}K - (CK + K_B)^{h}}\!
% \end{align*}
% for any $\ell \geq 1$. For $\rho(A) < \abs{CK + K_B} \approx 1$, there is a regime of terms in the above sum for which $\rho(A^h) \ll 1$ but $\abs{CK + K_B}^{h-1} \approx 1$. Consequently, the single-step predictor incurs excess error over what the multi-step predictor achieves that grows approximately linearly with the horizon for moderate prediction horizons $H$. \Bruce{Not true. Replace with numerical example}

% \Bruce{Looking at $\Gamma$, $KB$ and $G^\star$, it seems that some of these quantities may simplify nicely. If not in general, then at least in the scalar case. }
\section{Control Performance}
\label{sec: control}
We evaluate the predictor classes in a closed-loop control setting in which the control inputs are selected using predictions from a model trained on a dataset of size $N$. In particular, the control input is selected via model predictive control with a planning horizon equal to the predictor's forecast horizon $H$. Given an $H$-step predictor $\hat G$, we compute an $H$-step feedback gain $K_H(\hat G)$ such that 
\begin{align*}
    \bmat{ u_t \\ \vdots \\ u_{t+H-1}} = K_H(\hat G) y_t
\end{align*}
minimizes the finite horizon quadratic cost
\begin{align*}
    &\sum_{i=1}^H \norm{\hat y_{t+i}}^2 + \norm{u_{t+i-1}}^2 \\
    &\text{s.t. } \hat y_{t+1:t+H} = \hat G \bmat{y_t \\ u_{t:t+H-1}}.
    %&\phantom{\text{s.t. }} \hat y_{t+H} = 0.
\end{align*}
At execution time, controls are applied one step at a time according to $u_t = K(\hat G) y_t$ where $K(\hat G)$ is the first $d_u$-block row of $K_H(\hat G)$. The infinite-horizon LQR cost incurred by the resulting closed-loop system
\begin{align*}
    J  (K (\hat G)) = \limsup_{T \to \infty} \E\left [\sum_{t=1}^T  y_t^\top  y_t + u_t^\top u_t\right ].
\end{align*} 
gives a measure of performance of the predictors. However, on the event that $K(\hat G)$ is not stabilizing, $J(K(\hat G))$ will be infinite-valued. To account for this, we fix a large upper bound $M \gg 0$ and evaluate performance of predictors according instead to the clipped loss
\begin{align*}
    \tilde J(K(\hat G) = -\log \paren{\exp \{-J(K(\hat G))\} + \exp\{-M\}}.
\end{align*}

\subsection{Well-Specified Setting}
In the well-specified setting, we can compare the decay rates of the clipped infinite horizon LQR costs of controllers learned from a single-step predictor with a single step loss and a single step predictor with a multistep loss.
\begin{proposition}
\label{prop: LQR decay well-specified}

Assume $H$ is sufficiently large so that $\rho\paren{A + BK(G^\star)} < 1$. There exist constants $c\in \R$ and $\rho \in (0,1)$ such that 
\begin{align*}
&\lim_{N\to\infty} N\,\E\!\left[\abs{\tilde J\!\left(K(\hat G_N^{\mathrm{SS}})\right)-\tilde J\!\left(K(G^\star)\right)}\right] \\
&\le \lim_{N\to\infty} N\,\E\!\left[\abs{\tilde J\!\left(K(\hat G_N^{\mathrm{I}})\right)-\tilde J\!\left(K(G^\star)\right)}\right] + c\rho^H.
\end{align*}
\end{proposition}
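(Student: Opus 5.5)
The plan is to view both $\hat G_N^{SS}$ and $\hat G_N^{I}$ as the same smooth map $\theta \mapsto G(\theta)$ evaluated at two estimates $\hat\theta^{SS}_N$ and $\hat\theta^{I}_N$ of $\theta^\star = \VEC(\bmat{A & B})$. Here $G(\theta)$ is the autoregressive structure in \eqref{eq: multistep loss}, and $\hat\theta^{SS}_N,\hat\theta^{I}_N$ are the first block rows. Both estimators are consistent and asymptotically normal. By \eqref{eq: ss single step var} and \eqref{eq: ss_ms var} their normalized covariances are $V_{SS} = \Sigma_{x,u}^{-1}\otimes B_wB_w^\top$ and $V_I = J^{-1}\Sigma J^{-1}$. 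The Cramér--Rao argument in the proof of \Cref{prop: predictor ordering well specified} already gives $V_{SS}\preceq V_I$. The control cost is then a fixed scalar function $\phi(\theta) \triangleq \tilde J(K(G(\theta)))$. The comparison therefore reduces to a delta-method expansion of $\phi$ around $\theta^\star$, applied to two estimators with ordered covariances.

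\textbf{Step 1 (smoothness).} I will show that $\phi$ is $C^2$ in a neighborhood of $\theta^\star$. The map $\theta\mapsto G(\theta)$ is polynomial. The finite-horizon gain $K_H(\hat G)$ solves an unconstrained strictly convex quadratic program, because of the $\norm{u}^2$ penalty, so it is a rational function of $\hat G$ with nonvanishing denominator. Hence $K(\hat G)$ is smooth. Since $\rho(A+BK(G^\star))<1$ by assumption, stability persists in a neighborhood. On that neighborhood $J(K)$ is the trace of a Lyapunov solution and is therefore smooth, and $\tilde J$ is a smooth monotone transform of it. Outside the neighborhood $\tilde J$ is bounded by $M$. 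Since $\hat\theta_N$ lies outside it with probability decaying exponentially in $N$, that event contributes nothing to the limits; Vitali's theorem handles the interchange of limit and expectation, as in the earlier proofs.

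\textbf{Step 2 (gradient is $O(\rho^H)$).} Write $\phi(\hat\theta)-\phi(\theta^\star) = g^\top\delta + \tfrac12\delta^\top \mathcal H\delta + o(\norm{\delta}^2)$, with $\delta=\hat\theta-\theta^\star$. Then $g = DK(\theta^\star)^\top \nabla_K \tilde J(K(G^\star))$. The receding-horizon gain $K(G^\star)$ is the first step of a Riccati recursion of length $H$ with zero terminal cost. Exponential convergence of the Riccati recursion under stabilizability and detectability gives $\norm{K(G^\star)-K_\infty}\le c_1\rho^H$, where $K_\infty$ is the infinite-horizon LQR gain. Since $\nabla_K J(K_\infty)=0$, we get $\norm{g}\le c_2\rho^H$. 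For the same reason, $\mathcal H = DK^\top \nabla^2_K\tilde J\, DK + O(\rho^H)$, where the first term is positive semidefinite.

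\textbf{Step 3 (comparison).} The quadratic parts contribute $\tfrac12\trace(\mathcal H V)$ in the limit. With $\mathcal H_0 \triangleq DK^\top\nabla_K^2\tilde J\,DK \succeq 0$, the ordering $V_{SS}\preceq V_I$ gives $\trace(\mathcal H_0 V_{SS})\le \trace(\mathcal H_0 V_I)$. The $O(\rho^H)$ correction to $\mathcal H$ changes each side by at most $c\rho^H\max(\trace V_{SS},\trace V_I)$. I also need that the quadratic form dominates, so that $\abs{\cdot}$ agrees with the signed quantity up to these corrections; this follows because near $K_\infty$ the cost lies above its minimum.

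\textbf{Main obstacle.} The hard part is the linear term $g^\top\delta$. It has mean zero asymptotically, but under $N\E\abs{\cdot}$ it is of order $\sqrt N\,\norm g$, which does not vanish as $N\to\infty$ for fixed $H$. I would first try to handle it exactly: compare $K(G^\star)$ with the true closed-loop cost by redefining the reference point as the minimizer of $J$ within the stabilizing MPC family, so that the first-order term cancels. If that fails, I would bound $\E\abs{g^\top\delta}$ by $\norm g\sqrt{\trace V/N}$. The statement would then hold with the $c\rho^H$ slack interpreted at the $\sqrt N$ scale, or it would need to be proved for the signed expectation $N\E[\tilde J(K(\hat G_N))-\tilde J(K(G^\star))]$, where the linear term drops out and Step 3 gives exactly the claimed inequality.
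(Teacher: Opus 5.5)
Your outline follows the paper's argument step for step:
\begin{itemize}
\item discard the event where $K(\hat G_N)$ leaves a stabilizing neighborhood, using clipping plus an $o(1/N)$ tail bound for $\hat G^{SS}_N$;
\item expand to second order;
\item order the asymptotic covariances by Cram\'er--Rao and use $\nabla K^\top\nabla^2\tilde J(K^\star)\nabla K\succeq 0$;
\item absorb Riccati-convergence corrections into $c\rho^H$.
\end{itemize}
The obstacle you flag at the end is a genuine gap, and the paper does not get around it either. The expansion in its appendix lemma keeps only the two quadratic terms. It silently drops the first-order term $\nabla_{\VEC(K)}\tilde J(K(G^\star))^\top\nabla_{\VEC(G)}K(G^\star)\VEC(\hat G_N-G^\star)$. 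It then replaces $\lim_N N\E|\cdot|$ by the absolute value of a limit of expectations. So you have not missed an idea that the paper supplies. The problem is also worse than ``does not vanish.'' Write $\phi(\theta)=\tilde J(K(G(\theta)))$ and $g=\nabla\phi(\theta^\star)$. For finite $H$ one generically has $K(G^\star)\neq K_\infty$, hence $\nabla_K\tilde J(K(G^\star))\neq 0$ and generically $g\neq 0$. By the delta method, $\sqrt N(\phi(\hat\theta_N)-\phi(\theta^\star))$ converges in distribution to $\mathcal N(0,g^\top Vg)$. Fatou then gives $\liminf_N\sqrt N\,\E|\phi(\hat\theta_N)-\phi(\theta^\star)|\ge\sqrt{2/\pi}\,(g^\top Vg)^{1/2}>0$ whenever $g^\top Vg>0$. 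Hence both limits in the proposition are $+\infty$. The inequality holds only in the vacuous form $+\infty\le+\infty$, and neither your Step~3 nor the paper's trace formula gives the value of the left-hand side.

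Your proposed repairs need adjusting.
\begin{itemize}
\item \textbf{Re-centering.} Re-centering at the minimizer of $\phi$ over the MPC family fails. The estimators concentrate at $\theta^\star$, not at that minimizer, so the cost gap tends to a positive constant and $N\E[\cdot]$ still diverges.
\item \textbf{The $\sqrt N$ scale.} Here there is a clean true statement. Write $\Delta^{SS}_N,\Delta^{I}_N$ for the two cost gaps relative to $\tilde J(K(G^\star))$. Then $\limsup_N\sqrt N\,\E|\Delta^{SS}_N|\le\sqrt{2/\pi}\,(g^\top V_{SS}g)^{1/2}\le\sqrt{2/\pi}\,(g^\top V_{I}g)^{1/2}\le\liminf_N\sqrt N\,\E|\Delta^{I}_N|$. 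This uses uniform integrability for the single-step estimator and Fatou for the intermediate one, and it needs no $c\rho^H$ slack.
\item \textbf{The signed limit.} $N\E[\Delta_N]$ is what your Step~3 and the paper's formula actually describe. Even there the linear term does not drop out: it contributes $\lim_N N\,g^\top\E[\hat\theta_N-\theta^\star]$. That is finite and $O(\rho^H)$ only after you show both estimators have $O(1/N)$ bias with constants uniform in $H$. This is standard for OLS but not for the nonconvex intermediate estimator.
\item \textbf{The off-event term.} In the signed version this term loses its sign, so you would need $\mathbb{P}(\calE(\hat G^{I}_N)^C)=O(1/N)$ as well. In the $|\cdot|$ version that term is nonnegative, which is why a tail bound for $\hat G^{SS}_N$ alone sufficed in the paper.
\end{itemize}
Finally, your Step~3 remark that $|\cdot|$ agrees with the signed quantity ``because near $K_\infty$ the cost lies above its minimum'' is incorrect. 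The reference gain is $K(G^\star)\neq K_\infty$, so $\tilde J(K(\hat G_N))$ can fall below $\tilde J(K(G^\star))$. This is the same obstacle again.
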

\begin{proof}
    Fix $\epsilon>0$ so that $\rho\paren{A + BK(G^\star)} \leq 1-\epsilon$. There exists $r > 0$ such that if $\norm{K(G) - K(G^\star)} < r$, then $\rho \paren{A + BK} \leq 1 - \frac{\epsilon}{2}$. Define the event 
    \begin{align*}
        \calE (G) \triangleq \left \{ \norm{K(G) - K(G^\star)} < r \right \}. 
    \end{align*}
Then, for any predictor $\hat G_N$,
\begin{align*}
    &\lim_{N\to\infty} N\,\E\!\left[\abs{\tilde J\!\left(K(\hat G_N)\right)-\tilde J\!\left(K(G^\star)\right)}\right] \\
    &= \lim_{N\to\infty} N\,\E\!\left[\abs{\tilde J\!\left(K(\hat G_N)\right)-\tilde J\!\left(K(G^\star)\right)} \;\middle|\; \calE(\hat G_N)\right] \\
    &+ \lim_{N\to\infty} N\,\E\!\left[\abs{\tilde J\!\left(K(\hat G_N)\right)-\tilde J\!\left(K(G^\star)\right)} \;\middle|\; \calE(\hat G_N)^C\right]
\end{align*}
Note that 
\begin{align*}
    &\lim_{N\to\infty} N\,\E\!\left[\abs{\tilde J\!\left(K(\hat G_N)\right)-\tilde J\!\left(K(G^\star)\right)} \;\middle|\; \calE(\hat G_N)^C\right] \\
    &\leq \lim_{N\to\infty} 2NM \P\paren{\calE(\hat G_N)^C}. 
\end{align*}
From Theorem 5.3 of \cite{ziemann2023tutorial} , $\P\paren{\calE(\hat G_N^{SS})^C} \sim o\paren{\frac{1}{N}}$, so for the single step predictor, the complement term vanishes asymptotically. Thus, it suffices to compare the first term
\begin{align*}
    \lim_{N\to\infty} N\,\E\!\left[\abs{\tilde J\!\left(K(\hat G_N)\right)-\tilde J\!\left(K(G^\star)\right)} \;\middle|\; \calE(\hat G_N)\right]
\end{align*}
for $\hat G_N = \hat G_N^{SS}, \hat G_N^{I}$, and $\hat G_N^{MS}$. On $\calE(\hat G_N)$, $\tilde J\!\left(K(\hat G_N)\right)$ is smooth, so from \Cref{lemma: prop iv.1}, the expectation above can be written as 
\begin{align*}
    \trace \Bigl(\nabla_{\VEC (G)} K(G^\star)^T\nabla^2_{\VEC (K)} \tilde J(K^\star) \nabla_{\VEC (G)} K(G^\star) \Sigma_{\hat G_N}\Bigl)
\end{align*}
plus a term which scales as $\rho^H$ for $\rho \in (0,1)$, where $\Sigma_{\hat G_N}$ is the asymptotic variance 
\begin{align*}
    \lim_{N \rightarrow \infty} N \var(\VEC (\hat G_N - G^\star)) 
\end{align*} 
and $K^\star = \argmin_K J(K)$.
By the Cramér Rao Lower Bound, we have that the asymptotic variance of the single step predictor $\hat G_N^{SS}$ is smaller than that of the intermediate predictor in the p.s.d cone. 
Since 
\begin{align*}
    \nabla_{\VEC (G)} K(G^\star)^T\nabla^2_{\VEC (K)} \tilde J(K(^\star) \nabla_{\VEC (G)} K(G^\star) \succeq 0,
\end{align*}  the final result holds. 
\end{proof}
The above proposition states that the infinite horizon LQR cost associated with a single-step predictor decays faster than that of the intermediate predictor up to a term which decays exponentially in the horizon $H$. Comparison with the multi-step predictor is left for future work. 

\begin{figure}

        \centering
        \includegraphics[width=\linewidth]{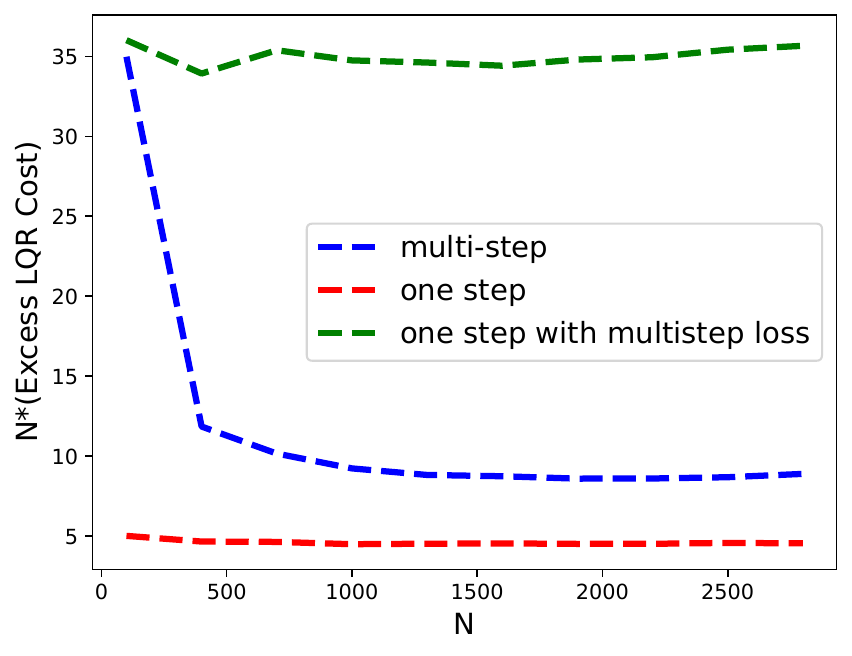}
        \label{fig:well-specified}

    \caption{Infinite-horizon LQR performance in the well-specified case. }

    %Infinite horizon LQR cost 
    \label{fig: LQRcost}
\end{figure}

In \Cref{fig: LQRcost}, we consider the closed-loop control setting above for system \eqref{eq: numerical ex system well specified} with $a = 0.9$ and $H=20$ averaged over $2,500$ datasets $D_N$ for $N \in \{ 1,...,3000\}$. As suggested by \Cref{prop: LQR decay well-specified}, the single-step predictor exhibits a faster decay rate than the intermediate predictor. However, though the intermediate predictor achieves a faster prediction error decay rate than the multi-step predictor, as shown in \cref{prop: predictor ordering well specified}, it exhibits a slower LQR cost decay rate than the multi-step in closed loop.

\subsection{Misspecified Setting}
In the previous section, we analyzed the infinite-horizon control cost induced by controllers designed using multi-step predictions obtained from (a) a single-step predictor, (b) a single-step predictor trained with a multi-step loss, and (c) a direct multi-step predictor. Under model misspecification, however, irreducible prediction bias may prevent any static state-feedback controller derived from these predictors from stabilizing the true system. In such cases, the infinite-horizon control cost is infinite, even in the limit as the dataset size $N \to \infty$. Consequently, rather than comparing asymptotic control costs, we instead compare predictors through their ability to induce stabilizing controllers. In particular, we identify regimes in which the lower-bias multi-step predictor yields a stabilizing controller, whereas the alternative predictors may fail to do so. We provide numerical examples illustrating this phenomenon and leave theoretical characterization of these regimes to future work.

\begin{figure} 
    \centering
    \begin{subfigure}{0.9\linewidth}
        \centering
        \includegraphics[width=\linewidth]{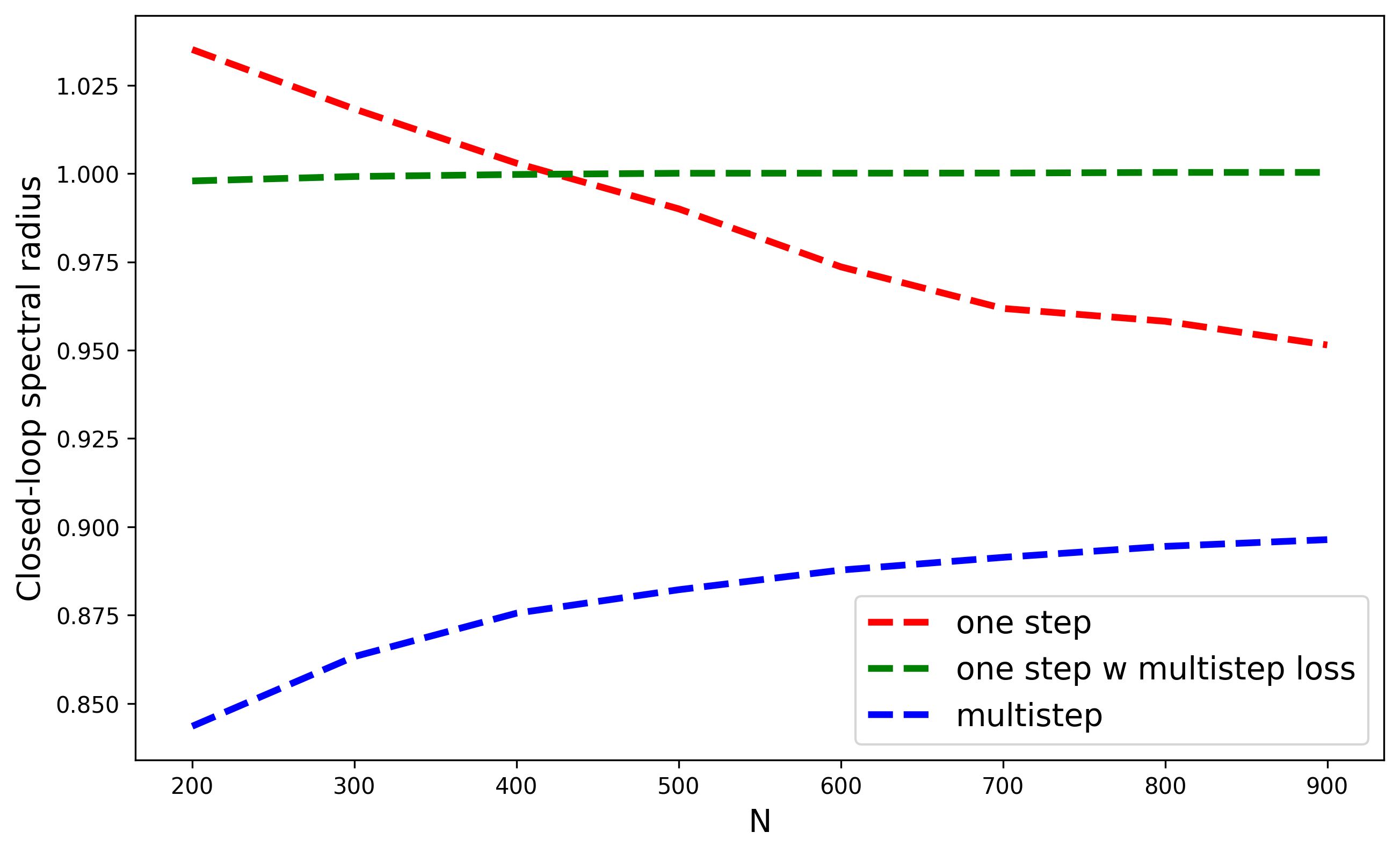}

        \caption{Closed loop spectral radius for system \eqref{eq: numerical ex system misspecified} with $a = 0.6$. }  
    \end{subfigure}
    \begin{subfigure}{0.9\linewidth}
        \centering
        \includegraphics[width=\linewidth]{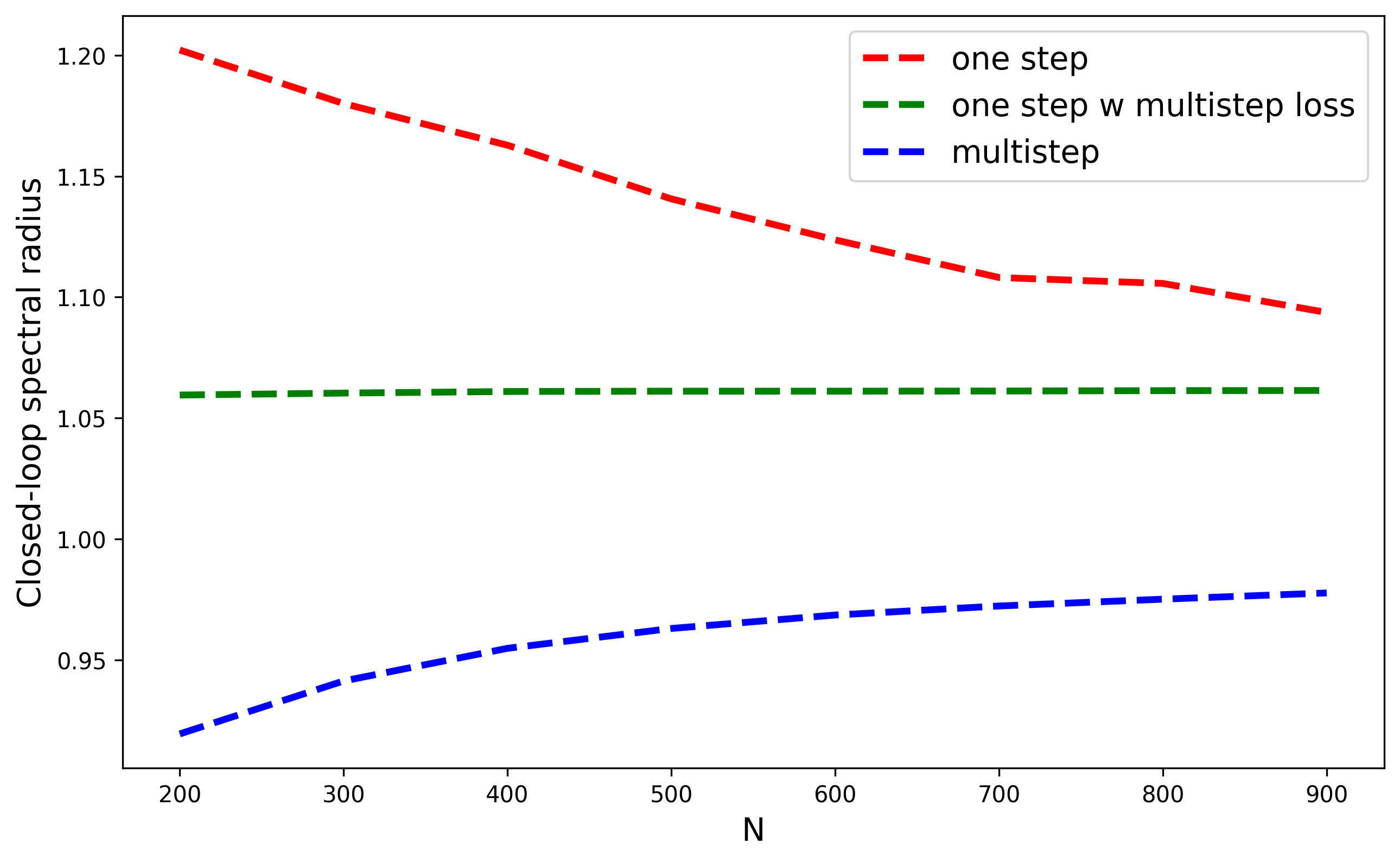}

        \caption{Closed loop spectral radius for system \eqref{eq: numerical ex system misspecified} with $a = 0.75$. }  
    \end{subfigure}
    \caption{Infinite-horizon LQR performance in the misspecified case. Closed loop spectral radius greater than 1 implies infinite LQR cost. }
    \label{fig: spectral radius}
\end{figure}

\Cref{fig: spectral radius} shows the spectral radius of the resulting closed-loop system in the misspecified setting for the system \eqref{eq: numerical ex system misspecified} with $H = 20$ averaged over 2,500 datasets. A spectral radius greater than 1 for the single-step rollout indicates closed-loop system instability, in which case the associated infinite-horizon LQR cost diverges. Panel (a) corresponds to $a = 0.6$, where both the single step and intermediate predictors yield stabilizing controllers but, for small $N$, the single step predictor does not. Panel (b) corresponds to $a = 0.75$, where stabilization is achieved only by the multi-step predictor. These results suggest that controllers induced by direct multi-step predictors are more likely to stabilize the true system than those obtained from single-step rollouts. We provide two explanations for this.

First, let $K^\star$ denote the feedback gain computed from the ground truth multi-step predictor $G^\star$. For sufficiently large planning horizon $H$, the associated closed-loop system is stable, i.e. $\rho(A + BK^\star) <1$. By continuity of the spectral radius, it follows that if a feedback gain $K$ has $\norm{K - K^\star}$ sufficiently small, then $K$ yields a stabilizing closed-loop matrix. Since $K(\hat G)$ is a continuous function of $\hat G$, it then follows that for $\norm{\hat G - G^\star}$ sufficiently small, $\rho \paren{A + BK(\hat G)} <1$. In \Cref{prop: predictor ordering misspecified}, we showed that, in the presence of misspecification, direct multi-step predictors achieve lower asymptotic prediction error than the other predictor types. Thus, they are more likely to lie within this stabilization neighborhood. 

Second, for stable systems, multi-step prediction errors are dominated by errors in the direction of the modes with the largest magnitude. Since these directions are those which govern stabilization, training predictors with a multi-step loss implicitly emphasizes accurate modeling in the directions which are most relevant for stabilization. Thus, multi-step predictors are better suited for inducing stabilizing controllers. 
\section{Nonlinear Systems}
\label{sec: nonlinear}

In Sections \ref{sec: well-specified} and \ref{sec: misspecified}, we provided a rigorous comparison of the prediction error incurred by single- and multi-step predictors. Though these results are restricted to linear systems, we provide empirical evidence that similar phenomena arise for nonlinear dynamics. Consider the system with state $x_t = \bmat{p_t \ q_t}$, initialized at $x_0 = \bmat{0 \ 0}$, and evolving as
\begin{equation}
\label{eq: nonlinear}
\begin{aligned}
p_{t+1} &= \mu p_t + w_t^{(p)}, \\
q_{t+1} &= \lambda (q_t - p_t^2) + w_t^{(q)},
\end{aligned}
\end{equation}
where $w_t^{(p)}, w_t^{(q)} \overset{\iid}{\sim} \calN(0,\sigma_w)$.

This system admits a finite-dimensional Koopman lifting (e.g. \cite{korda2018linear}) with observables $\tilde x_t = \bmat{p_t \ q_t \ p_t^2 \ 1}$, for which the dynamics are linear in $\tilde x_t$ up to noise terms:
\begin{equation}
    \label{eq: koopman}
    \begin{aligned}
        \tilde x_{t+1} &= \bmat{\mu & 0 & 0 & w_t^{(p)} \\ 0 & \lambda & -\lambda & w_t^{(q)} \\ 2\mu w_t^{(p)} & 0 & \mu^2 & \paren{w_t^{(p)}}^2 \\ 0 & 0 & 0 & 1}\tilde x_t.
    \end{aligned}
\end{equation}

We consider observations
\begin{equation}
\label{eq: koopman observations}
y_t = C \tilde x_t + v_t, \quad v_t \overset{\iid}{\sim} \calN(0,\sigma_v I_{d_y}),
\end{equation}
and, given a dataset $\calD_N = \{y_t \}_{t=1}^N$, compute the single-step predictor $\hat G_N^{SS}$, multi-step predictor $\hat G_N^{MS}$, and intermediate predictor $\hat G_N^{I}$ as described in \Cref{s:singlestep,s: multi-step,s: intermediate}.

We evaluate these predictors in both a well-specified setting ($C = I$, $\sigma_v = 0$) and a misspecified setting ($C \neq I$, $\sigma_v > 0$), analogous to \Cref{sec: well-specified,sec: misspecified}.

\begin{figure}
\centering
\begin{subfigure}{0.9\linewidth}
\centering
\includegraphics[width=\linewidth]{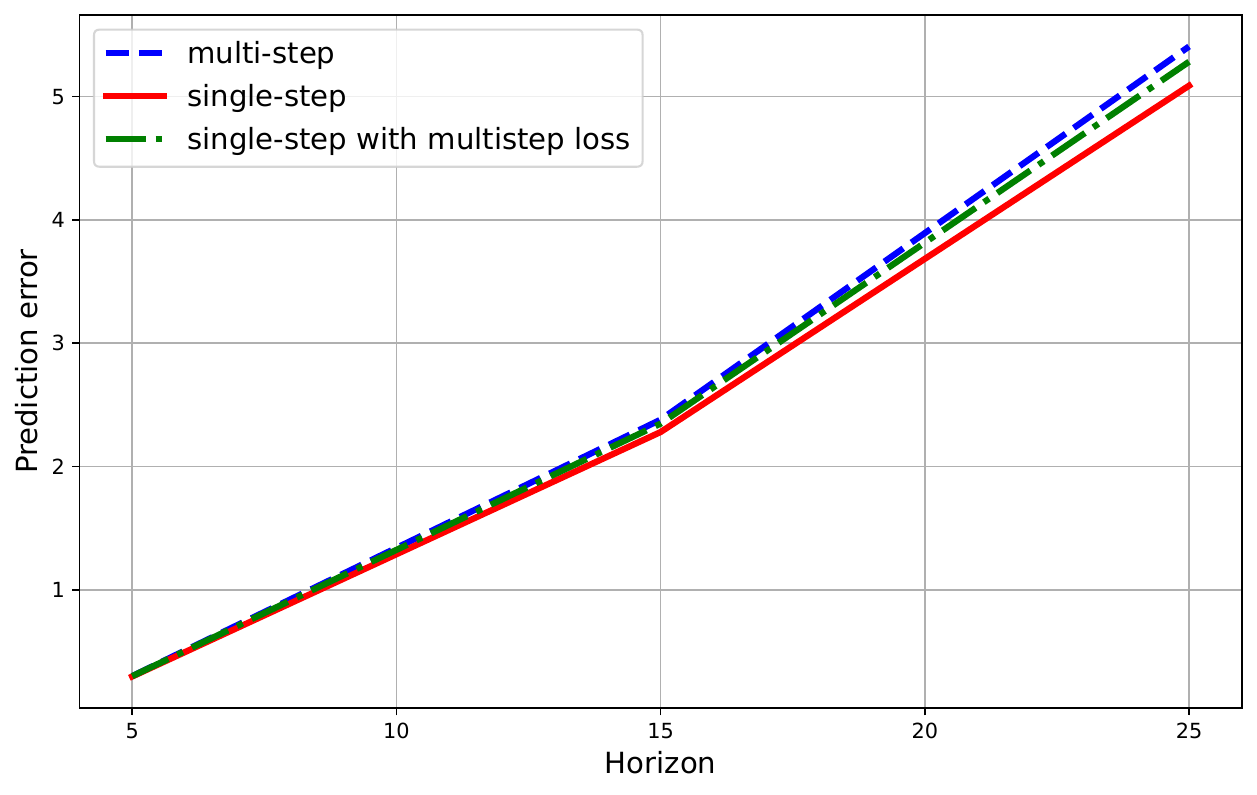}
\caption{Well-specified setting.}
\label{fig: koopman well-specified}
\end{subfigure}
\begin{subfigure}{0.9\linewidth}
\centering
\includegraphics[width=\linewidth]{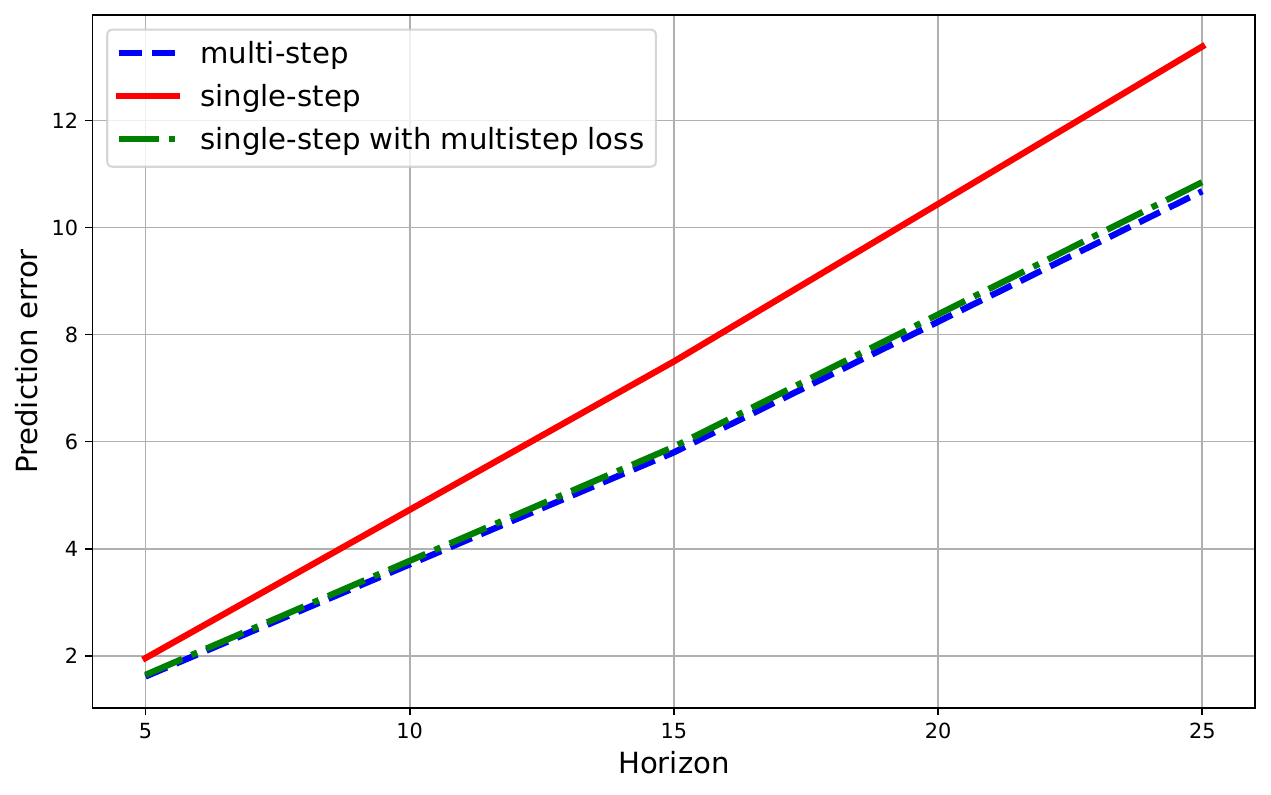}
\caption{Misspecified setting.}
\label{fig: koopman misspecified}
\end{subfigure}
\caption{Prediction error versus horizon $H$ for the system \eqref{eq: koopman observations}.}
\label{fig: nonlinear}
\end{figure}

In \Cref{fig: nonlinear}, we estimate expected mean-squared prediction error by averaging over $200$ datasets with $N=300$. We fix $\mu=\lambda=0.9$, $\sigma_w=0.1$, and vary $H \in [5,25]$. In the misspecified case, we use $C=\bmat{0 & 1 & 0 & 0}$ and $\sigma_v=0.4$. The intermediate predictor $\hat G_N^{I}$ is trained using Adam (step size $10^{-2}$), initialized from $\hat G_N^{SS}$.

The results mirror our linear theory. In the well-specified setting, \Cref{fig: koopman well-specified} shows that the single-step predictor achieves the lowest error, followed by the single-step predictor trained with a multi-step loss, and then the direct multi-step predictor, consistent with \Cref{prop: predictor ordering well specified}. In contrast, \Cref{fig: koopman misspecified} shows the reverse ordering, consistent with \Cref{prop: predictor ordering misspecified}.

These experiments suggest that the qualitative behavior predicted by our linear analysis may extend to some nonlinear systems. A rigorous treatment of this setting is left for future work.

\section{Conclusion}
In this work, we present a novel theoretical comparison of the asymptotic prediction error associated with autoregressive rollouts of single-step predictors, direct multi-step predictors, and single-step predictors fitted with multi-step losses. Our analysis offers insight into when each modeling approach is preferable. Specifically, we show that for well-specified model classes, autoregressive rollouts of single-step predictors achieve lower asymptotic prediction error. However, in the presence of model misspecification due to an incorrect Markovian assumption, multi-step predictors can significantly outperform their single-step counterparts.

These findings provide a foundation for more informed model design in learning-based control and forecasting. Promising directions for future work include: (1) extending these results to the setting of nonlinear systems and (2) analyzing predictor performance in closed-loop control in frameworks other than the LQR setting studied in this work, e.g. model based reinforcement learning.
\appendices

\section*{Acknowledgements}
This work is supported in part by NSF Award SLES-2331880, NSF CAREER award ECCS-2045834, and AFOSR Award FA9550-24-1-0102.

\bibliographystyle{IEEEtranN}
\bibliography{refs}

\section{Proofs of \cref{sec: well-specified} Results}
\label{appendix: well-specified}

\subsection{Proof of \Cref{prop: well specified multistep}}

\begin{lemma} \label{lemma: 1 prop2.1} For $\Gamma_w$ as defined in \Cref{sec: well-specified} and $M_{MS}$ as in \Cref{prop: well specified multistep}, 
\begin{align*}
&\lim_{N\to\infty}
\frac{1}{N}\,\E\Biggl\|
\sum_{t=1}^{N-H+1}
\Gamma_w w_{t:t+H-1} z_t^\top \Sigma_z^{-1/2}
\Biggr\|_F^2  \\
&=\trace\paren{\Gamma_w ((M_{MS} + H\du I_H) \otimes I_{\dx}) \Gamma_w^\top}.
\end{align*}

\end{lemma}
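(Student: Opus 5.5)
The plan is to expand the squared Frobenius norm into a double sum over time indices, reduce each cross term to a product of two second moments using Gaussianity and the independence structure of the noise, and then sum the resulting Toeplitz-type expression. Write $W_t \triangleq w_{t:t+H-1}$. Using $\norm{X}_F^2=\trace(XX^\top)$ and cyclicity of the trace,
\begin{align*}
\frac{1}{N}\E\Bigl\|\sum_{t}\Gamma_w W_t z_t^\top\Sigma_z^{-1/2}\Bigr\|_F^2
=\frac{1}{N}\sum_{s,t}\E\bigl[(z_s^\top\Sigma_z^{-1}z_t)\,(W_t^\top\Gamma_w^\top\Gamma_w W_s)\bigr].
\end{align*}
All variables are jointly Gaussian, since $x_0=0$ and $(x_t,u_t,w_t)$ are linear in Gaussian noise. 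Each summand is therefore a fourth moment, and I would evaluate it with Isserlis' theorem. By symmetry it suffices to take $t=s+k$ with $k\ge 0$.

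The key structural facts come from causality and the independence of inputs from noise. First, $W_t$ is independent of $z_s$ and of $z_t$, because $x_t$ depends only on $w_{0:t-1}$ and $u$ is independent of $w$. Second, $z_s$ is independent of $W_s$. In the Isserlis expansion, the pairings $(z_s,W_s)$ and $(z_t,W_t)$ both vanish. Any pairing that couples $z_t$ with $W_s$ must also couple $z_s$ with $W_t$, which vanishes. So only the pairing $(z_s,z_t),(W_t,W_s)$ survives, giving
\begin{align*}
\E[z_s^\top\Sigma_z^{-1}z_t]\cdot\E[W_t^\top\Gamma_w^\top\Gamma_w W_s].
\end{align*}
In particular every term with $k\ge H$ vanishes, because $W_s$ and $W_t$ then share no noise samples. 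This is the step where I expect to have to be careful. The correlation between $x_t$ and the noise entries of $W_s$ (through $x_t=A^kx_s+\sum_j A^{k-1-j}(Bu_{s+j}+B_ww_{s+j})$) must be shown to contribute nothing after pairing. The argument above handles this because the partner $W_t$ is independent of everything else, but I would double-check the case in which $W_t$ overlaps $W_s$.

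Next, I would compute the two factors in steady state. Apply the Birkhoff--Khinchin theorem, or simply note that the transient from $x_0=0$ decays geometrically, so that $\frac{1}{N}\sum_{s}$ converges to the stationary value. For the noise factor, $\E[W_{s+k}W_s^\top]=S_k\otimes I_{\dx}$, where $S_k\in\R^{H\times H}$ is the shift matrix with ones on the $k$-th sub- or superdiagonal. Hence $\E[W_t^\top\Gamma_w^\top\Gamma_w W_s]=\trace(\Gamma_w(S_k\otimes I_{\dx})\Gamma_w^\top)$, with the appropriate orientation of the shift. For the regressor factor, $\Sigma_z=\mathrm{blkdiag}(\Sigma_x,I_{H\du})$ because $x_t$ is independent of $u_{t:t+H-1}$. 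Also $\E[x_t x_s^\top]=A^k\Sigma_x$, and the input blocks overlap in $H-k$ positions. This gives $\E[z_s^\top\Sigma_z^{-1}z_t]=\trace(A^k)+(H-k)\du$.

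Finally, I would sum over $k$ with $|k|<H$, using both orientations for $k\neq0$, and combine the scalar weights with the shift matrices into a single $H\times H$ weight matrix. The $\trace(A^{|i-j|})$ parts assemble exactly into $M_{MS}$. The diagonal $k=0$ term contributes $\dx+H\du$, consistent with $\trace(A^0)+H\du$. The remaining bookkeeping is to reconcile the off-diagonal input overlaps $(H-k)\du$ with the claimed $H\du I_H$. I would first recheck whether those input cross terms actually survive: the input entries of $z$ pair only among themselves, but the Isserlis reduction might instead cancel them. If the overlaps do survive, the honest limit has a Toeplitz input term and the statement as written needs correcting. Vitali's theorem, via uniform integrability of the normalized sum, which has bounded fourth moments by geometric ergodicity, justifies exchanging the limit and the expectation.
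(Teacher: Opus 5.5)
Your route is the paper's: expand the Frobenius norm into a double sum over $(s,t)$, factor each summand into a regressor moment times a noise moment, and sum over the lag $k=t-s$. Your factorization step is correct; only the pairing of $z_s$ with $z_t$ and of $W_t$ with $W_s$ survives. The gap is in the regressor moment. The value $\E[z_s^\top\Sigma_z^{-1}z_{s+k}]=\trace(A^k)+(H-k)\du$ is wrong for $k\neq 0$, and this error is what leads you to doubt a statement that is actually true. The cancellation you are looking for does not come from Isserlis' theorem. It comes from where the shared inputs sit inside $z_s$ and $z_{s+k}$.

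Since $\Sigma_z^{-1}=\mathrm{blkdiag}(\Sigma_x^{-1},I_{H\du})$,
\begin{align*}
z_s^\top\Sigma_z^{-1}z_{s+k}=x_s^\top\Sigma_x^{-1}x_{s+k}+\sum_{i=0}^{H-1}u_{s+i}^\top u_{s+k+i}.
\end{align*}
The two input windows do share $H-k$ samples. However, a shared sample $u_{s+k+i}$ occupies slot $k+i$ of $z_s$ and slot $i$ of $z_{s+k}$, and the quadratic form only pairs equal slots. Every summand above therefore pairs two independent zero-mean inputs. Equivalently, $\E[u_{s+k:s+k+H-1}u_{s:s+H-1}^\top]$ is a block shift matrix, whose trace is zero for $k\neq0$. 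The cross blocks $\E[x_{s+k}u_{s+j}^\top]=A^{k-1-j}B$ are likewise annihilated by the block-diagonal $\Sigma_z^{-1}$.

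So in steady state the regressor factor is just $\trace(\Sigma_x^{-1}A^k\Sigma_x)=\trace(A^{k})$. The input dimension enters only at $k=0$, through $\E[z_s^\top\Sigma_z^{-1}z_s]=\dx+H\du$. Summing $\trace(A^{|k|})$ against the shifted noise covariances over $0<|k|<H$, and adding the diagonal term $(\dx+H\du)\trace(\Gamma_w\Gamma_w^\top)$, yields exactly $\trace(\Gamma_w((M_{MS}+H\du I_H)\otimes I_{\dx})\Gamma_w^\top)$. No Toeplitz input term appears, and the statement needs no correction.

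As a minor point, no Vitali argument is needed in this lemma. The expectation is computed exactly for each $N$, and the limit is a Ces\`aro limit of deterministic terms that converge geometrically to their stationary values.
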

\begin{proof}
Expanding the Frobenius norm results in the trace
\begin{align*}
\sum_{t,k=1}^{N-H+1} \trace\paren{ \Gamma_w w_{t:t+H-1} z_t^\top \Sigma_z^{-1} z_k w_{k:k+H-1}^\top \Gamma_w^\top}. 
\end{align*}
Consider the expectation of the summand for any pair of indices $t$ and $k$. If $t = k$, we may use the fact that $w_{t:t+H-1}$ is independent from $z_t$ to conclude
\begin{align*}
    &\E\brac{\trace\paren{ \Gamma_w w_{t:t+H-1} z_t^\top \Sigma_z^{-1} z_k w_{k:k+H-1}^\top \Gamma_w^\top}} \\ 
    &= \trace\paren{\Gamma_w\E \brac{w_{t:t+H-1} \E \brac{z_t^\top \Sigma_z^{-1} z_t} w_{t:t+H-1}^\top} \Gamma_w^\top} \\
    &= \trace\paren{\Gamma_w (\dx + H \du)  \Gamma_w}.
\end{align*}
If instead $t = k +m$ for $m > 0$, we have the following term 
\begin{align*}
    \E\brac{\trace\paren{ \Gamma_w w_{k+m:k+m+H-1} z_{k+m}^\top \Sigma_z^{-1} z_k w_{k:k+H-1}^\top \Gamma_w^\top}}.
\end{align*}
Expanding $z_{k+m}$ provides
\[
    z_{k+m} = \bmat{A^m x_k + \sum_{\ell=0}^{m-1} A^{m-1-\ell} (B u_{k+\ell} + B_w w_{k+\ell}) \\ u_{k+m: k+m+H-1}},
\]
and thus the expectation simplifies to
\begin{align*}
     &\E\brac{\trace\paren{ \Gamma_w w_{t:t+H-1} z_t^\top \Sigma_z^{-1} z_k w_{k:k+H-1}^\top \Gamma_w^\top}}.  
\end{align*}
Taking the limit, 
\begin{align*}
    &\lim_{N\to\infty} \frac{1}{N}\E\brac{\trace\paren{ \Gamma_w w_{t:t+H-1} z_t^\top \Sigma_z^{-1} z_k w_{k:k+H-1}^\top \Gamma_w^\top}} \\
    &= \trace\paren{\Gamma_w \trace(A^m) (U^m \kron I_{\dx})\Gamma_w^\top}, 
\end{align*}
where $U$ is the upper shift matrix. The case $m<0$ can be shown to reduce to a similar quantity with $U$ replaced by the lower shift matrix $U^\top$. Summing over all indices $m$ and combining terms gives the final result.  

\end{proof}

\subsection{Proof of \Cref{prop: single-step well specified}} \label{appendix: single step well specified}

The following definitions are used in the proof of \Cref{prop: single-step well specified}. Define
\begin{align*} 
F
&\triangleq
\bmat{
I_{\dx} \\
& I_{\du} \\
A & B \\
& & I_{\du} \\
\vdots \\
A^{H-1} & A^{H-2}B & \dots & B \\
& & & & I_{\du}
},
\\[0.5em]
\Gamma
&\triangleq
\bmat{
I_{\dx} \\
A & I_{\dx} \\
\vdots \\
A^{H-1} & A^{H-2} & \dots & I_{\dx}
}.
\end{align*}

% Define \(\Gamma \in \R^{Hd_x \times Hd_x}\) blockwise by
% \[
% \Gamma_{ij} =
% \begin{cases}
% A^{i-j}, & i \ge j,\\
% 0, & i < j,
% \end{cases}
% \qquad i,j=1,\dots,H.
% \]

\begin{lemma} \label{lemma: ss single step var}
Let $\bmat{ \hat G_y & \hat G_u}$ be the first row of $\hat G_N^{SS}$ as defined in \eqref{eq: single-step rolled out G}. 
\begin{align*}
    \lim_{N \rightarrow \infty}N \var \left ( \VEC \left(\bmat{\hat G_y \!-\! A\! & \!\!\hat G_u \!-\! B}\right) \right) =  \Sigma_{x,u}^{-1} \otimes B_w B_w^\top 
\end{align*}
where $\Sigma_{x,u}$ is the stationary covariance of $\bmat{x_t^\top & u_t^\top}^\top$.
\end{lemma}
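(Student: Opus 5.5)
We plan to prove the lemma via the standard asymptotic analysis of ordinary least squares with martingale-difference noise. Write $\xi_t \triangleq \bmat{x_t^\top & u_t^\top}^\top$. In the well-specified case $y_t = x_t$, so $x_{t+1} = \bmat{A & B}\xi_t + B_w w_t$. The normal equations for \eqref{eq: single-step LS} give
\begin{align*}
\bmat{\hat G_y - A & \hat G_u - B} = \paren{\sum_{t=1}^{N-1} B_w w_t \xi_t^\top}\paren{\sum_{t=1}^{N-1} \xi_t\xi_t^\top}^{-1}.
\end{align*}
Vectorizing with $\VEC(XY) = (Y^\top\otimes I)\VEC(X)$, we get
\begin{align*}
\sqrt N\,\VEC\paren{\bmat{\hat G_y - A & \hat G_u - B}} = \paren{\paren{\tfrac1N\textstyle\sum_t \xi_t\xi_t^\top}^{-1}\otimes I_{\dx}} \tfrac{1}{\sqrt N}\textstyle\sum_t (\xi_t\otimes B_w w_t).
\end{align*}

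\textbf{Step 1 (denominator).} The sample covariance satisfies $\tfrac1N\sum_t\xi_t\xi_t^\top \probconv \Sigma_{x,u}$. This follows from the Birkhoff--Khinchin theorem, using stationarity and ergodicity (guaranteed by $\rho(A)<1$). The limit $\Sigma_{x,u}$ is positive definite by the persistence-of-excitation footnote.

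\textbf{Step 2 (numerator).} Let $\mathcal F_t = \sigma(x_{0:t},u_{0:t},w_{0:t-1})$. Then $w_t$ is independent of $\mathcal F_t$ and $\xi_t$ is $\mathcal F_t$-measurable, so the terms $\xi_t\otimes B_w w_t$ form a martingale difference sequence. Their conditional covariance is $\xi_t\xi_t^\top\otimes B_wB_w^\top$, and its time average converges to $\Sigma_{x,u}\otimes B_wB_w^\top$. Because the cross terms vanish in expectation, we have directly
\begin{align*}
\tfrac1N\E\norm{\textstyle\sum_t \xi_t\otimes B_w w_t}^2 \to \trace(\Sigma_{x,u})\trace(B_wB_w^\top),
\end{align*}
and the same computation holds for the full covariance matrix. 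By the martingale CLT, the normalized numerator converges in distribution to $\calN(0,\Sigma_{x,u}\otimes B_wB_w^\top)$.

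\textbf{Step 3 (combine).} By Slutsky's theorem, $\sqrt N\,\VEC(\cdot)$ converges in distribution to a Gaussian with covariance
\begin{align*}
(\Sigma_{x,u}^{-1}\otimes I)(\Sigma_{x,u}\otimes B_wB_w^\top)(\Sigma_{x,u}^{-1}\otimes I) = \Sigma_{x,u}^{-1}\otimes B_wB_w^\top.
\end{align*}

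\textbf{Main obstacle.} Convergence in distribution does not by itself give convergence of $N\var(\cdot)$. As in the proof of \Cref{prop: well specified multistep}, we upgrade to convergence of second moments via Vitali's theorem. This requires uniform integrability of $N\norm{\bmat{\hat G_y-A & \hat G_u-B}}^2$, i.e.\ control of moments of $\lambda_{\min}\paren{\tfrac1N\sum_t\xi_t\xi_t^\top}^{-1}$.

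We would obtain this from a small-ball / lower-isotropy bound on the Gaussian empirical covariance, e.g.\ the non-asymptotic results of \cite{ziemann2023tutorial}. These bounds show $\P\paren{\lambda_{\min} < c\,\lambda_{\min}(\Sigma_{x,u})}$ decays faster than any polynomial in $N$. That rate suffices for bounded higher moments once we exclude (or truncate on) the small-probability event where $N$ is small relative to the dimension. Combined with the fourth-moment bound on the martingale numerator, this gives uniform integrability and hence the stated limit of $N\var(\cdot)$. It also shows the bias is $o(1/\sqrt N)$, so the variance and the second moment have the same limit.
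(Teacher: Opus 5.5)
Your proposal is correct and follows the same route as the paper's proof: the normal equations, vectorization via the Kronecker identity, Birkhoff--Khinchin for the sample covariance $\tfrac{1}{N}\sum_t \xi_t\xi_t^\top \to \Sigma_{x,u}$, and Slutsky's theorem. You are more careful than the paper, which asserts that the random rescaled outer product converges directly and never justifies passing from a distributional limit to $N\,\mathrm{Var}(\cdot)$; your martingale-CLT and uniform-integrability steps fill this in, with one caveat: a tail bound on $\lambda_{\min}$ at a single fixed threshold does not by itself give uniform integrability, so you need anti-concentration at all scales (finite negative moments of $\lambda_{\min}$ once $N$ is large relative to $d_x+d_u$), which the Gaussian conditional small-ball argument does supply.
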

\begin{proof}

By the normal equations for the least squares estimator, 
\begin{align*} 
    &\VEC \left(\bmat{\hat G_y \!-\! A\! & \!\!\hat G_u \!-\! B}\right ) \\
    &= \left(\left(\sum_{t=1}^{N-1} \bmat{x_t \\ u_t} \bmat{x_t \\ u_t}^\top \right)^{-1} \otimes B_w\right )\left (\sum_{t=1}^{N-1} \left (\bmat{x_t \\ u_t} \otimes I_{d_x}\right)w_t \right ).
\end{align*}
A combination of the Birkoff-Khinchin theorem and Slutsky's theorem shows that 
\begin{align}\label{eq: proof II.2 one step var}
    &N\VEC \left(\bmat{\hat G_y \!-\! A\! & \!\!\hat G_u \!-\! B}\right)^\top \VEC \left(\bmat{\hat G_y \!-\! A\! & \!\!\hat G_u \!-\! B}\right) \notag\\
    &\rightarrow \Sigma_{x,u}^{-1} \otimes B_w B_w^\top .
\end{align}

\end{proof}
% We can show using (Ljung?) that 
% \begin{align*}
%     &\sqrt{N}\VEC (\bmat{\hat G_y - G_y & \hat G_u - G_u}  )  \\
%     &\quad \quad \quad \in As{\calN}(0, B_w^2(\Sigma_{Z,1}^{-1} \otimes I_{\dx})).
% \end{align*} 

\subsection{Proof of \Cref{prop: ss w/ ms loss well specified}}
\begin{lemma} \label{lemma: ss_ms var}
Let $\bmat{ \hat G_y & \hat G_u}$ be the first row of $\hat G_N^I$ as defined in \eqref{eq: multistep LS}. 
\begin{align*}
\lim_{N\to\infty}
N\,\var\!\left(\VEC\!\left(
\begin{bmatrix}
\hat G_y - A & \hat G_u - B
\end{bmatrix}
\right)\right)
= J^{-1}\Sigma J^{-1}.
\end{align*}
where $J$ is defined in \eqref{eq: hessian} and $\Sigma$ is defined in \eqref{eq: jacobian}. 
\end{lemma}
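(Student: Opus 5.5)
The intermediate estimator is an M-estimator (nonlinear least squares). I will prove the sandwich formula by the standard route: show consistency, take a first-order Taylor expansion of the score at the truth, and apply a martingale/mixing CLT to the normalized score.

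Write $\hat\theta_N = \VEC([\hat G_y\ \hat G_u])$, $\theta^\star = \VEC([A\ B])$ and $M_N(\theta) = \frac{1}{N}\sum_{t=1}^{N-H} m_t(\theta)$.

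\textbf{Step 1: consistency.} In the well-specified setting the multi-step residual at $\theta^\star$ is $\Gamma_w w_{t:t+H-1}$, which is independent of $z_t$. Hence $\bar\E\, m_t(\theta)$ equals $\norm{(G(\theta)-G^\star)\Sigma_z^{1/2}}_F^2 + \norm{\Gamma_w}_F^2$, where $G(\theta)\in S$ is the structured map. Since $\Sigma_z \succ 0$, and the first block row of $G(\theta)$ already determines $\theta$, the point $\theta^\star$ is the unique minimizer. Uniform convergence of $M_N$ on compacts follows from Birkhoff--Khinchin together with the polynomial dependence of $m_t$ on $\theta$, exactly as in the proof of \Cref{prop: single-step w/ multi-step loss misspecified} (Lemma~2.4 of \cite{NeweyMcFadden1999}). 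Restricting to a bounded parameter set if needed, this gives $\hat\theta_N \probconv \theta^\star$.

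\textbf{Step 2: expansion of the first-order condition.} Since $\hat\theta_N$ is an interior minimizer, $\nabla M_N(\hat\theta_N)=0$. A mean-value expansion gives
\begin{align*}
\sqrt N(\hat\theta_N-\theta^\star) = -\bigl(\nabla^2 M_N(\bar\theta_N)\bigr)^{-1}\tfrac{1}{\sqrt N}\textstyle\sum_t \nabla m_t(\theta^\star).
\end{align*}
By consistency and ergodicity, $\nabla^2 M_N(\bar\theta_N)\probconv J$. Invertibility of $J$ comes from the identifiability in Step 1: the Gauss--Newton part of $J$ is $2D^\top(\Sigma_z\otimes I)D$, where $D$ is the Jacobian of $\VEC G(\theta)$ at $\theta^\star$. $D$ has full column rank because its first block row is the identity. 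The residual-curvature terms have zero mean, since the residual is independent of $z_t$.

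\textbf{Step 3: CLT for the score.} At $\theta^\star$, $\nabla m_t(\theta^\star)$ is a linear combination of products of $w_{t:t+H-1}$ with $z_t$. It therefore has mean zero. It is also $(H-1)$-dependent given the past, meaning it is a finite-lag functional of a geometrically ergodic Gaussian process. A CLT for stationary mixing sequences (or a martingale decomposition with $H$ overlapping blocks) then yields $\frac{1}{\sqrt N}\sum_t \nabla m_t(\theta^\star) \to \calN(0,\Sigma)$, with $\Sigma$ the long-run covariance in \eqref{eq: jacobian}. Only finitely many lags contribute nonzero cross terms, and the sum over $\tau$ in \eqref{eq: jacobian} is read as the two-sided long-run covariance. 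Slutsky then gives $\sqrt N(\hat\theta_N-\theta^\star)\to \calN(0,J^{-1}\Sigma J^{-1})$.

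\textbf{Step 4: from distribution to variance.} The lemma concerns $N\var$, not the asymptotic distribution, so I need uniform integrability of $N\norm{\hat\theta_N-\theta^\star}^2$. This is the main obstacle, because the estimator is nonconvex and defined implicitly. My first attempt is to work on the high-probability event where $\hat\theta_N$ lies in a neighborhood of $\theta^\star$ and the empirical Hessian is well conditioned. On that event, Gaussian moment bounds for the score give bounded higher moments of $\sqrt N(\hat\theta_N-\theta^\star)$. Off the event, I would use the norm constraint $\norm{\cdot}_F<R$ (as in \Cref{prop: single-step w/ multi-step loss misspecified}) together with exponential concentration of the sample covariance (e.g.\ Theorem~5.3 of \cite{ziemann2023tutorial}) to show its contribution is $o(1/N)$. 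Vitali's convergence theorem then upgrades the distributional limit to the claimed limit of $N\var$.
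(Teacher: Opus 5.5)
Your proposal takes the same route as the paper's proof. Both expand the first-order condition of the multi-step loss around $(A,B)$ and read off the sandwich $J^{-1}\Sigma J^{-1}$ from the ergodic limit of the Hessian and the long-run covariance of the score. Your steps on consistency, invertibility of $J$, the CLT and uniform integrability supply rigor that the paper skips: it just drops higher-order terms and passes straight to the limit of the score outer product.

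One small refinement concerns Step 4. The lemma's estimator carries no norm constraint, so you don't need to import one. Instead, use the deterministic bound $(\hat\beta^{I}-\beta^\star)^\top\hat Q_N(\hat\beta^{I}-\beta^\star)\le 4(\hat\beta^{MS}-\beta^\star)^\top\hat Q_N(\hat\beta^{MS}-\beta^\star)$. It holds because $\hat\beta^{I}$ is a $\hat Q_N$-closest point of the structured set to $\hat\beta^{MS}$ and $\beta^\star$ lies in that set. This bound reduces both consistency and the off-event tail to standard bounds on the unconstrained least-squares error.
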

\label{proof: ss w/ ms loss well-specified}
\begin{proof}
Recall the definition of $m_t(\cdot, \cdot)$ from \eqref{eq: per timestep loss} and $\theta$ from \eqref{eq: theta}. 
% Define 
% \begin{align*}
%     &m_t(G_y, G_u) \\ &\coloneq \sum_{k=1}^H \lVert y_{t+k} - \bmat{G_y^k & G_y^{k-1}G_u & ... & G_u} \bmat{y_t \\ u_{t:t+k-1}} \rVert^2
% \end{align*}
It holds that
\begin{align*}
    &\hat G_y, \hat G_u \in  \argmin_{G_y, G_u} \frac{1}{N} \sum_{t=1}^{N-H} m_t(G_y, G_u).
\end{align*}
Since $\hat G_y, \hat G_u$ is a minimizer, 
\begin{align*}
    \sum_{t=1}^{N-H}\nabla_{\theta}  m_t(\hat G_y, \hat G_u) = 0.
\end{align*}
Taylor expanding: 
\begin{align*}
    0 &= \frac{1}{N}\sum_{t=1}^{N-H}\nabla_{\theta} m_t(\hat G_y, \hat G_u) \\& = \frac{1}{N}\sum_{t=1}^{N-H}\nabla_{\theta} m_t(A,  B) \\&+ \frac{1}{N}\sum_{t=1}^{N-H}\nabla^2_{\theta} m_t(A, B)  \cdot \VEC \Bigl (\bmat{\hat G_y - A & \hat G_u - B} \Bigl) \\& + \frac{1}{N}\calO \Bigl( \lVert \bmat{\hat G_y - A & \hat G_u - B} \rVert^2 \Bigl).
\end{align*}
Rearranging, 
\begin{align*}
    &\sqrt{N} \VEC \Bigl (\bmat{\hat G_y - A & \hat G_u - B} \Bigl) = -\Biggl(\frac{1}{N} \sum_{t=1}^{N-H}\nabla^2_{\theta} m_t(A,  B) \Biggl)^{-1} \\
    & \cdot \Biggl(\frac{1}{\sqrt{N}} \sum_{t=1}^{N-H}\nabla_{\theta} m_t(A,  B)  + \frac{1}{\sqrt{N}}\calO \Bigl( \lVert \bmat{\hat G_y - A & \hat G_u - B} \rVert^2 \Bigl)\Biggl).
\end{align*}

Taking an outer product, we see that
\begin{align*}
    &N \VEC \Bigl (\bmat{\hat G_y - A & \hat G_u - B} \Bigl)\VEC \Bigl (\bmat{\hat G_y - A & \hat G_u - B} \Bigl)^\top\\
    &\approx \!-\!\Biggl(\frac{1}{N} \sum_{t=1}^{N-H}\nabla^2_{\theta} m_t(A,  B) \Biggl)^{-1} \!\!\Biggl(\!\frac{1}{\sqrt{N}}\! \sum_{t=1}^{N-H}\nabla_
    {\theta} m_t(A,  B)\! \Biggl) \\
    &\times\!\Biggl(\!\frac{1}{\sqrt{N}} \!\sum_{t=1}^{N-H}\nabla_{\theta} \!m_t(A, B) \Biggl)^T \Biggl(\!\frac{1}{N} \!\sum_{t=1}^{N-H}\!\nabla^2_{\theta} m_t(A,  B) \Biggl)^{-T},
\end{align*}
where the $\approx$ indicates that we've dropped higher order terms.
By ergodicity, 
\begin{align*}
    &\lim_{N \rightarrow \infty} \frac{1}{N} \sum_{t=1}^{N-H}\nabla^2_{\theta} m_t(A,  B) 
    % &= \bar \E \Bigl[\nabla^2_{\VEC \Bigl (\bmat{G_y & G_u} \Bigl)} m_t(G_y,  G_u) \Bigl] \\
     = J,
\end{align*}
for $J$ from \eqref{eq: hessian}
and 
\begin{align*}
    & \lim_{N \rightarrow \infty} \Biggl(\frac{1}{\sqrt{N}} \sum_{t=1}^{N-H}\nabla_{\theta} m_t(A,  B) \Biggl) \\
    &\cdot\Biggl(\frac{1}{\sqrt{N}} \sum_{t=1}^{N-H}\nabla_{\theta} m_t(A, B) \Biggl)^T \\
    & = \lim_{N \rightarrow \infty} \frac{1}{N} \sum_{t=1}^{N-H} \sum_{\tau = t}^{N-t} \nabla_{\theta} m_t(A,  B) \nabla_{\theta} m_{t+\tau}(A,  B) \top \\
    % &= \sum_{\tau = -t}^\infty \bar \E \Bigl[ \nabla_{\VEC \Bigl (\bmat{G_y & G_u} \Bigl)} m_t(G_y,  G_u)\\
    % & \cdot \nabla_{\VEC \Bigl (\bmat{G_y & G_u} \Bigl)} m_{t+\tau}(G_y,  G_u) ^\top\Bigl] \\
    & = \Sigma,
\end{align*}
for $\Sigma$ in \eqref{eq: jacobian}.
Recall that $\bar \E$ is equivalent to taking an expectation under the stationary distribution and therefore $J$ and $\Sigma$ do not depend on $t$. All in all, 
\begin{align*}
    & \lim_{N \rightarrow \infty} N \mathsf{var}\paren{ \VEC \Bigl (\bmat{\hat G_y & \hat G_u } \Bigl)} = J^{-1} \Sigma J^{-1}.
\end{align*}

\end{proof}

\subsection{Proof of \Cref{prop: predictor ordering well specified}}
\label{proof: predictor ordering well specified}
\begin{lemma} \label{lemma: hansen}
    Let $V_{\hat \beta^{MS}}, V_{\hat \beta^{I}}$, and $Q$ be defined as in Section~II-A. It holds that 
\begin{align*}
    \trace(V_{\hat \beta^{I}} Q) = \trace(V_{\hat \beta^{MS}} Q) - \trace(V_{\hat \beta^{MS}}^{\frac{1}{2}}R(R^\top Q^{-1}R)^{-1}R^\top V_{\hat \beta^{MS}}^{\frac{1}{2}})
\end{align*}
for some $R \in \R^{H\dy(\dy + H\du) \times (H-1)\dy(\dy + H\du)}$ with full row rank.  
\end{lemma}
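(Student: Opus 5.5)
The plan is to treat the intermediate predictor as a restricted (minimum-distance) estimator obtained from the unrestricted multi-step least squares estimator $\hat\beta^{MS}$, and then invoke the standard efficient-restricted-estimator formula from \cite{Hansen2022Econometrics}.

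First I will localize the nonlinear constraint. The set $S$ in \eqref{eq: multistep loss} is a smooth manifold of dimension $\dy(\dy+\du)$, parameterized by $\theta=\VEC(\bmat{G_y & G_u})$. It contains $\beta^\star=\VEC(G^\star)$, since $G^\star$ is generated by $(A,B)$. Near $\beta^\star$, I will linearize $S$ by its tangent space $T=\mathrm{range}(D)$, where $D$ is the Jacobian of $\theta\mapsto\VEC(G(\theta))$ at $(A,B)$. This Jacobian is exactly the map $L$, $F$, $\Gamma$ appearing in the proof of \Cref{prop: ss w/ ms loss well specified}. I will then choose $R$ so that its columns span a complement, for instance $T^{\perp}$ measured in an appropriate metric. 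This makes the constraint locally equivalent to $R^\top(\beta-\beta^\star)=0$ up to second-order terms. The dimension count in the statement encodes the number of linear restrictions.

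Second, I will relate the two estimators. Because the sample objective equals $(\beta-\hat\beta^{MS})^\top(\hat\Sigma_z\otimes I)(\beta-\hat\beta^{MS})$ plus a constant, $\hat\beta^{I}$ minimizes this quadratic over $S$. Its weight converges to $Q=\Sigma_z\otimes I_{H\dx}$. Using the consistency of both estimators, which is a $\sqrt N$ argument as in \Cref{lemma: ss_ms var}, the linearized problem gives
\[
\sqrt N(\hat\beta^I-\beta^\star)=\bigl(I-Q^{-1}R(R^\top Q^{-1}R)^{-1}R^\top\bigr)\sqrt N(\hat\beta^{MS}-\beta^\star)+o_p(1).
\]
Write $P=Q^{-1}R(R^\top Q^{-1}R)^{-1}R^\top$. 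Then $V_{\hat\beta^I}=(I-P)V_{\hat\beta^{MS}}(I-P)^\top$. Expanding $\trace(V_{\hat\beta^I}Q)$ uses the identities $P^\top Q P=P^\top Q=QP$, that is, $QP=R(R^\top Q^{-1}R)^{-1}R^\top$. These give
\[
\trace(V_{\hat\beta^I}Q)=\trace(V_{\hat\beta^{MS}}Q)-\trace\bigl(V_{\hat\beta^{MS}}R(R^\top Q^{-1}R)^{-1}R^\top\bigr).
\]
By cyclicity, the last term is the expression in the statement. The identification of asymptotic variance with $\lim N\E$ follows by the same Vitali/uniform integrability argument used earlier.

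The main obstacle is the first step. I need to rigorously justify that the nonlinear constrained minimizer is asymptotically equivalent to the linearly constrained one, so that curvature of $S$ contributes only $O_p(1/N)$. I also need to check that the tangent space is well-defined: the parameterization should be an immersion at $(A,B)$, which holds because the first block row of $G(\theta)$ is $\theta$ itself. I would handle this with a Taylor expansion of the constraint, mirroring the argument in \Cref{lemma: ss_ms var}.
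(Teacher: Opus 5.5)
Your proposal is correct and is essentially the paper's argument. The paper likewise rewrites the intermediate objective as the $\hat Q_N$-weighted distance from $\beta$ to $\hat\beta^{MS}$, linearizes the constraint at $\beta^\star$ (via a mean-value expansion of an explicit constraint map $r$ and a Lagrange-multiplier first-order condition, which amounts to your tangent-space linearization), obtains $\hat\beta^{I}-\beta^\star\approx(I-P)(\hat\beta^{MS}-\beta^\star)$, and expands the trace just as you do.

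One small caution: the linearized constraint $R^\top(\beta-\beta^\star)=0$ cuts out $T$ only if $\ker R^\top=T$, i.e.\ the columns of $R$ must span the Euclidean orthogonal complement of $T$, not an arbitrary complement. Consequently $R$ has $H\dy(\dy+H\du)-\dy(\dy+\du)=(H-1)\dy^2+(H^2-1)\dy\du$ columns and full \emph{column} rank, as in the paper's own proof, rather than the column count and row-rank condition written in the statement, which you should not have taken at face value.
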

\begin{proof}

Let $\bmat{\hat G_y & \hat G_u}$ be the first block row of the intermediate predictor $\hat G_N^{I}.$ That is, 
\begin{align*}
    &\hat G_y, \hat G_u \in \arg \min_{G_y, G_u} L(G_y, G_u)
\end{align*}
where
\begin{align*}
    &L(G_y, G_u) \triangleq \\
    &\sum_{t=1}^{N-H} \norm{ \bmat{x_{t+1}\\ \vdots \\ x_{t+H}} - \bmat{G_y & G_u \\ G_y^2&  G_yG_u & G_u \\ &\vdots && \ddots \\ G_y^H & G_y^{H-1}G_u && \dots & G_u} z_t }^2.
\end{align*}
Then, 
\begin{align*}
    \hat \beta^{I} = \VEC \left (\bmat{\hat G_y & \hat G_u \\ \hat G_y^2&  \hat G_y \hat G_u & \hat G_u \\ &\vdots && \ddots \\ \hat G_y^H & \hat G_y^{H-1}\hat G_u && \dots & \hat G_u} \right).
\end{align*}
Rolling out and vectorizing, our system dynamics can be written
\begin{align*}
    x_{t+1:t+H} = (z_t^\top \kron I_{Hd_x})\VEC(G^\star) + \Gamma_w w_{t:t+H-1}.
\end{align*}  Letting $\beta = \VEC \left( \bmat{G_y & G_u \\ G_y^2&  G_yG_u & G_u \\ &\vdots && \ddots \\ G_y^H & G_y^{H-1}G_u && \dots & G_u}\right)$, we see that 
\begin{align*}
    &L(G_y, G_u) = \sum_{t=1}^{N-H} \lVert (z_t^T \kron I_{Hd_x}) \hat \beta^{MS} + \hat e_t - (z_t^T \kron I_{Hd_x}) \beta \rVert^2
\end{align*}
where 
\begin{align*}
    \hat e_t = \bmat{x_{t+1}\\ \vdots \\ x_{t+H}} - (z_t^T \kron I_{Hd_x}) \hat \beta^{MS}
\end{align*}
are the residuals from the unconstrained multi-step predictor. The above sum can be simplified to 
\begin{align*}
    \sum_{t=1}^{N-H} (\hat \beta^{MS} - \beta)^\top (z_tz_t^\top \kron I_{Hd_x})(\hat \beta^{MS} - \beta) + \hat e_t^\top \hat e_t. 
\end{align*}
Let 
$
    \hat Q_N = \sum_{t=1}^{N-H}z_tz_t^\top \kron I_{Hd_x}
$
and note that $\frac{1}{N} \hat Q_N \rightarrow Q .$
Then, the loss becomes
\begin{align*}
    (\hat \beta^{MS} - \beta)^\top \hat Q_N (\hat \beta^{MS} - \beta) + \sum_{t=1}^{N-H} \hat e_t^\top \hat e_t
\end{align*}
and the constrained multi-step predictor $\hat \beta^{I}$ can be written equivalently as the solution of the problem 
\begin{align} \label{cls}
    &\min_\beta (\hat \beta^{MS} - \beta)^\top \hat Q_N (\hat \beta^{MS} - \beta) \\
    & \text{s.t.} \quad r(\beta) = 0
\end{align}
where the function $r: \R^{H\dy(\dy + H\du)} \rightarrow \R^{(H-1)\dy^2 + (H^2-1)\dy\du}$ is defined so that $r(\beta) = 0$ encodes the constraint that if $\beta = \VEC(G)$ for $G \in \R^{Hd_y \times d_y + Hd_u} $ then $G$ has the form 
\begin{align*}
    \bmat{G_y & G_u \\ G_y^2&  G_yG_u & G_u \\ &\vdots && \ddots \\ G_y^H & G_y^{H-1}G_u && \dots & G_u}.
\end{align*}
for some $G_y \in \R^{\dx \times \dx}, G_u \in \R^{\dx \times \du}$. Specifically, if $G_y = G_{0:\dy, 0:\dy}$ and $G_u = G_{0:\dy, \dy:\dy + \du}$ then $r$ is the vectorized form of the map 
\begin{align*}
    G \rightarrow G - \bmat{G_y & G_u \\ G_y^2&  G_yG_u & G_u \\ &\vdots && \ddots \\ G_y^H & G_y^{H-1}G_u && \dots & G_u}
\end{align*}
where we disregard the elements in the top left $\dy \times \dy+\du$ block so that the Jacobian of $r$ has full row-rank. 
Denote the components of $r$ by
\begin{align*}
    r(\beta) = \bmat{r_1(\beta) \\ \vdots \\ r_q(\beta)}. 
\end{align*}
By the mean value theorem, for each component $r_j$ there exists a point $\beta_j$ on the line segment between $\hat \beta^{I}$ and $\beta^\star$ such that 
\begin{align*}
    r_j(\hat \beta^{I}) = r_j(\beta^\star) + \Bigl ( \frac{\partial r_j}{\partial \beta}(\beta_j)\Bigl)^\top (\hat \beta^{I} - \beta^\star). 
\end{align*}
Let 
\begin{align*}
    \hat R_N^{\star^\top} = \bmat{\frac{\partial r_1}{\partial \beta}(\beta_1)^\top \\
    \vdots \\
     \frac{\partial r_q}{\partial \beta}(\beta_q)^\top}
\end{align*} 
so that
$
    r(\hat \beta^{I}) = r(\beta^\star) + \hat R_N^{\star^\top}(\hat \beta^{I} - \beta^\star).
$
Since $r(\hat \beta^{I}) = r(\beta^\star) = 0$, this simplifies to the linearized constraint 
\begin{align*}
    \hat R_N^{\star^\top} (\hat \beta^{I} - \beta^\star) = 0
\end{align*}
Note that $ \hat R_N^\star \rightarrow R \coloneq \frac{\partial r(\beta^\star)}{\partial \beta}^\top.$
Equation \eqref{cls} has first order condition 
\begin{align*}
   \hat Q_N (\hat \beta^{MS} - \hat \beta^{I}) = \hat R \lambda
\end{align*}
where $\hat R = \frac{\partial r}{\partial \beta}(\hat \beta^{I})^\top \rightarrow R$. Multiplying by $\hat R_N^{\star ^\top} \hat Q_N^{-1}$ on the left and solving for $\lambda$ gives 
\begin{align*}
    \lambda = (\hat R_N^{\star^\top} \hat Q_N^{-1} \hat R)^{-1} \hat R_N^{\star^\top} (\hat \beta^{MS} - \hat \beta^{I}).
\end{align*}
Substituting this expression into the F.O.C. and rearranging, we can show that 
\begin{align} \label{pred_comp}
    \hat \beta^{I} - \beta^\star = (I - \hat Q_N^{-1}\hat R(\hat R_N^{\star^\top} \hat Q_N^{-1} \hat R)^{-1} \hat R_N^{\star^\top})(\hat \beta^{MS} - \beta^\star).
\end{align}
Let $V_{\hat \beta^{I}}$ denote the asymptotic variance of the constrained multi-step predictor 
$
    V_{\hat \beta^{I}} \coloneq \lim_{N \rightarrow \infty} N \text{Var}(\hat \beta^{I} - \beta).
$
From \eqref{pred_comp}, it follows that 
\begin{align*}
    V_{\hat \beta^{I}} &= V_{\hat \beta^{MS}} - Q^{-1}R(R^\top Q^{-1}R)^{-1} R^\top V_{\hat \beta^{MS}} \\
    &- V_{\hat \beta^{MS}} R (R^\top Q^{-1}R)^{-1}R^\top Q^{-1} \\
    &+ Q^{-1}R(R^\top Q^{-1}R)^{-1} R^\top V_{\hat \beta^{MS}}R (R^\top Q^{-1}R)^{-1}R^\top Q^{-1}.
\end{align*}
It follows directly that 
\begin{align*}
    \trace(V_{\hat \beta^{I}} Q) = \trace(V_{\hat \beta^{MS}} Q) - \trace(V_{\hat \beta^{MS}}^{\frac{1}{2}}R(R^\top Q^{-1}R)^{-1}R^\top V_{\hat \beta^{MS}}^{\frac{1}{2}}).
\end{align*}

\end{proof}

\section{ \cref{sec: misspecified} Additional Results}
\label{appendix: misspecified}

\subsection{Spectral radius of $C A \Sigma_{\hat x} C^\top \Sigma_y^{-1}$}
\begin{lemma}
\label{lemma: spectral radius of CK+KB}
    Assume $\rho(A) < 1$. Then, 
    \begin{align*}
        \rho(C A \Sigma_{\hat x} C^\top \Sigma_y^{-1}) \leq 1. 
    \end{align*}
\end{lemma}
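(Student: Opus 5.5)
The plan is to identify $C A \Sigma_{\hat x} C^\top \Sigma_y^{-1}$ with the population one-step least-squares coefficient of the observation process. Then I will show it is similar to a cross-correlation matrix, whose operator norm is at most one by Cauchy--Schwarz.

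\textbf{Step 1 (identification).} In the proof of \Cref{prop: single-step misspecified}, the normal equations give $\hat G_y = \bigl(\sum_t y_{t+1}y_t^\top\bigr)\bigl(\sum_t y_ty_t^\top\bigr)^{-1}$. The same proof shows $\hat G_y \probconv C A \Sigma_{\hat x} C^\top \Sigma_y^{-1}$. On the other hand, by the Birkhoff--Khinchin theorem and Slutsky's theorem, $\hat G_y \probconv \Sigma_{1}\Sigma_y^{-1}$, where $\Sigma_1 \triangleq \bar\E[y_{t+1}y_t^\top]$ is the stationary lag-one covariance. Since both limits are limits in probability of the same sequence, they coincide:
\[
M \triangleq C A \Sigma_{\hat x} C^\top \Sigma_y^{-1} = \Sigma_1 \Sigma_y^{-1}.
\]
As a sanity check, I would also compute $\Sigma_1$ directly. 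Because $w_t$ and $v_{t+1}$ are independent of $y_t$, we get $\bar\E[y_{t+1}y_t^\top] = CA\,\bar\E[x_t y_t^\top]$. If this does not match the $\Sigma_{\hat x}$ expression letter for letter, I will simply define $M$ through the limit of $\hat G_y$, which is all the lemma is used for. Reconciling the notation in this way is the step I expect to need the most care; the rest is mechanical.

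\textbf{Step 2 (similarity transform).} By stationarity, $y_t$ and $y_{t+1}$ both have covariance $\Sigma_y \succ 0$ (positive definite by the footnote, since $D_vD_v^\top \succ 0$). Define the whitened observations $\tilde y_t = \Sigma_y^{-1/2} y_t$. Then
\[
\Sigma_y^{-1/2} M \Sigma_y^{1/2} = \Sigma_y^{-1/2}\Sigma_1\Sigma_y^{-1/2} = \bar\E[\tilde y_{t+1}\tilde y_t^\top] \triangleq P .
\]
Similar matrices share their spectrum, so $\rho(M) = \rho(P) \le \norm{P}_2$.

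\textbf{Step 3 (Cauchy--Schwarz).} Take any unit vectors $a,b$. Since $\bar\E[(a^\top\tilde y_{t+1})^2] = a^\top a = 1$ and likewise for $b$,
\[
\abs{a^\top P b} = \abs{\bar\E[(a^\top \tilde y_{t+1})(b^\top \tilde y_t)]} \le \sqrt{\bar\E[(a^\top\tilde y_{t+1})^2]\,\bar\E[(b^\top\tilde y_t)^2]} = 1 .
\]
Hence $\norm{P}_2\le 1$, and therefore $\rho(C A \Sigma_{\hat x} C^\top \Sigma_y^{-1})\le 1$.

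The hypothesis $\rho(A)<1$ enters only through the existence of the stationary, ergodic distribution. That distribution is what makes $\Sigma_y$ and $\Sigma_1$ well defined and justifies the limit in Step 1.
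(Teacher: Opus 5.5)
Your route is the paper's own: write the matrix as a population one-step regression coefficient $\Sigma_1\Sigma_y^{-1}$, conjugate by $\Sigma_y^{1/2}$, and bound the whitened lag-one cross-covariance by $1$ in operator norm. Your Cauchy--Schwarz step is equivalent to, and slightly cleaner than, the paper's Schur-complement step, which treats the lag-one covariance as if it were symmetric.

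\textbf{The gap is Step~1.} The identity $CA\Sigma_{\hat x}C^\top = \Sigma_1$ is false, and your own sanity check exposes it. From the innovations form, $\hat x_{t+1} = A\hat x_t + KD_e e_t$ with $e_t$ independent of $\hat x_t$. Hence $\Sigma_1 = C(A\Sigma_{\hat x}C^\top + KD_eD_e^\top)$. Since $KD_eD_e^\top = ASC^\top$, this equals $CA\Sigma_xC^\top$, exactly as your direct computation gives. It differs from $CA\Sigma_{\hat x}C^\top$ by $CKD_eD_e^\top$, which is nonzero whenever $CK\neq 0$.

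The error enters through the convergence claim you import from the proof of \Cref{prop: single-step misspecified}. The expression there converges to $CK + C(A-KC)\Sigma_{\hat x}C^\top\Sigma_y^{-1} = CA\Sigma_xC^\top\Sigma_y^{-1}$, not to $CA\Sigma_{\hat x}C^\top\Sigma_y^{-1}$. The stray mention of $\Sigma_x$ in that proposition is a symptom of this. The paper's proof of this lemma opens with the same incorrect identity. Consequently, your fallback of redefining $M$ as the limit of $\hat G_y$ proves $\rho(CA\Sigma_xC^\top\Sigma_y^{-1})\le 1$. That is the bound actually needed downstream, but it is not the inequality as written.

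\textbf{The fix.} The literal statement is still true, and your Step~3 proves it if you whiten $C\hat x_t$ instead of $y_t$. The same recursion gives $\bar\E[C\hat x_{t+1}\hat x_t^\top C^\top] = CA\Sigma_{\hat x}C^\top$. Also, $y_t = C\hat x_t + D_e e_t$ gives $C\Sigma_{\hat x}C^\top \preceq C\Sigma_{\hat x}C^\top + D_eD_e^\top = \Sigma_y$. So for unit vectors $a,b$, with $\alpha = \Sigma_y^{-1/2}a$ and $\beta = \Sigma_y^{-1/2}b$,
\begin{align*}
&\abs{a^\top \Sigma_y^{-1/2} C A \Sigma_{\hat x} C^\top \Sigma_y^{-1/2} b}
= \abs{\bar\E\bigl[(\alpha^\top C\hat x_{t+1})(\beta^\top C\hat x_t)\bigr]} \\
&\le \sqrt{\alpha^\top C\Sigma_{\hat x}C^\top\alpha}\,\sqrt{\beta^\top C\Sigma_{\hat x}C^\top\beta}
\le \sqrt{\alpha^\top\Sigma_y\alpha}\,\sqrt{\beta^\top\Sigma_y\beta} = 1 .
\end{align*}
Hence $\norm{\Sigma_y^{-1/2}CA\Sigma_{\hat x}C^\top\Sigma_y^{-1/2}}_2 \le 1$, and your Step~2 finishes the argument. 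Your original argument covers $CA\Sigma_xC^\top\Sigma_y^{-1}$, so together the two settle both readings of the lemma.
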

\begin{proof}[Proof of \Cref{lemma: spectral radius of CK+KB}]
Note that $ C A \Sigma_{\hat x} C^\top \Sigma_y^{-1} = \bar \E[y_{t+1} y_t^\top] \bar \E[y_t y_t^\top]^{-1}$. From the stationarity of the process,
\begin{align*}
    \bar \E \bmat{y_t \\ y_{t+1}} \bmat{y_t \\ y_{t+1}}^\top = \bmat{\Sigma_y & \Sigma_{y+} \\ \Sigma_{y+} & \Sigma_y}. 
\end{align*}
By a Schur complement,  $\Sigma_y - \Sigma_{y+} \Sigma_y^{-1} \Sigma_{y+} \succeq 0,$ or $(\Sigma_y^{-1/2} \Sigma_{y+} \Sigma_{Y}^{-1/2})^2 \preceq I$. Then $\norm{\Sigma_y^{-1/2} \Sigma_{y+} \Sigma_{Y}^{-1/2}} \leq 1$. For $i=1,\dots, \dy$, it holds that $\abs{\lambda_i(\Sigma_{y+} \Sigma_{Y}^{-1})} =\abs{\lambda_i(\Sigma_y^{-1/2} \Sigma_{y+} \Sigma_{Y}^{-1/2})} \leq 1$, and thus $\rho(\Sigma_{y+} \Sigma_y^{-1}) \leq 1$. 
\end{proof}

\section{Proofs of \cref{sec: control} Results}
\subsection{Proof of \Cref{prop: LQR decay well-specified}}
\begin{lemma} \label{lemma: prop iv.1}
Let $\tilde J$, $K$, and $\calE(\cdot)$ be as defined in \cref{sec: control}. There exist constants $c \in \R$ and $\rho \in (0,1)$ such that 
    \begin{align*}
    &\lim_{N\to\infty} N\,\E\!\left[\abs{\tilde J\!\left(K(\hat G_N)\right)-\tilde J\!\left(K(G^\star)\right)} \;\middle|\; \calE(\hat G_N)\right] \\
    &=\trace \Bigl(\nabla_{\VEC (G)} K(G^\star)^T\nabla^2_{\VEC (K)} \tilde J(K^\star) \nabla_{\VEC (G)} K(G^\star) \Sigma_{\hat G}\Bigl) \\
    &+ c\rho^H
\end{align*}
where $\Sigma_{\hat G}$ is the asymptotic variance 
\begin{align*}
    \lim_{N \rightarrow \infty} N \var(\VEC (\hat G_N - G^\star)),
\end{align*}
and $K^\star = \argmin_K J(K)$.
\end{lemma}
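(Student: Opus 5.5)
Our approach is a second-order (delta-method) expansion of the composite map $G \mapsto \tilde J(K(G))$ about $G^\star$, carried out on the event $\calE(\hat G_N)$. On this event $K(\hat G_N)$ lies in the neighborhood where $\rho(A+BK)\le 1-\epsilon/2$. There $J(K)$ is finite and given by a Lyapunov-equation formula, hence real analytic in $K$. The clipped cost $\tilde J$ is a smooth monotone transform of $J$, so it is smooth as well. The gain $K_H(G)$ is the solution of a finite-horizon least-squares problem, $K_H(G) = -(G_u^\top G_u + I)^{-1}G_u^\top G_x$, where $G_x,G_u$ denote the state and input column blocks of $G$. It is therefore a rational, smooth function of $G$ near $G^\star$, and so is its first block row $K(G)$. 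Composing these, we Taylor expand
\begin{align*}
\tilde J(K(\hat G_N)) - \tilde J(K(G^\star)) = g^\top \VEC(\hat G_N - G^\star) + \tfrac12 \VEC(\hat G_N - G^\star)^\top \mathcal H\, \VEC(\hat G_N - G^\star) + \mathcal O\bigl(\|\hat G_N - G^\star\|^3\bigr),
\end{align*}
where $g = \nabla_{\VEC(G)}K(G^\star)^\top \nabla_{\VEC(K)}\tilde J(K(G^\star))$ and $\mathcal H$ is the Hessian of the composite map.

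The key structural point is the size of the linear term $g$. If $K(G^\star)$ coincided with the infinite-horizon optimum $K^\star = \argmin_K J(K)$, then $\nabla \tilde J(K(G^\star)) = 0$, the linear term would vanish, and the Hessian of the composite would reduce to $\nabla K^\top \nabla^2\tilde J(K^\star)\nabla K$. For finite $H$, however, $K(G^\star)$ is the receding-horizon gain, which is the first block of a finite-horizon Riccati recursion without terminal cost. Standard exponential convergence of the Riccati recursion for a stabilizable and detectable pair gives $\|K(G^\star) - K^\star\| \le c_1\rho^H$ for some $\rho\in(0,1)$. Consequently $\|\nabla\tilde J(K(G^\star))\| \le c_2\rho^H$, and likewise $\|\nabla^2\tilde J(K(G^\star)) - \nabla^2\tilde J(K^\star)\| \le c_3\rho^H$.

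With this bound in hand, we multiply the expansion by $N$, take expectations conditional on $\calE(\hat G_N)$, and pass to the limit. The quadratic term converges to $\trace(\nabla K^\top \nabla^2\tilde J(K^\star)\nabla K\,\Sigma_{\hat G})$ up to an $\mathcal O(\rho^H)$ correction. This uses the asymptotic variance $\Sigma_{\hat G}$ established for each predictor (\Cref{lemma: ss single step var}, \Cref{lemma: ss_ms var}, and the multi-step computation), together with uniform integrability of $N\|\hat G_N - G^\star\|^2$ to justify exchanging limit and expectation via Vitali's theorem, as in earlier proofs. The cubic remainder, multiplied by $N$, is $\mathcal O_p(N^{-1/2})$ and vanishes. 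The term with the curvature of $K(\cdot)$ contracted against $\nabla\tilde J(K(G^\star))$ is likewise $\mathcal O(\rho^H)$.

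The main obstacle is the linear term under the factor $N$. Since $N\cdot g^\top \E[\hat G_N - G^\star \mid \calE]$ is governed by the $\mathcal O(1/N)$ bias of $\hat G_N$ (the rolled-out estimators are nonlinear in the least-squares estimate), it contributes an $\mathcal O(1)\cdot\|g\| = \mathcal O(\rho^H)$ quantity. This is harmless for an identity up to $c\rho^H$. The absolute value inside the expectation is more delicate, since $\E|X| \ne |\E X|$ when a linear fluctuation of order $N^{-1/2}$ is present. Multiplied by $N$, that fluctuation gives a term $\sqrt N\,\E|g^\top\sqrt N\,\VEC(\hat G_N-G^\star)| \sim \sqrt N\rho^H$, which does not stay bounded in $N$. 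The first thing we would try is to absorb this by interpreting the constant as $c\rho^H$ for $H$ large relative to the data, that is, to argue that $g$ is exactly zero in the regime considered. Failing that, we would drop the absolute value, noting that the quadratic term is nonnegative, and treat the statement as holding for the signed difference, with the $\rho^H$ term absorbing the remaining first-order contributions.
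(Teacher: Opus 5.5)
Your route is the paper's route. Both expand $G\mapsto\tilde J(K(G))$ to second order about $G^\star$, and both control the discrepancy between evaluating at $K(G^\star)$ and at $K^\star$ by exponential Riccati convergence, so that $\nabla_{\VEC(K)}\tilde J(K(G^\star))$ and $\nabla^2_{\VEC(K)}\tilde J(K(G^\star))-\nabla^2_{\VEC(K)}\tilde J(K^\star)$ are $\mathcal O(\rho^H)$.

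The obstacle you flag is genuine, and the paper does not overcome it either. Its expansion simply omits the first-order term $\nabla_{\VEC(K)}\tilde J(K(G^\star))^\top\nabla_{\VEC(G)}K(G^\star)\VEC(\hat G_N-G^\star)$, and it then replaces $\E|\cdot|$ by $|\E[\cdot]|$. The horizon-$H$ gain, with terminal weight $I$ rather than the Riccati solution, generally differs from $K^\star$, so $g\neq 0$ generically. Then, exactly as you compute, $N\,\E|g^\top\VEC(\hat G_N-G^\star)|\approx\sqrt N\sqrt{2/\pi}\,(g^\top\Sigma_{\hat G}g)^{1/2}\to\infty$ whenever $g^\top\Sigma_{\hat G}g>0$, while the quadratic part contributes only $\mathcal O(1)$. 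So the identity with the absolute value inside the expectation cannot be proved for fixed $H$.

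Your first remedy therefore fails: $g=0$ is false, and ``$H$ large relative to the data'' is not available because $H$ is fixed as $N\to\infty$. Your fallback, the signed difference, is what both your argument and the paper's actually establish. There the linear term contributes $N g^\top\E[\VEC(\hat G_N-G^\star)\mid\calE(\hat G_N)]=\mathcal O(\rho^H)$ through the $\mathcal O(1/N)$ bias, as you say.

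Two smaller points. First, your own expansion carries $\tfrac12$ in front of the quadratic form. The signed limit is therefore $\tfrac12\trace\bigl(\nabla_{\VEC(G)}K(G^\star)^\top\nabla^2_{\VEC(K)}\tilde J(K^\star)\nabla_{\VEC(G)}K(G^\star)\Sigma_{\hat G}\bigr)+\mathcal O(\rho^H)$, not the trace itself. The paper's derivation also has this $\tfrac12$, which disappears in the statement. It cannot be absorbed into $c\rho^H$ with $H$-independent constants, though it is harmless for the comparison in \Cref{prop: LQR decay well-specified}.

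Second, for $c$ and $\rho$ to be independent of $H$, several quantities must be bounded uniformly in $H$, even though the dimension of $G$ grows with $H$:
\begin{itemize}
\item $\|\nabla_{\VEC(G)}K(G^\star)\|$ and $\|\Sigma_{\hat G}\|$;
\item the limit of $N$ times the bias;
\item the limit of $N\,\E\bigl[\nabla^2_{\VEC(G)}K(G^\star)[\VEC(\hat G_N-G^\star),\VEC(\hat G_N-G^\star)]\bigr]$.
\end{itemize}
The paper asserts these bounds without proof; you should state them explicitly as assumptions or prove them.
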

\begin{proof}

Linearizing $\tilde J(K(G))$ around $G^\star$ gives
\begin{align} \label{eq: decay breakdown}
    &\lim_{N\to\infty} N\,\E\!\left[\abs{\tilde J\!\left(K(\hat G_N)\right)-\tilde J\!\left(K(G^\star)\right)} \;\middle|\; \calE(\hat G_N)\right] \notag \\
    &=\lim_{N \rightarrow \infty} N \E \Bigl [\Bigl |\frac{1}{2}\nabla_{\VEC (K)} \tilde J(K(G^\star))^T \notag\\
    & \cdot \nabla^2_{\VEC (G)} K(G^\star)[\VEC (\hat G_N - G^\star), \VEC (\hat G_N - G^\star) ] \notag\\
    &+ \frac{1}{2}\VEC (\hat G_N - G^\star)^T \nabla_{\VEC (G)} K(G^\star)^T\nabla^2_{\VEC (K)} \tilde J(K(G^\star)) \notag\\
    &\cdot \nabla_{\VEC (G)} K(G^\star)\VEC (\hat G_N - G^\star) \Bigl| \mid \calE(\hat G_N)\Bigl ] \notag\\
    &= \Bigl |\frac{1}{2}\nabla_{\VEC (K)} \tilde J(K(G^\star))^T V \notag\\
    &+ \frac{1}{2} \trace \Bigl(\nabla_{\VEC (G)} K(G^\star)^T\nabla^2_{\VEC (K)} \tilde J(K(G^\star)) \nabla_{\VEC (G)} K(G^\star) \Sigma_{\hat G}\Bigl) \Bigl|
\end{align}
where 
\begin{align*}
V
&=
\lim_{N\to\infty}
N \E\Bigl[
\nabla^2_{\VEC(G)} K(G^\star)
\bigl[
\VEC(\hat G_N - G^\star),
\VEC(\hat G_N - G^\star)
\bigr]
\\
&\hspace{6em}
\Bigm|\;
\calE(\hat G_N)
\Bigr].
\end{align*}

Let $K^\star = \argmin_K J(K)$. Denote 
\begin{align*}
    \Delta \tilde J \triangleq \nabla_{\VEC (K)}\tilde J(K(G^\star)) - \nabla_{\VEC (K)}\tilde J(K^\star)
\end{align*} and 
\begin{align*}
    \Delta^2 \tilde J \triangleq \nabla^2_{\VEC (K)}\tilde J(K(G^\star)) - \nabla^2_{\VEC (K)}\tilde J(K^\star).
\end{align*} The expression \eqref{eq: decay breakdown} is equal to 
\begin{align*}
    & \Bigl |\frac{1}{2}\paren{\Delta \tilde J}^\top V  \frac{1}{2} \trace \Bigl(\nabla_{\VEC (G)} K(G^\star)^\top \paren{\Delta^2 \tilde J} \nabla_{\VEC (G)} K(G^\star) \Sigma_{\hat G}\Bigl) \\
    &+ \frac{1}{2} \trace \Bigl(\nabla_{\VEC (G)} K(G^\star)^\top\nabla^2_{\VEC (K)} \tilde J(K^\star) \nabla_{\VEC (G)} K(G^\star) \Sigma_{\hat G}\Bigl)\Bigl |
\end{align*}
Let 
\begin{align*}
    &\tilde E \triangleq \frac{1}{2}\paren{\Delta \tilde J}^\top V \\
    &+ \frac{1}{2} \trace \Bigl(\nabla_{\VEC (G)} K(G^\star)^\top\paren{\Delta^2 \tilde J} \nabla_{\VEC (G)} K(G^\star) \Sigma_{\hat G}\Bigl).
\end{align*}
By Cauchy-Schwarz and von Neumann's trace inequality,  
\begin{align*}
    &\abs{ \tilde E } \leq \frac{1}{2}\norm{\Delta \tilde J}_2 \norm{V}_2 \\
    &+ \frac{1}{2} \norm{\nabla_{\VEC (G)} K(G^\star)}_2^2 \norm{\Delta^2 \tilde J}_\star\norm{\Sigma_{\hat G}}_2.
\end{align*}
By norm equivalence, there then exists a constant $C$ independent of $H$ so that 
\begin{align*}
    &\abs{ \tilde E } \leq \frac{1}{2}\norm{\Delta \tilde J}_2 \norm{V}_2 \\
    &+ C \norm{\nabla_{\VEC (G)} K(G^\star)}_2^2 \norm{\Delta^2 \tilde J}_2\norm{\Sigma_{\hat G}}_2.
\end{align*}
Note that $K(G^\star)$ converges to $K^\star$ as $H \rightarrow \infty$. By convergence of the Ricatti equation, $\norm{\Delta \tilde J}_2$ and $\norm{\Delta^2 \tilde J}_2$ decay as $\rho^H$ for some $\rho \in (0,1)$. Since each of the terms $\norm{V}_2, \norm{\nabla_{\VEC (G)} K(G^\star)}_2,$ and $\norm{\Sigma_{\hat G}}_2$ can be uniformly bounded in $H$, it follows that $\abs{ \tilde E }$ decays with $H$ as $\rho^H$.  

\end{proof}

\end{document}